\newcommand\largesection{%
	\titleformat{\section}
	{\normalfont\Large\bfseries\filcenter}{\thesection}{1em}{}
}
\newcommand\largesubsection{%
	\titleformat{\subsection}
	{\normalfont\large\bfseries\filcenter}{\thesubsection}{1em}{}
}
\newcommand\largesubsubsection{%
	\titleformat{\subsubsection}
	{\normalfont\normalfont\bfseries\filcenter}{\thesubsubsection}{1em}{}
}
\crefname{equation}{Eq.}{Eqs.}
\newtheorem{theorem}{Theorem}
\newtheorem{lemma}[theorem]{Lemma}
\newcommand{\myleft}{\mathopen{}\mathclose\bgroup\left}
\newcommand{\myright}{\aftergroup\egroup\right}
\DeclareMathOperator{\Tr}{Tr}
\DeclareMathOperator{\supp}{supp}
\DeclareMathOperator{\Ad}{Ad}% Adjoint rep.: Ad_U[X] = U X U^\dagger
\DeclareMathOperator\tr{Tr}
\newcommand{\RB}{\mathrm{RB}}
\newcommand{\FT}{\mathrm{FT}}
\newcommand{\hatM}{\widehat{\mathcal{M}}}
\newcommand{\CC}{\mathbb{C}}
\newcommand{\KK}{\mathbb{K}}% for \RR or \CC
\newcommand{\EE}{\mathbb{E}}
\newcommand{\mc}[1]{\mathcal{#1}}
\newcommand{\GG}{\mc{G}}
\newcommand{\1}{\mathds{1}}
\newcommand{\norm}[1]{\left\Vert #1 \right\Vert} %norm with variable height
\newcommand{\ket}[1]{\left.\left|{#1}\right.\right\rangle}
\newcommand{\bra}[1]{\left.\left\langle{#1}\right.\right|}
\newcommand{\braket}[2]{\left\langle #1 \middle| #2 \right\rangle}
\newcommand{\ketbra}[2]{\ket{#1} \!\! \bra{#2}}
\newcommand{\sandwich}[3]
  {\left\langle  #1 \right| #2 \left| #3 \right\rangle}
\newcommand{\kett}[1]{|{#1}{\rangle\!\rangle}}
\newcommand{\braa}[1]{{\langle\!\langle}{#1}|}
\newcommand{\kettbraa}[2]{|{#1}{\rangle\!\rangle}\!{\langle\!\langle}{#2}|}
\newcommand{\loc}{\mathrm{loc}}
\newcommand{\tn}[1]{^{\otimes#1}} % Command for tensor power notation
\newcounter{example}[section]
\newcommand{\wh}[1]{\widehat{#1}}
\newcommand{\abs}[1]{\lvert #1 \rvert}
\newenvironment{aligneq}{\begin{equation}\begin{aligned}}{\end{aligned}\end{equation}}
\newcommand{\natM}{\mc {M^\natural}}
\newcommand{\triv}{\mathrm{triv}}
\newcommand{\adj}{\mathrm{adj}}
\newcommand{\avg}{\mathrm{avg}}
\newcommand{\ort}{\mathrm{ort}}
\newcommand{\inv}{\mathrm{inv}}
\newcommand{\eend}{\mathrm{end}}
\renewcommand{\CC}{\mathcal{C}}
\renewcommand{\ket}[1]{\left.\left|{#1}\right.\right\rangle}
\renewcommand{\bra}[1]{\left.\left\langle{#1}\right.\right|}
\renewcommand{\ketbra}[2]{\ket{#1} \!\! \bra{#2}}
\newcommand{\braakett}[2]{{\langle\!\langle}{#1}|#2{\rangle\!\rangle}}
\newcounter{protocol}
\newenvironment{protocol}[1]
  {\par\addvspace{\topsep}
   \noindent
   \tabularx{\linewidth}{@{} X @{}}
    \hline
    \refstepcounter{protocol}\textbf{Protocol:} #1 \\
    \hline}
  { \\
    \hline
   \endtabularx
   \par\addvspace{\topsep}}
\definecolor{citegreen}{RGB}{0,165,0}
\definecolor{jens}{rgb}{0.1,0.5,0.1}
\definecolor{ingo}{rgb}{1,.2,.4}
\definecolor{mulberry}{rgb}{0.77, 0.29, 0.55}
\definecolor{myred}{RGB}{189,0,0}
\definecolor{mygreen}{RGB}{46,139,87}
\renewcommand\onecolumngrid{% <<<<<<
\do@columngrid{one}{\@ne}%
\def\set@footnotewidth{\onecolumngrid}% <<<<<<<<<<<<<<<<
\def\footnoterule{\kern-6pt\hrule width 1.5in\kern6pt}%
}
\renewcommand\twocolumngrid{% <<<<<<
        \def\footnoterule{% restore rule
        \dimen@\skip\footins\divide\dimen@\thr@@
        \kern-\dimen@\hrule width.5in\kern\dimen@}
        \do@columngrid{mlt}{\tw@}
}%
\begin{document}

\title{Noise-mitigated randomized measurements and self-calibrating shadow estimation}

\author{E. Onorati}
\email[email: ]{emilio.onorati@tum.de}
\affiliation{Zentrum Mathematik, Technische Universit{\"a}t M{\"u}nchen, 85748 Garching, Germany}
\affiliation{Dahlem Center for Complex Quantum Systems, Freie Universit{\"a}t Berlin, 14195 Berlin, Germany}

\author{J. Kitzinger}
\affiliation{Dahlem Center for Complex Quantum Systems, Freie Universit{\"a}t Berlin, 14195 Berlin, Germany}

\author{J. Helsen}
\affiliation{QuSoft, Centrum Wiskunde and Informatica (CWI), Amsterdam, The Netherlands}

\author{M. Ioannou}
\affiliation{Dahlem Center for Complex Quantum Systems, Freie Universit{\"a}t Berlin, 14195 Berlin, Germany}

\author{A. H. Werner}
\affiliation{Department of Mathematical Sciences, University of Copenhagen, 2100 Copenhagen, Denmark}

\author{I. Roth}
\affiliation{Quantum Research Centre, Technology Innovation Institute (TII), Abu Dhabi}

\author{J. Eisert}
\affiliation{Dahlem Center for Complex Quantum Systems, Freie Universit{\"a}t Berlin, 14195 Berlin, Germany}
\affiliation{Helmholtz-Zentrum Berlin f{\"u}r Materialien und Energie, 14109 Berlin, Germany}
\affiliation{Fraunhofer Heinrich Hertz Institute, 10587 Berlin, Germany}

\begin{abstract}
Randomized measurements are increasingly appreciated as powerful tools to estimate properties of quantum systems, e.g., in the characterization of hybrid classical-quantum computation. On many platforms they constitute natively accessible measurements, serving as the building block of prominent schemes like shadow estimation. In the real world, however, the implementation of the random gates at the core of these schemes is susceptible to various sources of noise and imperfections, strongly limiting the applicability of protocols. To attenuate the impact of this shortcoming, in this work we introduce an error-mitigated method of randomized measurements, giving rise to a robust shadow estimation procedure. On the practical side, we show that error mitigation and shadow estimation can be carried out using the same session of quantum experiments, hence ensuring that we can address and mitigate the noise affecting the randomization measurements. Mathematically, we develop a picture derived from Fourier-transforms to connect randomized benchmarking and shadow estimation. We prove rigorous performance guarantees and show the functioning using comprehensive numerics. More conceptually, we demonstrate that, if properly used, easily accessible data from randomized benchmarking schemes already provide such valuable diagnostic information to inform about the noise dynamics and to assist in quantum learning procedures.
\end{abstract}

\maketitle

Randomized measurements are a ubiquitous tool in quantum technologies. They combine the application of random quantum gates with a measurement, often in the computational basis that is native to a given technological platform. From the practitioner's perspective, thus, they are the naturally accessible resource on digital and analog quantum computing platforms. 
From a theoretical perspective, they effectively implement measurement frames with desirable encoding properties of the quantum information of the device's pre-measurement state. For this reason, randomized measurements have become the foundation of many important quantum characterization protocols, for benchmarking, certification, estimation and tomography~\cite{BenchmarkingReview,PRXQuantum.2.010201,Randomizedtoolbox, scott2006tight, KnillBenchmarking, EmersonRB, MagGamEmer2,Efficient, Shadows,RandomSequences,RandomRenyi,RobustShadows,frameworkRB}. 
A possibly most prominent incarnation of such schemes are  estimation protocols based on so-called~\emph{classical shadows} of quantum states~\cite{Shadows,Randomizedtoolbox,RandomRenyi, PainiKalev:2019:Shadows,RobustShadows}.
These schemes have been shown \cite{Shadows} to often have close-to-optimally efficiency in the simultaneous estimation of multiple observables in terms of the sampling complexity -- a central desideratum in order to render the schemes practically applicable in today's platforms.
At the same time, they feature in instances of hybrid quantum-classical approaches~\cite{McClean_2016,Variational,Kandala, Wiersema2022Interfaces}, where a quantum algorithm is augmented by a surrounding classical algorithm, and randomized measurements serve as a \emph{quantum-classical interfaces}.

\begin{figure}
\includegraphics[width=1\columnwidth]{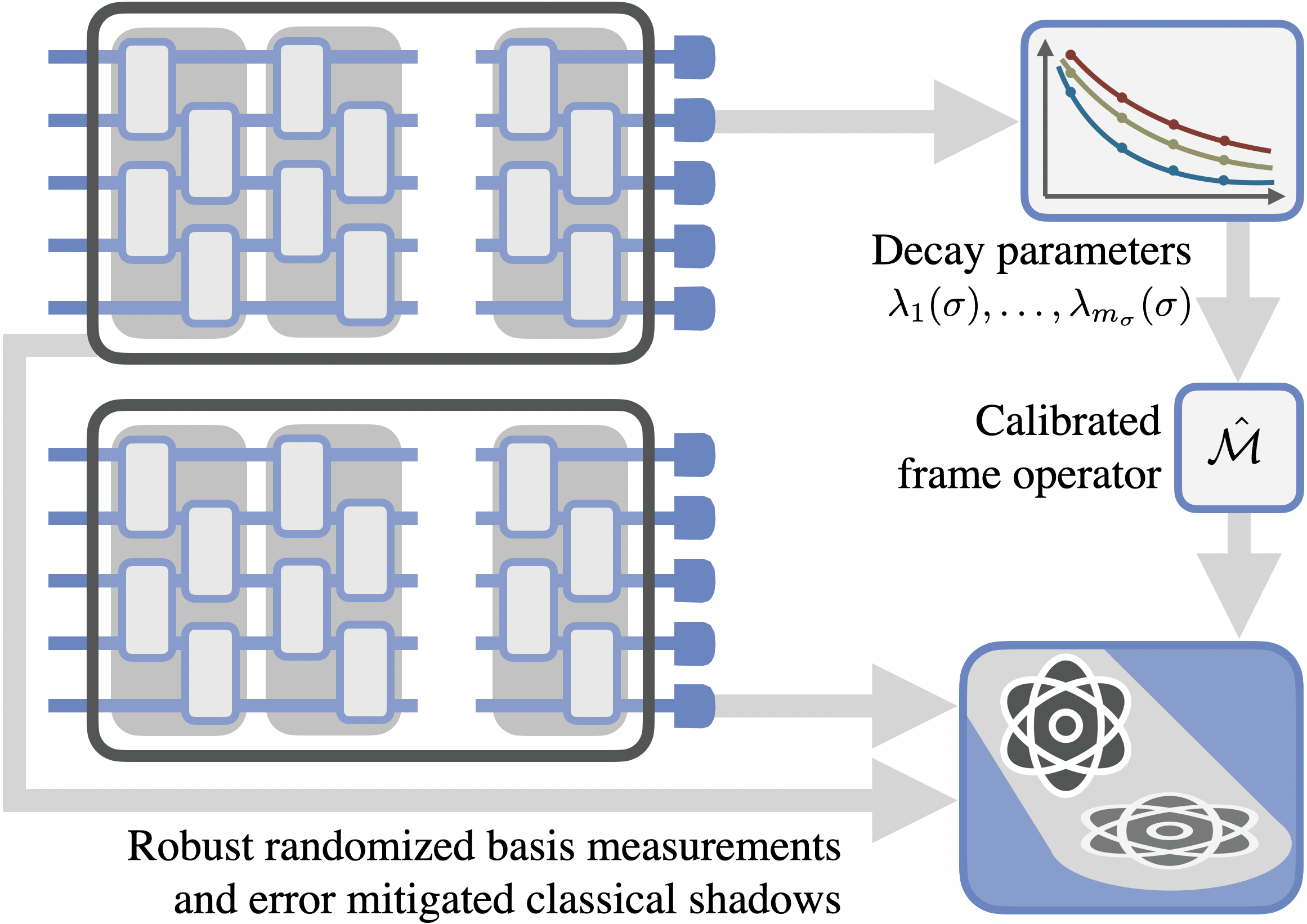}
\caption{An unknown state is subjected to random Clifford sequences, followed by measurements in the computational basis. Using randomized benchmarking, decay parameters are being estimated to establish the calibrated frame operator~$\hatM$.
The calibration procedure can be part of the estimation or run separately, in four variants.}
\label{figure}
\end{figure}

A key insight in quantum systems identification is that for schemes to be experimentally feasible, \emph{robustness} against experimental imperfections is crucial.
This not only 
means that they must be able to tolerate \emph{state-preparation and measurement} (SPAM) errors, but also to address implementation inaccuracy %concerning 
of the quantum gates assisting the measurement scheme. 
In the context of shadow estimation, the effect of noise affecting the randomized measurements can -- under certain assumptions -- be canceled out in the classical post-processing~\cite{RobustShadows,koh2022ClassicalShadowsNoise, Brieger21CompressiveGateSet}.  
These approaches can be seen as generalization of read-out error-mitigation using bit-flip randomization \cite{Karalekas:2020:Quantumclassical,vanDerBerg:2022:Model-free}.  
Importantly, to inform the classical post-processing in the former approaches, the effect of the noise on the estimation needs to be characterized in a separate \emph{calibration experiment} that itself can be affected by SPAM errors. 
The recent Ref.~\cite{zhao2023grouptheoretic} overcomes the need of calibration when there are observables that encode symmetries of the system and thus have a-priori known values.
A different line of work focuses on the distinct goal of mitigating the noise in the state-preparation using non-linear shadow estimation \cite{ShadowDistillation, PhysRevX.11.041036, hu2022logical} or existing error-mitigation techniques \cite{Jnane:2024:MitigatedShadows}. 

In this work, we aim at making randomized measurements substantially more robust: 
We address the problem of mitigating the noise of the quantum gates randomizing the measurements by directly making use of the arguably most common and successful technique for gate-noise characterization: \emph{randomized benchmarking} (RB) \cite{KnillBenchmarking, EmersonRB,frameworkRB}. 
We start from the observation that  
non-commutative generalized Fourier transforms \cite{GowersHatami} 
naturally describe both, noisy implementation of randomized bases measurement and the general gate-dependent noise theory of RB \cite{Merkel2021randomized, frameworkRB}. 
This perspective indicates how to use the approximate characterization of Fourier transform from RB experiments in the context of classical shadows. 

Following this perspective, we devise multiple new protocols for robust shadow estimation where noise-effects are mitigated using information obtained from RB routines. Building our approach upon RB routines with sequences of random gates of different length, has multiple attractive features: 
i) Making use of common RB, the data required for the noise-mitigation might already be acquired during tune-up and calibration of the gates.  
ii) Going beyond this, we develop self-calibrating protocols where one makes use of the same experimental data for both the calibration and the estimation. 
The self-calibrating protocols can in practice dramatically reduce the required experimental resources. 
Our protocols inherit iii) the SPAM-robustness of RB when calibrating against the noise affecting the gates,  and iv) the efficiency and scalability of RB.

More specifically, 
we present four variants of our approach: one based on the multi-qubit Clifford group,  one based on parallel application of single-qubit Clifford unitaries, and two protocols based on the CNOT-dihedral subgroup. 
We show that under common assumptions, the noise parameters entering randomized measurements can be precisely related to exponential decay parameters of RB protocols.  
Thus, we theoretically guarantee that using RB decay parameters, noise and errors affecting the estimation protocol can be perfectly mitigated in expectation.  
We further find a favourable scaling of the efficiency of noise-mitigated estimation schemes using multi-qubit rotations that is comparable to their noise-less counterparts, if the noise strength itself does not scale with the system sizes.
We derive rigorous guarantees under the assumption of a simple gate-independent noise model. 
However, the perspective from the general theory of RB already 
makes the required approximate assumptions to guarantee functioning in the presence of gate-dependent noise apparent. 
In numerical simulation, we validate the resulting expectation that also in the presence of `typical' gate-dependent noise, our protocols significantly mitigate noise-induced estimation biases. 

Using longer random sequences for ``washing out'' gate-dependent noise effects can further improve the stability of classical shadows, excluding pathological settings identified by Ref.~~\cite{brieger2023StabilityClassicalShadows}. 
From the perspective of error-mitigation techniques \cite{MitigationReview}, our scheme can also be seen as following the paradigm of zero-noise extrapolation \cite{PhysRevLett.119.180509,li2017efficient}, here in form of the noise amplification at the heart of RB techniques. 

%-----------------------------------------------------------------------

\paragraph*{Mindset of the protocol.} 
Randomized basis measurements provide a faithful interface between a quantum state and classical data. 
The data can be post-processed or used for probabilistically re-preparing the state on a quantum device. To this end, one implements measurements in a random basis by rotating the state with a unitary, say, randomly drawn from a group $\mc G$, before read-out in the computational basis of the device. 
In the absence of noise, the effective measurement channel, or \emph{frame operator} of the implemented \emph{informationally complete positive operator-valued measure} (IC-POVM), reads 
\begin{equation}\label{def:frame_operator}
  \mc M_{\mathrm{ideal}} \coloneqq \EE_\GG \Ad^\dagger(g) B \Ad(g)\,,
\end{equation}
where $\Ad(g) = U(g) \otimes U(g)^\ast$ denotes the action of $g\in \mc G$ as a unitary channel, $B=\sum_{b \in \{0,1\}^{\times n}}\ketbra{b,b}{b,b}$ is the dephasing channel in the computational basis, and the average is taken over the Haar measure. 
This effective measurement channel needs to be inverted in order to reconstruct the information about the quantum state in the classical post-processing. 
On an actual quantum device, the implementation of the rotation before the basis measurement will, however, inevitably suffer from noise. This is particularly daunting if implementing unitaries in $\mc G$ requires long circuits with entangling gates. 
A fairly general noise model can be written in terms of an implementation map $\phi$ that associates to an intended operation $g\in \mc G$ the corresponding quantum channel $\phi(g)$ that actually happens on the device. 
If instead of the ideal frame operator, we now invert its analog containing the noise 
\begin{equation}\label{def:noisy_frame_operator}
  \mc M \coloneqq \EE_\GG \Ad^\dagger(g) B \phi(g)\, ,
\end{equation}
we are effectively canceling out the noise and arrive again at an unbiased estimator.
%, even in the presence of noise. 
Now from a mathematical perspective, the expression \cref{def:noisy_frame_operator} for the `noisy' frame operator can be understood as the non-commutative Fourier transform~(FT) of $\phi$ acting on $B$, that is,
\begin{equation}\label{eq:M_FT_connection}
    \mc M = \phi_\FT[\Ad](B).
\end{equation}
This can be thought of as a formal way of referring to `noisy channel twirls'. 
The mindset behind our work is that exactly this Fourier transform for important gate sets is routinely characterized by nearly every quantum computing experiment on the planet, namely, when running RB experiments~\cite{Merkel2021randomized,frameworkRB}.
More precisely, RB characterizes the dominant eigenvalues of these Fourier transforms as the decay parameters of signals that fade for longer and longer random sequences of gates drawn from~$\mc G$. 
Under the same assumptions as those required for output data of RB experiments to be related to average gate fidelities, these decay parameters already provide a full description of the entire operator $\phi_\FT[\Ad]$.  
Thus, RB experiments should in many settings naturally provide the required information for mitigating noise in randomized basis measurements.

%-----------------------------------------------------
\paragraph*{Randomized benchmarking routine as starting point.}
Randomized benchmarking protocols are a family of techniques built on sequences of random gates to extract data (usually, average gate fidelities of the implemented gate set). That is, given a set of gates $\{g\}_{g\in \mc G}$ that have physical implementations $\{\phi(g)\}_{g\in \mc G}$ that we want to characterize, an RB protocol samples gate sequences  $\vec{g}(m)=(g_1,g_2,\dots,g_m)$.
from the gate set at random, computes a global inverse $g_{\mathrm{inv}} = (g_m\cdots g_1)^{-1}$, and then estimates the probability
\begin{equation}
    \mc Q (\rho_0,E, \vec{g}) \coloneqq \Tr [E \, \phi(g_{\mathrm{inv}})\phi(g_m) \cdots \phi(g_2) \phi(g_1) \, (\rho_0)], 
\end{equation}
where $E$ and $\rho$ are a measurement POVM element and input state, respectively. Note that $g_{\mathrm{inv}}$ does not need to be physically implemented, it can also be accounted for in the classical post-processing through so-called filter functions~\cite{helsen2019new,PhysRevLett.123.060501,frameworkRB, MarkusHeinrich}, which we will use in the classical post-processing stage of our schemes. 
For different circuits lengths~$m$, we can repeat this measurement for several random sequences and average to obtain a sample average~$\mc Q_\avg$, which is fit to an exponential model with decay parameters $\{\lambda_\alpha\}_\alpha$ that constitute the figures of merit of an RB protocol,
\begin{equation}
    \mc Q_\avg(m) = \sum_{\alpha} y_\alpha(E, \rho) \lambda_\alpha^m.
\end{equation}
Conveniently, by making use of the filtering functions, we can also extract each $\lambda_\alpha$ separately.

One of the major features of RB protocols is that they are often sample-efficient in achieving multiplicative precision, and robust against SPAM errors. RB is therefore an ideal candidate to provide us with some   characterization of noisy randomized measurements.

%-----------------------------------------------------------------------
\paragraph*{RB estimates for noisy frame operators.}
In reasonably good experimental implementations, the implementation map $\phi$ is close to a representation. 
As a consequence, its Fourier transform maintains the distinctive structure of the Fourier transform of a representation:  
Breaking down~$\phi_\FT[\Ad]$ for each irreducible representation $\alpha$ of $\GG$ occurring in~$\Ad$,
we write using standard isomorphisms  $\phi_\FT[\alpha] \coloneqq \EE_\GG \, \alpha^\ast (g) \otimes \phi (g)$.
Let $m_\alpha$ be the multiplicity of $\alpha$ in $\Ad$. 
The operator $\phi_\FT[\alpha]$  (being a perturbation of the  projection $\Ad_\FT[\alpha]$) has $m_\alpha$ eigenvalues close to $1$ and the remaining eigenvalues close to zero \cite{frameworkRB}.
In a nutshell, RB signals are described by matrix powers of $\phi_\FT[\alpha]$ \cite{Merkel2021randomized,frameworkRB}, thus, the observed decay parameters correspond to the dominant eigenvalue(s) 
$\lambda_\alpha$ of $\phi_\FT[\alpha]$. 
In case of $m_\alpha>1$, we pick a representative value for the dominant eigenvalues, e.g. taking the average thereof. 
Using \cref{eq:M_FT_connection}, we can use $\lambda_\alpha$ as follows to arrive at an estimate for $\mc M$ in the presence of noise: 
For each $\alpha$ in the irreducible decomposition appearing in $\Ad$, we define
\begin{equation}\label{def:phi_estimator}
	\wh \phi_{\FT} [\alpha] 
	\coloneqq 
	\lambda_{\alpha} \Ad_{\FT}[\alpha]\, .
\end{equation}
For this estimate to be accurate, we implicitly assume that the noise mainly effects the spectrum of the channel twirl but preserves its subspace structure. 

We will from now on restrict to multiplicity-free decompositions of $\Ad$ for simplicity. 
Evaluating \cref{eq:M_FT_connection} from the estimator \cref{def:phi_estimator} may require the substitution of the coefficients $\lambda_\alpha$, that is, associating $\Ad[\alpha]$ with a decay parameter $\lambda_{\alpha'}$ of a group~$\GG'$ which may possibly be different from~$\GG$. 
The reason for this is the presence of $B$ in $\hatM$, which produces a shift in the invariant subspace of $\phi_\FT[\Ad]$.
Namely, we construct
\begin{equation}\label{eq:hybrid_framework_estimation}
	\hatM = \sum_{(\alpha,\alpha')} \lambda_{\alpha'} \frac{\Tr[\Pi_\alpha B]}{\Tr[\Pi_\alpha]}\Pi_\alpha,
\end{equation}
where $\Pi_\alpha$ is the projector onto the irrep~$\alpha$ of $\GG$ and $\lambda_{\alpha'}$ is the RB decay parameter of an irrep~$\alpha'$ of $\GG'$.

%=====================================================
\paragraph*{Frame operator for robust shadow estimation.} 
%=====================================================
As an application of the general approach explained in the previous section, we can construct noise-mitigated shadow estimation protocols.  
We start with global Clifford measurements, i.e., we set $\GG$ to be the multi-qubit Clifford group, whose $\Ad$ representation splits into two inequivalent irreducible components. The noise-free frame operator in this setting is given by~\cite{Shadows}
\begin{equation} \label{eq:ideal_frame}
	\mc M_{\mathrm{ideal}}  = \Pi_\triv + \Pi_\adj/(d+1),
\end{equation}
where $d$ is the dimension of the system, $\Pi_\triv$ is the projector onto the fully mixed state, and $\Pi_\adj=\1-\Pi_\triv$ is its complement.
The frame operator is altered by noise 
and thus must be corrected to yield reliable shadow estimates. 
Such a correction has been proposed in Refs.~\cite{koh2022ClassicalShadowsNoise,RobustShadows} (which does so through a separate calibration experiment).

We propose a strategy (see \hyperlink{protocol}{Protocol}) to estimate the frame operator by using an RB protocol drawing gates from the CNOT-dihedral group, a finite group generated by $\mathbb{K}= \langle \mathrm{CNOT}, S, X \rangle$, with $S$ being the phase gate, to extract the decay parameter $\lambda_Z$. 
For gate-independent noise, $\lambda_Z$ coincides with effective depolarizing strength of a noise channel restricted to the subspace of the computational basis (spanned by Pauli-$Z$ operators) and might be refered to as the effective dephasing strength -- cfr.\ \cref{eq:Z-par_independent_noise}.
Single-qubit dihedral and CNOT-dihedral RB has been proposed in Refs.~\cite{CNOTDihedral,carignan2015characterizing}, respectively.  
We here present a modified version for CNOT-dihedral RB using filter functions. 
In the spirit of \cref{eq:hybrid_framework_estimation}, we define an estimate for the frame operator in terms of the decay parameter~$\lambda_Z$ as 
\begin{equation}\label{eq:inverted_Mnat}
    \hatM = \Pi_\triv + \frac{\lambda_Z}{d+1} \Pi_\adj\,.
\end{equation}
We give a more careful justification for using $\lambda_Z$ arising in CNOT-dihedral RB in \cref{app:RB_dihedral}. 
In particular, we show that in the case of gate-independent noise we arrive at perfect noise-cancellation in expectation. 

The protocol explained above still requires a separate RB experiment for the calibration. 
But we can actually devise an (informationally-complete) estimation protocol that is self-calibrating with a small adaptation: We pick a single Clifford gate $c$ at random and implement it as the first gate of the circuit.
The remaining $m$ gates will again be drawn from the CNOT-dihedral group. 
Since this circuit corresponds to a random Clifford operator thanks to the first gate, we can use the measurement data from this RB experiment to directly construct the shadows. 
At the same time, we can use the same set of measurement outcomes $\{b'\}$ to classically compute $(d+1)\Tr[E \, \Pi_\adj \Ad^\dagger(g_\eend )(b')]$ (where $g_\eend$ is the group element resulting from the composition of the whole circuit) and use these quantities to classically extract $\lambda_Z$ as the decay parameter of the fitting model for the sample average, that is,
\begin{equation}\label{eq:fitting_model_for_lambda}
    \mc Q(m+1)
    = 
    \Tr[E \, \Pi_\adj (\rho)] \, \lambda_Z^{m+1}.
\end{equation}
We will then use the extracted parameter to calibrate the frame operator,
\begin{equation}\label{eq:RB_calibrated_frame_circuit}
    \hatM_{m+1} \coloneqq \Pi_\triv +  \frac{\lambda_Z(\Lambda)^{m+1}}{d+1}  \Pi_\adj ,
\end{equation} 
and produce the noise-mitigated classical shadow from every outcome $b'$ resulting from a random circuit of length $m+1$,
\begin{equation}
    \hat \rho_{m+1} = \hatM^{-1}_{m+1}\Ad^\dagger(g_\eend) \kett{b'} .
\end{equation}
In this way, we can construct shadows \emph{and} calibrate them using the same data from a single quantum experiment, with the striking advantage of addressing precisely the noise dynamics that is affecting the shadows.
Please refer to the \hyperlink{protocol}{Protocol} in this letter; more details are presented in \cref{sec:robust_CNOT_dihedral_scheme}.

%------------------------------------------------------------
\begin{figure}[htpb]
\normalsize 
\begin{protocol}{\hypertarget{protocol}{Noise-mitigated classical shadow estimation}}

    \\[3pt]
    
    \textbf{Assumption:} Gate-independent noise channel $\Lambda$ for the Clifford and CNOT-dihedral gate-sets.
    
	\smallskip 

    \textbf{Setup:} Experimentally implemented Clifford gates $\{\phi (g)\}_g$.

    \smallskip 

    \textbf{Input:} Unknown quantum state $\rho$.

	\smallskip

	\textbf{Output:} Noise-mitigated classical shadow $\hat \rho$.
	
	\smallskip
	
	\textbf{Procedure:}
 
    \begin{enumerate}[label=(\roman*)]
        \item Fix a CNOT-dihedral sequence length $m$.
        \item Pick a single random Clifford gate $c$ and $m$ gates from the CNOT-dihedral group, $k_1, \dots, k_m$. 
        \item Implement the sequence $\phi(k_m) \dots \phi(k_1) \phi(c)$ on the target state $\rho$. 
        \item Measure in the computational basis, obtain outcome~$b'$. 
        \item Classically compute $(d+1)\Tr[E \, \Pi_\adj \Ad^\dagger(g_\eend )(b')]$, with $g_\eend= k_m\cdots k_1 \cdot c$.
        \item Repeat the above procedure $N=\mc O(1)$ times to estimate the sample average $\mc Q (m+1)$.
        \item Increase the sequence length $m+1$ and repeat the above procedure.
        \item Fit the data into the model in \cref{shadow_dihedral_fitting_model} to extract~$\lambda_Z(\Lambda)$.
        \item Compute the calibrated the frame operator $\hatM_{m+1}$ according to \cref{eq:RB_calibrated_frame_circuit}.
        \item Produce the noise-mitigated classical shadows $\hat \rho_{m+1}^{(j)} = \hatM^{-1}_{m+1}\Ad^\dagger(g_\eend) \kett{b'}$ for $j=1,\dots,N$.
    \end{enumerate}
\end{protocol}
\end{figure}
%---------------------------------------------------------

A simpler scheme results when using classical shadows and conventional RB both with uniformly drawn multi-qubit Clifford gates. 
We can again obtain calibration and estimation from the same experimental data. 
For this to work, we require the additional assumption
\begin{equation}\label{approx:lambda_vs_F}
    \lambda_\adj\approx \lambda_Z,
\end{equation}
where $\lambda_\adj$ is the parameter obtained from the conventional Clifford RB method, often related to the effective depolarizing strength. 
We can use this figure to estimate the frame operator by classically computing
\begin{equation}\label{MhatClifford}
    \hatM_\adj	= \Pi + \frac{\lambda_\adj}{d+1} \Pi_\adj.
\end{equation}
Remarkably, the approximation in \cref{approx:lambda_vs_F} turns out to be accurate for a number of noise models. 
In \cref{sec:global_clifford} we analytically derive the explicit ratio of the decay parameters for common noise models -- depolarizing noise, bit-flip noise, and amplitude damping -- and demonstrate numerically that in these scenarios the error bias in \cref{eq:calibrated_bias} is significantly smaller than the uncalibrated bias in \cref{eq:uncalibrated_bias}.

Finally, let us study a setting where we want to implement a protocol involving \emph{local} Clifford gates only, considerably reducing the experimental requirements. 
In this case, it is common to consider the task of learning expectation values of observables of the form $O = \sum_i O_i$, where each $O_i$ has non-trivial support of bounded locality. 
In this case, it suffices to produce a calibration of the frame operator restricted to the local supports. This is given by 
\begin{equation}\label{eq:local_frame_M} 
    \hatM_\loc(\{O_i\}_i) \coloneqq \sum_{w\  :\ \supp{w} \subseteq \ \bigcup\supp{O_i}} p_{w,B}  \, \Pi_w,
\end{equation}
where $\Pi_w$ are the projectors onto the subset of operators with vanishing local partial trace on the subsystems in the support of $w \in \{0,1\}^n$, cfr.~\cite{RobustShadows}.

The required calibration parameters $p_{w,B}$ can be related to notions of an effective local dephasing strength. 
Ref.~\cite{RandomSequences} studies extensions of RB protocols that allow for interleaving so-called probe operators in the post-processing, thereby constructing classical shadows of the average noise associated to a gate set. 
Here, we use this gate-set shadow protocol with sequences of parallel local random Clifford unitaries to estimate the parameters $p_{w,B}$. 
Roughly speaking, we interleave filter functions with the operator $B$ in the post-processing to estimate local dephasing strengths instead of local depolarizing strengths. 
The complexity bounds provided in Ref.\  \cite{RandomSequences} ensure efficient learning of $p_{w,B}$ of size $\mc O (\log n)$, yielding an efficient estimation of $\hatM_\loc(\{O_i\}_i)$ in \cref{eq:local_frame_M} under gate-independent noise and observables with $\mc O (\log n)$-bounded support. 
We refer to \cref{sec:local_Clifford} for details on  the protocol.

\begin{figure}
  \centering
\includegraphics[width=0.38\textwidth]{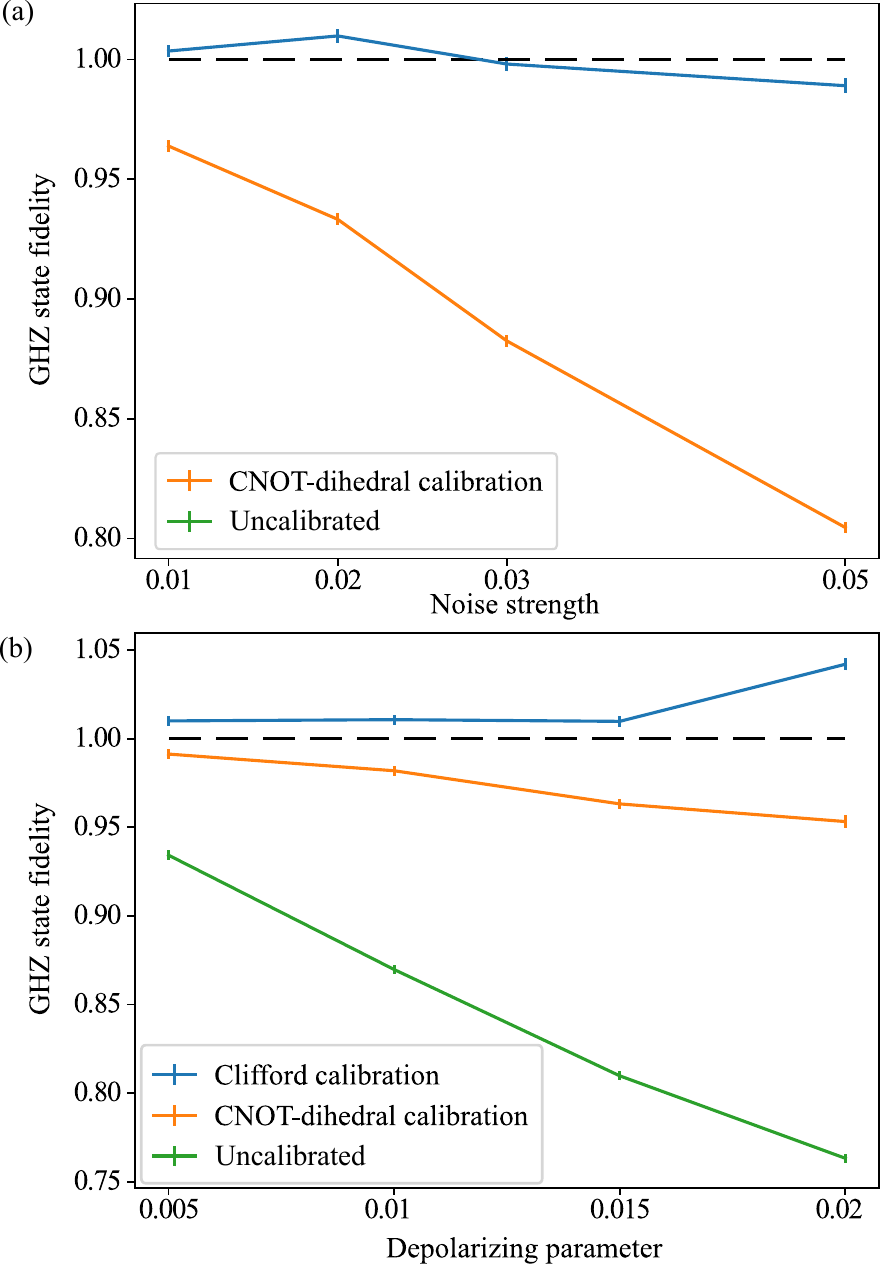}
  \caption{(a) 
  Application of the RB-calibrated  shadow estimation scheme for gate-independent noise applied to 8 qubits, for amplitude damping and dephasing, each with parameter $p\in [0,1]$ shown on the horizontal axis. (b) The performance of 
  the scheme for gate-dependent noise applied on 5 qubits. We consider a simple gate-dependent noise model in which CNOT gates are followed by two-qubit depolarizing noise with parameter $p$ shown on the horizontal axis. In both cases, the task is to estimate the state fidelity with a GHZ state. The noise-free target value is shown by the black dashed line. Error bars show one bootstrap standard deviation of the median-of-means estimator. See \cref{sec:numerics_details} for details on the numerical simulation.}
  \label{fig:main_numerics}
\end{figure}

%======================================================
\paragraph*{Performance guarantees.} 
%======================================================

Assuming a gate-independent noise channel $\Lambda$ acting after the implementation of the ideal unitary, we model the noisy implementation $\phi(g)$\footnote{Note that the way we make use of this assumption in our work, this noise model does not include read-out noise (contrary to the similar assumption made in Ref.~\cite{RobustShadows}). 
We might assume that the read-out noise was already addressed by a separate read-out mitigation protocol \cite{Maciejewski2020mitigationofreadout, Karalekas:2020:Quantumclassical,vanDerBerg:2022:Model-free}.
} of an ideal gate $\Ad(g)$ as
\begin{equation}\label{eq:noise_model}
	\phi(g) = \Lambda \,  \Ad(g) .
\end{equation}
Note that this noise model is even more restrictive as the gate-independent noise model often used in guarantees for RB, as we do not allow for noise to act before the gate.
Then, the \emph{physical frame operator} (that is, the frame operator accounting for the noisy implementation of the quantum unitaries) corresponds (as per Ref.~\cite{RobustShadows}) to
\begin{equation}\label{eq:natM}
	\natM \coloneqq \Pi_\triv + \frac{\lambda_Z(\Lambda)}{d+1} \Pi_\adj,
\end{equation}
where the $Z$-parameter~$\lambda_Z(\Lambda)$ for gate-independent noise is
\begin{equation}\label{eq:Z-par_independent_noise}
    \lambda_Z(\Lambda) 
    =
    \frac{\Tr[\Pi_\adj B \Lambda]}{\Tr[\Pi_\adj B]}\\
    =
    \frac{\Tr[\Pi_\adj B \Lambda]}{d-1}.
\end{equation}
An uncalibrated shadow estimation procedure thus yields a state estimator with a bias of the form
\begin{equation}\label{eq:uncalibrated_bias}
 \mc M_{\mathrm{ideal}}^{-1} \, \natM -\1 
 = 
 \left(\lambda_Z(\Lambda) - 1\right) \Pi_\adj .
\end{equation} 
To mitigate this bias in the estimation, it is thus necessary to access the quantity $\lambda_Z(\Lambda)$.
Strikingly, one of the decay parameters from an RB scheme drawing from the CNOT-dihedral group (plus the optional first gate from the Clifford group) coincides \emph{exactly} with the quantity we need to extract to reconstruct the frame operator as per \cref{eq:natM}. Under the gate-independence assumption, 
we can hence reconstruct the physical frame operator~$\natM$ in full (and obtain its inversion for the shadow protocol).

The global Clifford scheme works under the assumption of \cref{approx:lambda_vs_F}, which for gate-independent noise translates to
\begin{equation}
    \lambda_\adj(\Lambda)=\frac{\Tr[\Pi_\adj \Lambda]}{d^2-1} \approx  \frac{\Tr[\Pi_\adj B \Lambda]}{d-1}= \lambda_Z(\Lambda),
\end{equation}
yielding an error bias 
\begin{equation}\label{eq:calibrated_bias}
	\hatM_\adj^{-1}  \, \natM - \1 
	=
    \left\{\frac{\lambda_Z(\Lambda)}{\lambda_\adj (\Lambda)}  - 1 \right\} \Pi_\adj .
\end{equation}

%--------------------------------------

It is key to the approach suggested in this work that not only
an error bias is mitigated, but also that the scalability of the sample complexity is precisely guaranteed. In \cref{app:variance} we include error bounds on the variance of the derived variants of robust shadows estimation. In particular, in \cref{app:variance_dihedral} we demonstrate that the variance of the CNOT-dihedral scheme is~$\mc O (1)$ in the number of qubits and sequence lengths.

%======================================================
\paragraph*{Numerical examples.} 
%======================================================
We numerically simulated the protocol for the task of estimating the fidelity of a GHZ-state where the state preparation is ideal for various noise models affecting the gates randomizing the measurement. 
In this way, every deviation of the fidelity away from 1 is caused by noise and errors in the randomized measurement. 
A selection of numerical results is shown in \cref{fig:main_numerics},  further results are presented in \cref{app:noise_examples} and \cref{app:gate_dependence}. 
We find that the separate calibration with CNOT-dihedral RB can successfully correct the frame operator for arbitrary gate-independent noise. 
The performance of the (global) Clifford group calibration depends on the type of (gate-independent) noise. 
In accordance with our theoretical derivations, certain noise models still have a bias in the estimation.  
Nonetheless, we consistently observe improvements over uncalibrated estimation for the parameter regime under consideration. 
For gate-dependent noise, we observe  that our method improves the accuracy of shadow estimation for all considered noise models, albeit leaving a small bias in all cases.

%======================================================
\paragraph*{Conclusions and outlook.} 
%====================================================== 
In this work, 
we have developed methods for mitigating the noise in randomized measurement schemes that is introduced by the gates randomizing the measurements. 
We argued that the flexible toolkit provided by randomized benchmarking  methods and its extension to gate-set shadows 
can be successfully put to use in order to characterize the relevant parameters of the noise affecting randomized measurements. 
As a result, using these calibration parameters, noise in the randomized measurements can be mitigated or even completely accounted for in classical post-processing. 
We have exemplified different variants of the approach for randomization with the multi-qubit  Clifford group, assisted by either standard RB or filtered RB using the CNOT-dihedral group, and with local Clifford unitaries. 
In particular, with slight modifications the RB protocol for calibration and the estimation protocol can use the same measurement data, giving rise to self-calibrating shadow estimation schemes.
The performance of the scheme can be understood using a gate-independent noise assumption, and we provide rigorous guarantees in this setting. 
This is complemented by numerical demonstrations of the protocol's performance for practically motivated noise examples. 
Besides its efficiency we inherit the robustness against state-preparation from RB.

Possibly the most important conceptual take-away message from this work is of highly practical nature: 
sometimes the available set of techniques and data might already be 
sufficient to mitigate errors, without the need to resort to more sophisticated characterization experiments.  
Instead, much of the burden can be placed on the theoretical understanding of the relation between the parameters entering  different protocols under plausible noise assumptions. 
We here take a first step to draw such connections, still using arguably restrictive assumptions on the noise models. 
We hope that this motivates further work on suitable approximations of average noise characterization under a more flexible set of assumptions. 

\paragraph*{Acknowledgements.} We would like to thank Richard Kueng, Markus Heinrich, Leandro Aolita and Kristan Temme for fruitful discussions. EO~has been supported by the ERC under grant agreement no.~101001976 (project EQUIPTNT).
The Berlin team has been supported by the BMBF (DAQC, MUNIQC-Atoms), the Munich Quantum Valley (K-4 and K-8), the Quantum Flagship (PasQuans2, Millenion), QuantERA (HQCC), the Cluster of Excellence MATH+, the DFG (CRC183), the Einstein Foundation (Einstein Research Unit on Quantum Devices), and the ERC (DebuQC). JH acknowledges support from the Quantum Software Consortium (NWO Zwaartekracht 024.003.037) and a Veni grant (NWO Veni 222.331). AHW thanks the VILLUM FONDEN for its support with a Villum Young Investigator (Plus) Grant (Grant No.\ 25452 and Grant No. 60842) as well as via the QMATH Centre of Excellence (Grant No.\ 10059).

\paragraph*{Note added.} Upon completion of this work, we became aware of the manuscript  \cite{Harvard} that is related to the goal of our work for error-mitigated shadow estimation.

\let\oldaddcontentsline\addcontentsline% Store \addcontentsline
\renewcommand{\addcontentsline}[3]{}% Make \addcontentsline a no-op

\let\addcontentsline\oldaddcontentsline% Restore \addcontentsline

%===================================================
%===================================================
%===================================================
%===================================================
%===================================================
%===================================================
\cleardoublepage
\onecolumngrid

\largesection
\largesubsection
\largesubsubsection

\appendix
\renewcommand{\appendixpagename}{\centering \Large Supplementary material for noise-mitigated randomized measurements}
\appendixpage

\vskip30pt

\tableofcontents

\vskip30pt

%===================================================
%===================================================
%====================================================
\section{Summary and guidance}
%====================================================

This supplementary material has been set up to fulfill two purposes. 
On the one hand, it provides further detail concerning the technical and mathematical statements on the main text.

On the other hand, it provides further guidance for practitioners in experimental physics who wish to apply our results in the laboratory, or to theorists supporting such experiments. On the highest level, such practitioners should ask themselves three questions.

\begin{itemize}
\item Would they wish to pursue global or local Clifford shadow estimation?

\item What is the presumed prevalent noise in the device?

\item Do they want to calibrate and estimate from the same data, or do it separately?

\end{itemize}

Following this set of options, we present four calibration methods that differ both in their setup and features, but all of them are built on random circuits and data fitting. For all of them the gate-independent noise assumption is introduced, and we denoted the error channel by~$\Lambda$. More precisely, we can think of the following four specific uses of our scheme of the main text.

\begin{table}[htbp]
\centering\setlength\tabcolsep{3.5pt}\renewcommand\arraystretch{2.25}
  \noindent\makebox[\textwidth]{%
    \begin{tabular}{|l|*{4}{c|}}
      \hline
      \diagbox[width=\dimexpr \textwidth/8+4\tabcolsep\relax, height=1.35cm]{Scheme}{Features}
                   & Exact calibration & Self-calibrating & Post-processing & Local gates \\
      \hline
      CNOT-dihedral & Yes & No & None & No\\
      \hline
       \makecell{CNOT-dihedral\\ + one Clifford} & Yes & Yes & Yes (easy) & No \\
      \hline
      Global Clifford & No & Yes & None & No\\
      \hline
      Local Clifford & Yes & No & Yes & Yes\\
      \hline
    \end{tabular}
  }%
  \caption{Features and trade-offs of the four specific calibration schemes.}
\end{table}

\begin{enumerate}[label=(\arabic*)]

    \item \textbf{Native CNOT-dihedral RB}. This is a straightforward RB procedure with gates drawn from the CNOT-dihedral subgroup only. The data are fit in a fitting model with one decay parameter, with no additional post-processing. The decay parameter is the $Z$-parameter $\lambda_Z(\Lambda)$ of the noise channel $\Lambda$. Since we are drawing gates from a subgroup only, one cannot produce classical shadows with this data. Hence, this serves as a \emph{separate} calibration procedure for shadow estimation. This scheme is presented in \cref{app:RB_dihedral}.
    
    \item \textbf{CNOT-dihedral with one random general Clifford}. This is a slight modification of the previous scheme, where one additional Clifford gate $v$ is selected and compiled together with the first dihedral gate to produce another Clifford element. Depending on the outcome $b'$ of the measurement in the  computational basis, we compute the element $(2^n+1)\braa{O}\Ad(c)\kett{b'}$ for some observables $O$ and fit the sample average into a fitting model whose only decay parameter is again the $Z$-parameter $\lambda_Z(\Lambda)$. At the cost of a simple post-processing action, we have now the benefit of being able to construct \emph{both} classical shadows and retrieve the $Z$-parameter for calibration. This variant of the scheme is illustrated in \cref{sec:robust_CNOT_dihedral_scheme}.

    \item \textbf{Conventional global Clifford RB}.
    By carrying out the conventional shadow estimation scheme with global Cliffords, one can use the measurement data -- without any specific post-processing --  to obtain a quantity which does not correspond to the $Z$-parameter, but still constitutes in many scenarios a close approximation to it. For the benefit of a very simple scheme that allows to calibrate the frame operator with the conventional shadow estimation procedure, here the additional requirement is that some structure of the noise model must be known in order to establish a-priori whether this approximation scheme delivers a reliable calibration. 
    A discussion regarding the noise dynamics this scheme is suitable for, together with the description of this variant of robust shadow estimation, is to be found in \cref{sec:global_clifford}.

    \item \textbf{Local Clifford shadow estimation scheme}. 
    Circuits made of local Clifford gates only can be used to calibrate exactly the frame operator, and obtain a robust version of shadow estimation in one go.
    Here a more sophisticated post-processing of the data is required, following the method of~\cite{RandomSequences}. Indeed, this scheme permits to learn with the same data exponentially many ``local $Z$-fidelities'' at once, constrained by the size of the support which must be of size $\mc O (\log n)$ to guarantee scalability. 
    Full details in \cref{sec:local_Clifford}.

\end{enumerate}

\FloatBarrier

%====================================================
\section{Preliminaries}
%====================================================

\subsection{Notation}\label{app:notation}
We denote by the $d^2$-dimensional vector $\kett{\rho}$ the 
\emph{vector representation} of a density operator $\rho$ acting on a $d$-dimensional subspace, with entries 
\begin{equation}
\kett{\rho}_{j,k}=\langle e_j, e_k \kett{\rho}
:=\sandwich{e_j}{\rho}{e_k},
\end{equation}
where $\{\ket{e_j}\}_j$ is the natural basis of $\mathbb{C}^d$.
%Note that any pure state $\ketbra{\psi}{\psi}$ becomes $\kett{\psi}$.
Analogously, we represent representation of a quantum channel~$\Lambda$ by a $d^2 \times d^2$ matrix whose entries are defined by
\begin{equation}
\Lambda_{(j,k),(\ell,m)} \sandwich{e_j,e_k}{\Lambda}{e_\ell,e_m}\coloneqq \Tr[\ketbra{e_k}{e_j} \Lambda(\ketbra{e_\ell}{e_m})].
\end{equation}
Under this isomorphism, we then have $\kett{\Lambda(\rho)}= \Lambda\kett{\rho}$. The \emph{adjoint action} will be denoted by $\Ad (g)$ throughout this work.

In the following, we will consider two relevant groups in quantum computing, that is, the Clifford group that we denote by $\mc C$ and the CNOT-dihedral subgroup $\KK = \langle \mathrm{CNOT}, S, X \rangle$~\cite{CNOTDihedral}, with $S$ being the phase gate. 
We model the physical, possibly noisy, implementation $\phi$ of a quantum gate $g$ from the Clifford group according to an error channel $\Lambda$ that acts right after the ideal action of the gate. In matrix representation, we write 
\begin{equation}\label{eq:channel_implementation}
\phi(g) = \Lambda(g) \Ad(g). 
\end{equation}

This will grant the results on error bias and variance we derive in the following.
An element of the computational basis will be denoted as $b$, and under the vectorization isomorphism we well write a computational basis measurement by $\braa{b}$ and the matrix $\ketbra{b}{b}$ as $\kett{b}$.
In our matrix representation formalism, we also define the matrix
\begin{equation}\label{B}
	B\coloneqq \sum_b \kettbraa{b}{b},
\end{equation}
physically corresponding to the sum of measurements in basis $b$ followed by the reconstruction of the basis operator.

According to Schur's lemma, taking the group average of the object $\Ad^\dagger (g) M \Ad(g)$ for any matrix $M$ produces a ``projection onto $\Pi_\alpha$-projectors''  for the invariant sub-spaces of the irreducible representations (irreps)~$\alpha$ appearing in the decomposition of the $\Ad$ representation. Formally, we have~\cite{gambetta2012characterization}
\begin{equation}
    \EE_\GG \Ad^\dagger (g) M \Ad(g) = \sum_{\alpha\ \text{in} \Ad} \frac{\Tr[\Pi_\alpha M]}{\Tr[\Pi_\alpha]}\Pi_\alpha .
\end{equation}

\textbf{Assumption:} In this appendix, we will assume that the noise channel in \cref{eq:channel_implementation} is not dependent on the gate, hence writing $\Lambda(g) \equiv \Lambda$. Additionally, we assume that $\Lambda$ is trace preserving. 

%----------------------------------------------------
\subsection{Randomized benchmarking}\label{app:RB}
%----------------------------------------------------

\emph{Randomized benchmarking} (RB) 
~\cite{KnillBenchmarking,MagesanPRL,frameworkRB}
refers to a large family of noise characterization schemes with similar characteristics. The common trait of all these procedures is that they make use of sequences of random quantum gates followed by a POVM measurement.
With the collected data, we compute a \emph{sample average} for varying sequence lengths, which are inserted into a theoretical model to extrapolate and estimate a set of parameters. These serve as figures of merit to quantitatively assess the level of noise which is affecting the experimental realization of the gate-set.

The key point of RB is that a scalable number of samples suffices to closely approximate the theoretical average~$\mc Q_\avg$ over all possible random samples. The fitting model for~$\mc Q_\avg$ has the form of linear combination of matrix exponential decays~\cite{frameworkRB}, each of them connected to a different irreducible representation in the decomposition of the $\Ad$ representation of the group from which the random sequences are drawn.
In the particular case where no irreducible representation appears more than once in the decomposition, that is, the $\Ad$ representation is multiplicity-free, the fitting model simplifies into a sum of scalar parameter decays~$\lambda^m$, one for each irreducible representation $\alpha$ in the decomposition. Namely,
\begin{equation}\label{eq:RB_fitting_model}
    \mc Q_\avg(m) = \sum_{\alpha} y_\alpha(E, \rho) \lambda_\alpha^m,
\end{equation}
where $\{y_\alpha\}_\alpha$ are constants depending on $m$, the input state~$\rho$, the POVM measurement~$E$ and the end gate $g_\eend$ resulting from the whole circuit. This gate can either be chosen to be a specific operator, or selected at random (which is a central requirement for shadow estimation). 
By absorbing the measurement errors in the preparation of $\rho$ and realization of $E$, they make RB protocols robust against this kind of errors.
A second highly favorable feature of RB is the fact that they are resilient against  gate-dependent noise~\cite{IndependentNoise,frameworkRB}. This is a key advantage compared to other calibration schemes that are less stable under gate-dependent errors~\cite{brieger2023StabilityClassicalShadows}.

Among the many RB variants, for this work the relevant ones are the \emph{uniform} type, where one considers a group as the gate set from which gates are drawn uniformly at random.  That is, for an input state $\rho$ and a given sequence length $m$, we produce measurements with respect to some POVM~$E$ of a circuit generated by $m$ random gates and a final gate~$g_\eend$,
\begin{equation}
    \mc Q (m,g_\eend,\rho,E) 
    \coloneqq 
    \braa{E}\phi(g_\eend g_\inv) \phi(g_m) \dotsb \phi(g_2) \phi(g_1) \kett{\rho},
\end{equation}
with $g_\inv := (g_m \cdots g_2 g_1)^{-1}$.
The resulting measurement data $Q (m,g_\eend,\rho,E)$ are then used in the post processing stage -- possibly together with some other classically computed filtering functions -- to generate the sample average~$\mc Q_\avg(m)$ for the fitting model in \cref{eq:RB_fitting_model}.
Regarding the choice of the gate set, for our calibration routine we will employ uniform RB procedures drawing from the Clifford group~\cite{KnillBenchmarking,MagesanPRL} and the CNOT-dihedral group~\cite{CNOTDihedral}.

%----------------------------------------------------
\subsection{Shadow estimation}\label{app:frame}
%----------------------------------------------------

Given an unknown quantum state we are interested in characterizing, the shadow estimation scheme~\cite{Shadows} allows for the estimation of many observables from a so-called classical shadows of the state, i.e.\ measurement data from an IC-POVM implemented by measuring in random orthonormal bases. 
In specific settings, e.g. fidelity estimation with multi-qubit Clifford rotations, the number of samples (or copies of the state) required scales efficiently in the system size and the estimation precision, and logarithmically with respect to the number of observables. 

\medskip 

To carry out the procedure, we first randomly select an element $g \in \mc C$, and apply the physical quantum channel $\phi(g)$ on $\rho$. Then, we measure in the basis $\{b\}_b$ obtaining an outcome $b'$. 
Finally, we apply the inverse of the frame operator to the classically and efficiently stored object $\Ad^\dagger (g) \kett{b'}$. Formally, a \emph{classical shadow} of the quantum state $\rho$ is constructed by
\begin{equation}
	\kett{\hat \rho \, (g,b')} =  \mc M_\mathrm{ideal}^{-1}  \Ad^\dagger (g)\kett{b'},
\end{equation} 
where the frame operator $\mc M_\mathrm{ideal}$ is defined and calculated in Ref.~\cite{scott2006tight} as
\begin{equation}\label{eq:uncalibrated_frame}
    \mc M_\mathrm{ideal}
    \coloneqq 
    \EE_{\mc C} \Ad^\dagger (g) B \, \Ad(g)
    =
    \left( \frac{\Tr[\Pi_\triv B]}{\Tr[\Pi_\triv]} \Pi_\triv + \frac{\Tr[(\Pi_\adj B)]}{\Tr[\Pi_\adj]} \Pi_\adj \right) \\
    =
    \Pi_\triv + \frac{1}{d+1} \Pi_\adj,
\end{equation}
with 
\begin{equation}
\Pi_\triv:= {d}^{-1} \sum_{j,k=1}^d \kettbraa{j}{k}
\end{equation}
being the vectorization representation of the rank-1 projection onto the identity operator connected to the trivial invariant subspace, and consequently $\Pi_\adj = \1_{d^2}- \Pi_\triv$ the projection onto its complement subspace related to the adjoint irreducible representation. 

\medskip

From $N_1$ sets of $N_2$ different classical shadows 
$\hat \rho_1, \dots\hat \rho_{N_2}, \hat \rho_{N_2+1} ,\dots, \hat \rho_{2N_2},\dots \hat \rho_{N_1\cdot N_2}$ 
one first constructs $N_1$ objects 
\begin{equation}
   o_i (k) = \frac{1}{N_2} \sum_{j=1}^{N_2} \braakett{O_i} {\hat \rho_{j,k}}, \qquad k=1,\dots, N_1,
\end{equation}
for every observable $O_i$ in a set $\{O_i\}_{i=1}^M$.
One than constructs a median-of-means,
\begin{equation}
    \hat o_i (N_1,N_2)= \mathrm{median} \left\{ o_i(1), \dots, o_i(N_1) \right\},
\end{equation}
so that one can formulate the following precise statement~\cite{Shadows,lugosi2019mean}.
Set 
\begin{equation} 
    N_1= 2 \log(2M/\delta) \quad \text{and} \quad N_2 = \frac{34}{\varepsilon^2} \max_i \norm{O_i-\braakett{\1}{O_i} \frac{\1}{2^n} }_\mathrm{shadow}^2. 
    \end{equation} 
    Then a collection of $N= N_1 \cdot N_2$ independent classical shadows allow for accurately predicting all features via median-of-means prediction,
    \begin{equation}
        \abs{\hat o_i(N_1,N_2), - \braakett{O_i}{\rho} } \leq \varepsilon \quad \text{for all} \quad i=1, \dots, M,
\end{equation}
with probability at least $1-\delta$.
In the norm $\norm{\, \cdot \, }_\mathrm{shadow}$ denotes the shadow norm defined in Ref.~\cite{Shadows}.
The most significant limitation of the scheme is the shadow norm of the observables.

\medskip 

Now back to the reconstruction of the classical shadows, we note that the \emph{frame operator} is a central object for the scheme. 
Formally, taking into account a noisy physical implementation of the quantum operators, the frame operator is given by
\begin{equation}\label{eq:real_M}
	\mc M \coloneqq \EE_\GG \,  \Ad^\dagger(g) B \, \phi (g) ,
\end{equation}
where the unitary group $\GG$ generating the IC-POVM from the computational basis measurement is usually the Clifford group $\CC$. 
The original description of  shadow procedure estimation makes use of analytical expressions for the ideal frame operator $\mc M_\mathrm{ideal}$ with $\Phi =\Ad$, \cref{eq:uncalibrated_frame}. 
It does not account for noise and errors in the gate implementation. 

This misalignment will inevitably lead to an bias in the state shadow estimator. 
Under the assumption of a gate-independent noise channel $\Lambda$ acting after the ideal unitary, and the trace-preservation assumption implying $\Tr[\Pi_\triv \Lambda]=1$, the physical frame operator in \cref{eq:real_M} becomes~\cite{RobustShadows}
\begin{aligneq}\label{eq_app:natM}
	\natM 
	&\coloneqq 
	\EE_\CC \Ad^\dagger (g) B \, \phi(g)
	= 
    \left( \frac{\Tr[\Pi_\triv B\Lambda]}{\Tr[\Pi_\triv]} \Pi_\triv + \frac{\Tr[\Pi_\adj B \Lambda]}{\Tr[\Pi_\adj]} \Pi_\adj \right) \\
	&=
	\Pi_\triv + \frac{\lambda_Z(\Lambda)}{d+1} \Pi_\adj ,
\end{aligneq}
where we have introduced the \emph{$Z$-parameter} $\lambda_Z(\Lambda)$ \footnote{The $Z$-parameter $\lambda_Z(\Lambda)$ for a gate-independent noise channel $\Lambda$ is directly related to the $Z$-\emph{fidelity} $F_Z$ defined in \cite{RobustShadows} via $\lambda_Z(\Lambda) = (dF_Z(\Lambda) - 1)/(d-1)$. It is also proportional to the depolarizing parameter $f(\Lambda)$ used in \cite{koh2022ClassicalShadowsNoise} via $\lambda_Z(\Lambda) = (d+1)f(\Lambda)$.} of the noise channel $\Lambda$ as
\begin{equation}\label{eq:lambda_Z}
    \lambda_Z(\Lambda) \coloneqq \frac{\Tr[\Pi_Z \Lambda]}{\Tr[\Pi_Z]} 
    = 
    \frac{\Tr[\Pi_Z \Lambda]}{d-1} ,
\end{equation}
with $\Pi_Z$ being the projector onto Pauli strings made of identities and $Z$ operators only, that is, 
\begin{equation}\label{def:Pi_Z}
\Pi_Z \coloneqq B - \Pi_\triv = \Pi_\adj B = \sum_{p\in \{\1,Z\}^{\otimes n} \backslash \1} \kett{p/\sqrt{2^n}}\braa{p/\sqrt{2^n}} .
\end{equation} 
The $Z$-parameter will play a central role throughout the following constructions. 
Using \cref{eq_app:natM}, we can quantify the error bias of a shadow estimation procedure with an uncalibrated inverse according to the following expression.

\begin{lemma}[Error bias for the uncalibrated frame operator~\cite{RobustShadows}]\label{lemma:diff_uncalibrated_inv}
\begin{equation}
	\mc M_\mathrm{ideal}^{-1} \, \natM -\1 
	=
	\Big[\Pi_\triv + (d+1)\Pi_\adj\Big] 
	\Big[\Pi_\triv + \frac{\lambda_Z(\Lambda)}{d+1} \Pi_\adj \Big] - \1 \\ 
	=
	\left( \lambda_Z(\Lambda) - 1 \right) \Pi_\adj.
\end{equation}
\end{lemma}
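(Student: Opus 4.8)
The plan is to exploit the fact that both frame operators are simultaneously diagonalized by the orthogonal decomposition $\1 = \Pi_\triv + \Pi_\adj$ arising from Schur's lemma, so that the entire statement reduces to scalar arithmetic on each invariant subspace. First I would record the three structural properties on which the computation rests: idempotency $\Pi_\triv^2 = \Pi_\triv$ and $\Pi_\adj^2 = \Pi_\adj$, mutual orthogonality $\Pi_\triv \Pi_\adj = \Pi_\adj \Pi_\triv = 0$, and completeness $\Pi_\triv + \Pi_\adj = \1$. These hold because $\Pi_\triv$ and $\Pi_\adj$ are the projectors onto the two inequivalent irreducible components of the $\Ad$ representation of the Clifford group, with $\Pi_\adj \coloneqq \1 - \Pi_\triv$ by definition, as recalled in \cref{app:notation}.

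Second, I would invert $\mc M_\mathrm{ideal}$. Since \cref{eq:uncalibrated_frame} writes $\mc M_\mathrm{ideal} = \Pi_\triv + \tfrac{1}{d+1}\Pi_\adj$ as a linear combination of the two orthogonal projectors, it acts as the scalar $1$ on the range of $\Pi_\triv$ and as $\tfrac{1}{d+1}$ on the range of $\Pi_\adj$. Inverting these eigenvalues on the respective subspaces yields immediately
\begin{equation}
\mc M_\mathrm{ideal}^{-1} = \Pi_\triv + (d+1)\,\Pi_\adj,
\end{equation}
which one verifies by multiplying the two expressions and using idempotency and orthogonality to confirm the product equals $\1$.

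Third, I would compose $\mc M_\mathrm{ideal}^{-1}$ with $\natM$ from \cref{eq_app:natM}. Expanding
\begin{equation}
\bigl[\Pi_\triv + (d+1)\Pi_\adj\bigr]\bigl[\Pi_\triv + \tfrac{\lambda_Z(\Lambda)}{d+1}\Pi_\adj\bigr]
\end{equation}
and discarding the two cross-terms $\Pi_\triv\Pi_\adj$ and $\Pi_\adj\Pi_\triv$, which vanish by orthogonality, the surviving diagonal contributions give $\Pi_\triv + \lambda_Z(\Lambda)\,\Pi_\adj$, where the factor $(d+1)$ cancels against $\tfrac{1}{d+1}$. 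Subtracting $\1 = \Pi_\triv + \Pi_\adj$ then annihilates the trivial block and leaves $(\lambda_Z(\Lambda) - 1)\Pi_\adj$, as claimed.

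There is no genuine obstacle here: the whole argument is the elementary spectral calculus of two commuting orthogonal projectors. The only points meriting a line of justification are that $\Pi_\triv$ and $\Pi_\adj$ are indeed mutually orthogonal and complete --- precisely the content of the multiplicity-free Schur decomposition of $\Ad$ over the Clifford group --- and that the gate-independent, trace-preserving noise model underlying \cref{eq_app:natM} is what forces $\natM$ into this two-parameter diagonal form, with the trivial eigenvalue pinned to $1$ via $\Tr[\Pi_\triv \Lambda] = 1$.
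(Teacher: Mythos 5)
Your proof is correct and follows exactly the route the paper takes: the lemma's middle expression in the paper is precisely your substitution of $\mc M_\mathrm{ideal}^{-1} = \Pi_\triv + (d+1)\Pi_\adj$ and $\natM$ from \cref{eq_app:natM}, and the final equality is the same projector algebra (idempotency, orthogonality, completeness) you spell out. There is nothing missing; you merely make explicit the elementary spectral computation the paper leaves implicit in the displayed chain of equalities.
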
 
\noindent Looking at \cref{lemma:diff_uncalibrated_inv}, we observe that the $Z$-parameter $\lambda_Z$ is the element distinguishing the calibrated frame operator $\hatM$ from its uncalibrated version. 
Note that if $\lambda_Z(\Lambda)$ is (approximately) equal to $1$, the shadow estimation scheme does not need any sort of calibration. 

\bigskip

In the following \cref{sec:CNOT-dihedral}, we will present two calibration protocols to retrieve $\lambda_Z(\Lambda)$ through an RB procedure and construct our estimator of the frame operator as
\begin{equation}\label{eq:RB_calibrated_frame}
    \hatM 
    \coloneqq 
    \Pi_\triv + \frac{\lambda_Z (\Lambda)}{d+1} \Pi_\adj.
\end{equation}
The two strategies differ by the first gate of the random sequence. Indeed, in \cref{sec:robust_CNOT_dihedral_scheme} we will illustrate how adding a single random global Clifford gate at the beginning of the sequence and by slightly modifying the post-processing phase we can construct shadows of an unknown quantum state \emph{and} calibrate the frame operator in one go.

%===================================================
%===================================================
\section{CNOT-dihedral group calibration}\label{sec:CNOT-dihedral}
%===================================================
%===================================================

Here the assumption is that the  gate-independent noise channel of the Clifford group is the same as the gate-independent noise of the CNOT-dihedral subgroup. 
This assumption is more subtle than the standard gate-independent noise assumption in the following sense. The gate-independent noise channel adopted to describe the gate implementation of a gate set can be considered as the noise averaged over the set. Requiring that the average noise of a subgroup correspond to the average noise of the entire group is hence a stricter assumption than the standard gate independent noise assumption. In practical sense however we can match the noise for the CNOT-dihedral subgroup to the one of the whole Clifford group by compiling the CNOT-dihedral gates using some Clifford gates.

\medskip

The $\Ad$ representation of the CNOT-dihedral group splits into three irreducible representations: (i)~the trivial representation, (ii)~the irreducible projection carrying the invariant subspace of identities and $Z$ operators, whose projector is $\Pi_Z$ as per \cref{def:Pi_Z},
and 
(iii)~the orthogonal complement, given by $\Pi_\ort := \1_{4^n} - \Pi_{\triv} - \Pi_{Z}$. 
The representation we are interested in is the one carrying the Pauli-$Z$ operators, since the projection of the noise channel $\Lambda$ onto $\Pi_Z$ defines $\lambda_Z(\Lambda)$ as per \cref{eq:lambda_Z}.

%------------------------------------------------------
\subsection{\textit{Z}-parameter extraction with randomized benchmarking}\label{app:RB_dihedral}
%------------------------------------------------------

The protocol involves the construction of random circuits of different lengths sampling from the CNOT-dihedral subgroup to be applied on the initial state $\rho_0$.  
Dihedral RB for single qubits has been introduced in Ref.~\cite{DugWallEme15} and extended to muti-qubits as CNOT-dihedral RB in Ref.~\cite{CNOTDihedral}.  
We will take a slightly different approach than Ref.~\cite{CNOTDihedral} and use filter functions \cite{frameworkRB} to isolate the decay parameter associated to the different irreps of the generated group. 
This procedure will return the exact value for $\lambda_Z(\Lambda)$ that we can use to calibrate the frame operator.
Re recall that the CNOT-dihedral group is generated by $\mathbb{K}= \langle \mathrm{CNOT}, S, X \rangle$, with $S$ being the phase gate.
We choose $\rho_0=\ketbra{0}{0}^{\otimes n}$ such that $\Pi_\ort\kett{\rho_0}=0$. Hence, the RB fitting model is (already derived in the general form with two decay parameters in Ref.~\cite{CNOTDihedral})

\begin{lemma}[Fitting model for CNOT-dihedral RB]
The fitting model for the CNOT-dihedral RB scheme takes the approximate form $\mc Q(m) \approx a_0 \, (\lambda_Z)^m +b_0$, with $\lambda_Z(\Lambda)$ as defined in \cref{eq:lambda_Z} and where $a_0$ and $b_0$ constants absorbing SPAM errors.
\end{lemma}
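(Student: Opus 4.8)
The plan is to run the standard randomized-benchmarking sequence-averaging (twirl) argument, specialized to the three-irrep structure of the CNOT-dihedral $\Ad$ representation, and then feed in the two structural inputs specific to this lemma: trace preservation of $\Lambda$, which pins the trivial decay to $1$, and the choice $\rho_0=\ketbra{0}{0}\tn{n}$, which annihilates the orthogonal irrep. First I would write the sequence average under the gate-independent noise model $\phi(g)=\Lambda\Ad(g)$ as
\begin{equation}
\mc Q(m) = \EE_{g_1,\dots,g_m}\braa{E}\,\Lambda\Ad(g_\inv)\,\Lambda\Ad(g_m)\cdots\Lambda\Ad(g_1)\kett{\rho_0},
\end{equation}
with $g_\inv=(g_m\cdots g_1)^{-1}$ and the gates drawn uniformly from $\KK$.

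Next I would fold the noise into a product of conjugated channels. Writing $\Lambda^{(u)}\coloneqq\Ad^\dagger(u)\Lambda\Ad(u)$ and $u_j\coloneqq g_j\cdots g_1$ for the partial products, an induction on $j$ shows $\Lambda\Ad(g_j)\cdots\Lambda\Ad(g_1)=\Ad(u_j)\,\Lambda^{(u_j)}\cdots\Lambda^{(u_1)}$; combined with the inverse gate $\Lambda\Ad(g_\inv)=\Lambda\Ad(u_m^{-1})$ this collapses the circuit to $\Lambda\,\Lambda^{(u_m)}\cdots\Lambda^{(u_1)}$. Since $(g_1,\dots,g_m)\mapsto(u_1,\dots,u_m)$ is a bijection of $\KK^m$ (with $g_j=u_ju_{j-1}^{-1}$), the partial products are jointly uniform, so each conjugated channel averages independently. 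Schur's lemma in the form stated in \cref{app:notation} gives $\EE_u\Lambda^{(u)}=\sum_\alpha\lambda_\alpha\Pi_\alpha$ with $\lambda_\alpha=\Tr[\Pi_\alpha\Lambda]/\Tr[\Pi_\alpha]$, and since the $\Pi_\alpha$ are orthogonal projectors the $m$-fold average is $\sum_\alpha\lambda_\alpha^m\Pi_\alpha$. Hence
\begin{equation}
\mc Q(m) = \sum_{\alpha\in\{\triv,Z,\ort\}}\lambda_\alpha^m\,\braa{E}\Lambda\,\Pi_\alpha\kett{\rho_0}.
\end{equation}

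It then remains to evaluate the three terms. Trace preservation of $\Lambda$ gives $\Tr[\Pi_\triv\Lambda]=1=\Tr[\Pi_\triv]$, so $\lambda_\triv=1$ and this term contributes a constant $b_0\coloneqq\braa{E}\Lambda\Pi_\triv\kett{\rho_0}$. The $Z$-irrep term carries exactly the decay parameter $\lambda_Z=\Tr[\Pi_Z\Lambda]/(d-1)$ of \cref{eq:lambda_Z}, with coefficient $a_0\coloneqq\braa{E}\Lambda\Pi_Z\kett{\rho_0}$. Finally, since $\ketbra{0}{0}=(\1+Z)/2$, the state $\rho_0$ is a combination of Pauli strings built from $\1$ and $Z$ only, so $\kett{\rho_0}\in\ran\Pi_\triv\oplus\ran\Pi_Z$ and therefore $\Pi_\ort\kett{\rho_0}=0$; the orthogonal term drops out entirely. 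Collecting the surviving contributions yields $\mc Q(m)=a_0\,\lambda_Z^m+b_0$, with $a_0,b_0$ absorbing the SPAM-dependent factors, as claimed. Using filter functions as in \cref{app:RB}, one may further isolate the $\lambda_Z^m$ component directly.

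The main things to justify carefully are representation-theoretic rather than computational. The derivation hinges on the claim that the $\Ad$ representation of $\KK$ decomposes as $\triv\oplus Z\oplus\ort$ with each component irreducible and of multiplicity one; only multiplicity-freeness guarantees that each block of $\EE_u\Lambda^{(u)}$ is a scalar $\lambda_\alpha$ rather than a matrix, and hence that $\mc Q_\avg$ is a sum of scalar exponentials (this structure is established in Ref.~\cite{CNOTDihedral}). The remaining subtlety is the status of the ``$\approx$'': under exact gate-independent, trace-preserving noise the identity is in fact exact, and the approximation sign accounts for finite-sample estimation of $\mc Q_\avg$ together with deviations from the idealized gate-independent noise model in practice.
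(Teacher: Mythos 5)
Your proof is correct and follows essentially the same route as the paper's: the bijective re-indexing into partial products, the collapse to the $m$-th power of the twirled channel, Schur's lemma over the three irreps $\{\triv, Z, \ort\}$, trace preservation forcing $\lambda_\triv = 1$, and the choice $\rho_0=\ketbra{0}{0}\tn{n}$ eliminating the $\ort$ contribution. The one substantive difference is where the approximation sign comes from. The paper carries noisy SPAM objects $\widetilde E$ and $\widetilde \rho_0$ through the entire computation, so the $\ort$ term does not vanish identically; it is only suppressed under the assumption $\braa{\widetilde E}\Lambda\Pi_\ort\kett{\widetilde \rho_0}\approx 0$, i.e., that state preparation and measurement are not too noisy. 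Your exact annihilation $\Pi_\ort\kett{\rho_0}=0$ holds for the ideal state but not for the actually prepared $\widetilde \rho_0$, so in the SPAM-error setting that the lemma explicitly addresses ($a_0$, $b_0$ ``absorbing SPAM errors'') a residual $\lambda_\ort^m$ decay survives and must be argued small; attributing the ``$\approx$'' solely to finite sampling and model deviations misses this point. The fix is immediate --- a small SPAM perturbation keeps the $\ort$ overlap small, since it vanishes exactly in the ideal limit --- but this suppressed third decay is the actual content of the approximation in the paper's proof.
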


\begin{proof}
We have
\begin{equation}
\begin{aligned}
    \mc Q_\avg (m) &= \EE_{k_1,\dots k_m \in \KK} \braa{\widetilde E} \phi(k_\inv) \phi(k_m) \dots \phi(k_2) \phi(k_1) \kett{\widetilde \rho_0} \\
    &= \EE_{h_1,\dots h_m \in\KK} \braa{\widetilde E} \Lambda \Ad(k_\inv) \Lambda \Ad(k_m) \dots \Lambda \Ad(k_2) \Lambda \Ad(k_1) \kett{\widetilde \rho_0} \\
    &= \EE_{k'_1,\dots k'_m \in\KK} \braa{\widetilde E} \Lambda \Ad(k'_m)^\dagger \Lambda \Ad(k'_m) \, \Ad(k'_{m-1})^\dagger \Lambda \Ad(k'_{m-1})\dots \Ad(k'_1)^\dagger \Lambda \Ad(k'_1) \kett{\widetilde \rho_0}\\
    &= \braa{\widetilde E} \Lambda \Big(\EE_{k' \in \KK} \Ad(k')^\dagger \Lambda \Ad(k') \Big)^m\kett{\widetilde \rho_0} \\
    &= \braa{\widetilde E} \Lambda \Big( \sum_{\alpha\in \{\triv, Z, XY\}} \frac{\Tr[\Pi_\alpha \Lambda]}{\Tr[\Pi_\alpha]} \Pi_\alpha \Big)^m\kett{\widetilde \rho_0} \\
    &= \sum_{\alpha\in \{\triv, Z, \ort\}}   \braa{\widetilde E} \Lambda \Pi_\alpha \kett{\widetilde \rho_0} \, \lambda_\alpha(\Lambda)^m.
\end{aligned}
\end{equation}

\noindent Assuming that the implementations of $E$ and $\rho_0$ are not too noisy, $ \braa{\widetilde E} \Lambda \Pi_\ort \kett{\widetilde \rho_0} \approx 0$
and we can neglect the $\ort$ subspace, whereas $\lambda_\triv(\Lambda)=1$.
Alternatively, we can use filter functions to isolate the sought parameter.
This yields 
\begin{equation}
    \mc Q_\avg (m) \approx \braa{\widetilde E} \Lambda \Pi_Z \kett{\widetilde \rho_0} \, \lambda_Z(\Lambda)^m + \braa{\widetilde E} \Lambda \Pi_\triv \kett{\widetilde \rho_0},
\end{equation}
which ends the proof.
\end{proof}

Hence, under the above assumptions on the noise model and up to an imprecision due to the RB parameter assessment, we can learn the $Z$-parameter with high precision and in a scalable fashion by using a CNOT-dihedral randomized benchmarking experiment~\cite{carignan2015characterizing,CNOTDihedral} to fully reconstruct~$\natM$.

\begin{protocol}{CNOT-dihedral $\lambda_Z(\Lambda)$ extraction}

    \\[3pt]
    
    \textbf{Assumption:} Gate-independent noise channel $\Lambda$ for the CNOT-dihedral gate-set.
    
	\smallskip 
	
	\textbf{Setup:} Experimentally implemented CNOT-dihedral gates $\{\phi (k)\}=\{\Lambda\Ad(k)\}$.

	\smallskip

	\textbf{Output:} Parameter $\lambda_Z(\Lambda)$.
	
	\smallskip
	
	\textbf{Procedure:}
	 
	 \smallskip
	 
	\begin{enumerate}[label=(\roman*)]
            \item Fix a sequence length $m$.
            \item Prepare $\rho_0=\ketbra{0}{0}^{\otimes n}$.
		  \item Apply a random sequence of CNOT-dihedral gates with a final inverse gate,
		  $\phi(k_\inv)\phi(k_m) \phi(k_{m-1}) \dotsb \phi(k_1)$ to the $\rho_0$.
		  \item Measure with respect to POVM $E$.
            \item Carry out the above procedure $\mc O(1)$ many times and produce a sample average.
            \item Increase the value of $m$ and repeat the above.
		  \item Fit the sample averages for different $m$ to the exponential decay model and extract the RB decay parameter~$\lambda_Z(\Lambda)$ .
	\end{enumerate}
\end{protocol}

%------------------------------------------------------
\subsection{Noise-mitigated shadow scheme}\label{sec:robust_CNOT_dihedral_scheme}
%------------------------------------------------------

While the above scheme is suited to extract the $Z$-parameter, 
rotating the computational basis with the CNOT-dihedral group will not generate an IC-POVM. 
Thus, it does not produce a state shadow. 
However, by slightly modifying the scheme with the addition of a general, random Clifford gate at the beginning of each sequence, we can use  the \emph{same} measurement data  to produce both shadows of the target unknown quantum state \emph{and} the $Z$-parameter. With this procedure, we can thus guarantee that we are precisely addressing the noise affecting the classical shadows, and correct the frame operator.

\medskip 

We proceed as follows. We use our target unknown quantum state $\rho$ as input.
We produce a random circuit of length $m+1$, where the first gate $c$ is a random Clifford, by $m$ random gates $k_1,\dots,k_m$ drawn from~$\KK$. We denote the composition of all gates in the circuit by $g_\eend = k_m \cdots k_1 \cdot c$.
We then measure in the computational basis. 
From the outcome $b'$ we construct \emph{classically} the filter function 
\begin{equation}
    (d+1)\braa{E} \Pi_\adj \Ad^\dagger(g_\eend)\kett{b'}.
\end{equation}
We repeat this procedure multiple times, for varying circuit lengths~$m+1$. The sample average $\mc Q (m+1)$ will then be used to extract the $Z$-parameter according to a suitable fitting model.

\begin{lemma}[Fitting model for shadow CNOT-dihedral scheme]\label{shadow_dihedral_fitting_model}
  The theoretical average of the survival probability of the self-calibrating shadow estimation protocol obeys an exponential law in the sequence length $m$,
  \begin{equation}
    \mc Q(m+1) 
    = 
    \braa{E} \Pi_Z \kett{\rho} \, \lambda_Z(\Lambda)^{m+1} .
    \end{equation}
    
\end{lemma}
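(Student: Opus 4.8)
The plan is to compute the theoretical sample average $\mc{Q}(m+1)$ directly as the expectation, over the random Clifford $c$, the random CNOT-dihedral gates $k_1,\dots,k_m$, and the Born-distributed outcome $b'$, of the classically evaluated filter $(d+1)\braa{E}\Pi_\adj\Ad^\dagger(g_\eend)\kett{b'}$. First I would carry out the average over $b'$: since the outcome probability is $\braa{b'}\phi(k_m)\cdots\phi(k_1)\phi(c)\kett{\rho}$ and $\sum_{b'}\kettbraa{b'}{b'}=B$, the outcome average turns the filter into $(d+1)\braa{E}\Pi_\adj\Ad^\dagger(g_\eend)\,B\,\phi(\vec g)\kett{\rho}$, where $\phi(\vec g)=\phi(k_m)\cdots\phi(k_1)\phi(c)$. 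This reduces the claim to evaluating the twirl $\EE_{c,\vec k}\bigl[\Ad^\dagger(g_\eend)B\phi(\vec g)\bigr]$, i.e.\ the physical frame operator $\natM_{m+1}$ of the length-$(m+1)$ circuit.

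Next I would insert the gate-independent noise model $\phi(g)=\Lambda\Ad(g)$ and use that every gate is Clifford, so the ideal adjoint representation factorizes, $\Ad^\dagger(g_\eend)=\Ad^\dagger(c)\Ad^\dagger(k_1)\cdots\Ad^\dagger(k_m)$. The expression then becomes a nested alternation of $\Ad^\dagger(k_i)(\,\cdot\,)\Lambda\Ad(k_i)$ layers wrapped around the central $B$, all sitting inside one outer $\Ad^\dagger(c)(\,\cdot\,)\Lambda\Ad(c)$ layer. I would evaluate the CNOT-dihedral layers from the inside out with Schur's lemma for $\KK$, whose $\Ad$-representation decomposes as $\triv\oplus Z\oplus\ort$: each average $\EE_{k}\Ad^\dagger(k)M\Ad(k)=\sum_{\alpha\in\{\triv,Z,\ort\}}\tfrac{\Tr[\Pi_\alpha M]}{\Tr[\Pi_\alpha]}\Pi_\alpha$ is block-diagonal in these projectors.

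The crux is the innermost layer, which twirls $B\Lambda$. Here I would use the identities $\Pi_\triv B=\Pi_\triv$, $\Pi_Z B=\Pi_Z$ and, decisively, $\Pi_\ort B=0$ (which follows from $B=\Pi_\triv+\Pi_Z$ and the mutual orthogonality of the three projectors). The last identity annihilates the $\ort$ sector, so the innermost twirl returns $\Pi_\triv+\lambda_Z\Pi_Z$; every further CNOT-dihedral layer acts on a block-diagonal operator and, because $\Pi_\alpha\Pi_\beta=\delta_{\alpha\beta}\Pi_\alpha$, simply multiplies the $Z$-coefficient by $\lambda_Z=\Tr[\Pi_Z\Lambda]/(d-1)$ while trace-preservation ($\Tr[\Pi_\triv\Lambda]=1$) pins the trivial coefficient to $1$. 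After all $m$ layers this yields $\Pi_\triv+\lambda_Z^{m}\Pi_Z$. The final Clifford twirl over $c$ (now with irreps $\triv\oplus\adj$) applied to $(\Pi_\triv+\lambda_Z^m\Pi_Z)\Lambda$ uses $\Pi_\adj\Pi_Z=\Pi_Z$ and $\Tr[\Pi_Z\Lambda]=(d-1)\lambda_Z$ to produce $\Pi_\triv+\tfrac{\lambda_Z^{m+1}}{d+1}\Pi_\adj=\natM_{m+1}$. Applying the filter prefactor $(d+1)\braa{E}\Pi_\adj$ annihilates the trivial block ($\Pi_\adj\Pi_\triv=0$) and cancels the factor $1/(d+1)$, leaving the single exponential $\lambda_Z^{m+1}$ times the overlap $\braa{E}\Pi_Z\kett{\rho}$, where the surviving projector is identified via $\Pi_\adj B=\Pi_Z$ (the measurement $B$ sitting adjacent to the adjoint filter).

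I expect the main obstacle to be the careful bookkeeping of the two distinct twirls — the CNOT-dihedral averages over the bulk and the single Clifford average over the first gate — together with the placement of $B$ between them. The decisive structural point, and the reason the survival probability is a clean single exponential rather than a sum of several decays, is that $B$ kills the $\ort$ irrep ($\Pi_\ort B=0$) at the innermost layer, so that no $\lambda_\ort$ contamination ever enters and no SPAM-type approximation is needed; verifying this annihilation and tracking precisely how the $Z$-sector projector is handed off from the dihedral blocks to the Clifford blocks (and finally to the filter) is where the real work lies.
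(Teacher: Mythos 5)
Your proposal follows essentially the same route as the paper's proof: both reduce the outcome average to the twirl $\EE\bigl[\Ad^\dagger(g_\eend)\,B\,\phi(k_m)\cdots\phi(k_1)\phi(c)\bigr]$, insert the gate-independent noise model $\phi(g)=\Lambda\Ad(g)$, and evaluate via Schur's lemma over the three CNOT-dihedral irreps followed by the two Clifford irreps, with the orthogonality relations (in particular $\Pi_\ort B=0$) eliminating every decay except $\lambda_Z$. The only structural difference is cosmetic: the paper collapses the $m$ dihedral averages by re-indexing into the product $\bigl(\EE_k \Ad^\dagger(k)B\Lambda\Ad(k)\bigr)\bigl(\EE_k \Ad^\dagger(k)\Lambda\Ad(k)\bigr)^{m-1}$, whereas you evaluate the nested, independent twirls from the inside out; these are equivalent.

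One detail in your last step is off, however. You identify the constant prefactor as $\braa{E}\Pi_Z\kett{\rho}$ ``via $\Pi_\adj B=\Pi_Z$, the measurement $B$ sitting adjacent to the adjoint filter,'' but $B$ is not adjacent to $\Pi_\adj$ (the propagator $\Ad^\dagger(g_\eend)$ sits in between), and by the time the filter $(d+1)\braa{E}\Pi_\adj$ acts, $B$ has already been consumed by the innermost dihedral twirl: the operator the filter meets is $\natM_{m+1}=\Pi_\triv+\tfrac{\lambda_Z(\Lambda)^{m+1}}{d+1}\Pi_\adj$, so your own computation yields the prefactor $\braa{E}\Pi_\adj\kett{\rho}$ — which is exactly what the paper's proof concludes. (Even commuting $\Pi_\adj$ through $\Ad^\dagger(g_\eend)$ to use $\Pi_\adj B=\Pi_Z$ does not rescue the claim: the final Clifford twirl then maps $\lambda_Z^m\Pi_Z\Lambda$ onto $\tfrac{\lambda_Z^{m+1}}{d+1}\Pi_\adj$, again producing $\Pi_\adj$.) Note that the lemma as printed, with $\Pi_Z$, is itself inconsistent with the paper's own proof; the two prefactors agree only when $E$ is diagonal in the computational basis, since then $\braa{E}\Pi_\ort=0$ and $\braa{E}\Pi_\adj=\braa{E}\Pi_Z$. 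Since the prefactor is an $m$-independent constant in either case, the substance of the lemma — a single exponential with decay rate $\lambda_Z(\Lambda)$ — is correctly established by your argument.
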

  
At the same time, the measurement outcome of a circuit of length~$m+1$ can be used to construct a calibrated classical shadows $\hat \rho_{m+1}$, that is, 
\begin{lemma}[Estimated frame operator for circuit shadows]\label{lemma:circuit_M_estimator}
    \begin{equation}
        \hat \rho_{m+1} = \hatM_{m+1}^{-1} \Ad^\dagger(g_\eend) \kett{b'}, 
    \end{equation}
    where in this instance the frame operator is calibrated with an exponential of the RB parameter, namely,
    \begin{equation}\label{eq:RB_calibrated_frame_onego}
        \hatM_{m+1} \coloneqq \Pi_\triv +  \frac{\lambda_Z(\Lambda)^{m+1}}{d+1}  \Pi_\adj .
    \end{equation} 
\end{lemma}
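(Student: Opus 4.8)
The target is the claim that the \emph{physical} frame operator of the length-$(m+1)$ self-calibrating circuit equals $\hatM_{m+1}$; once this is established, the estimator is unbiased because $\EE[\hat\rho_{m+1}]=\hatM_{m+1}^{-1}\mc M_{m+1}\kett{\rho}=\kett{\rho}$. The plan is therefore to compute
\[
  \mc M_{m+1}\coloneqq\EE_{c,\,k_1,\dots,k_m}\Ad^\dagger(g_\eend)\,B\,\phi(k_m)\cdots\phi(k_1)\phi(c)
\]
and to show that it reduces to \cref{eq:RB_calibrated_frame_onego}. Inserting the noise model $\phi(g)=\Lambda\Ad(g)$ from \cref{eq:channel_implementation} and $\Ad^\dagger(g_\eend)=\Ad^\dagger(c)\Ad^\dagger(k_1)\cdots\Ad^\dagger(k_m)$, the expression becomes an outer conjugation $\EE_c\Ad^\dagger(c)(\,\cdot\,)\Lambda\Ad(c)$ over the random Clifford $c$, wrapped around an inner CNOT-dihedral average that carries the single insertion of $B$.

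I would first carry out the inner average over $k_1,\dots,k_m$ by reusing the partial-product substitution $k'_j\coloneqq k_j\cdots k_1$ employed in the proof of the CNOT-dihedral RB fitting model, under which the $k'_j$ remain independent and uniform on $\KK$. The ``bra-side'' product $\Ad^\dagger(k_1)\cdots\Ad^\dagger(k_m)$ then telescopes to $\Ad^\dagger(k'_m)$, while the ``ket-side'' product turns into $\Lambda\Ad(k'_m)\prod_{j=m-1}^{1}[\Ad^\dagger(k'_j)\Lambda\Ad(k'_j)]$. The key point is that after this reindexing every $k'_j$ sits in exactly one factor, so the average factorizes into one twirl that still contains $B$ times $m-1$ plain noise twirls.

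Next I would evaluate each factor with the CNOT-dihedral Schur formula over the irreps $\{\triv,Z,\ort\}$. Writing $B=\Pi_\triv+\Pi_Z$ and using $\Pi_\ort\Pi_\triv=\Pi_\ort\Pi_Z=0$ together with trace preservation $\Tr[\Pi_\triv\Lambda]=1$, the $B$-factor evaluates to $S\coloneqq\EE_k\Ad^\dagger(k)B\Lambda\Ad(k)=\Pi_\triv+\lambda_Z(\Lambda)\Pi_Z$, while each remaining factor gives $T\coloneqq\EE_k\Ad^\dagger(k)\Lambda\Ad(k)=\Pi_\triv+\lambda_Z(\Lambda)\Pi_Z+\lambda_\ort\Pi_\ort$. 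Since these three projectors are mutually orthogonal and $S$ has no $\Pi_\ort$-component, the chain collapses to $S\,T^{m-1}=\Pi_\triv+\lambda_Z(\Lambda)^m\Pi_Z$, so the $\ort$-sector is annihilated. For the outer Clifford twirl I would re-expand $\Pi_\triv+\lambda_Z(\Lambda)^m\Pi_Z=(1-\lambda_Z(\Lambda)^m)\Pi_\triv+\lambda_Z(\Lambda)^m B$ and apply $\EE_c\Ad^\dagger(c)\Pi_\triv\Lambda\Ad(c)=\Pi_\triv$ (from $\Ad(c)\kett{\1}=\kett{\1}$ and trace preservation) together with $\EE_c\Ad^\dagger(c)B\Lambda\Ad(c)=\natM=\Pi_\triv+\tfrac{\lambda_Z(\Lambda)}{d+1}\Pi_\adj$ of \cref{eq_app:natM}, to conclude
\[
  \mc M_{m+1}=(1-\lambda_Z(\Lambda)^m)\Pi_\triv+\lambda_Z(\Lambda)^m\,\natM=\Pi_\triv+\frac{\lambda_Z(\Lambda)^{m+1}}{d+1}\Pi_\adj=\hatM_{m+1}.
\]

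I expect the main obstacle to be the bookkeeping that bridges the two different irrep decompositions at play: the inner average naturally lives in the finer CNOT-dihedral decomposition $\{\triv,Z,\ort\}$, whereas $\hatM_{m+1}$ is phrased in the coarser Clifford decomposition $\{\triv,\adj\}$. The identity $B=\Pi_\triv+\Pi_Z$ is the hinge that converts the CNOT-dihedral output $\Pi_\triv+\lambda_Z(\Lambda)^m\Pi_Z$ into a form the final Clifford twirl can absorb. One must also keep explicit the standing assumption of \cref{sec:CNOT-dihedral} that the CNOT-dihedral and Clifford implementations share the same gate-independent channel $\Lambda$, which is precisely what lets a single $\lambda_Z(\Lambda)$ thread through all $m+1$ twirls and makes the $(m+1)$-th power appear.
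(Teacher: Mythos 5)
Your proposal is correct and follows essentially the same route as the paper's proof: both reduce the physical frame operator by reindexing the CNOT-dihedral sequence so that the inner average factorizes into Schur twirls over the irreps $\{\triv, Z, \ort\}$, collapsing to $\Pi_\triv + \lambda_Z(\Lambda)^m\Pi_Z$ by projector orthogonality and trace preservation, and then apply the outer Clifford twirl to obtain $\Pi_\triv + \tfrac{\lambda_Z(\Lambda)^{m+1}}{d+1}\Pi_\adj$. The only cosmetic difference lies in your last step, where you rewrite $\Pi_\triv+\lambda_Z(\Lambda)^m\Pi_Z=(1-\lambda_Z(\Lambda)^m)\Pi_\triv+\lambda_Z(\Lambda)^m B$ and reuse the known identity $\EE_{c}\,\Ad^\dagger(c)B\Lambda\Ad(c)=\natM$ from \cref{eq_app:natM}, whereas the paper evaluates the Clifford twirl of each $\Pi_\alpha\Lambda$ directly via Schur's lemma; the two computations are identical in substance.
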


\begin{proof}[Proof of \cref{shadow_dihedral_fitting_model}]
    Let $k_\inv = (k_m \cdots k_1)^{-1}$ and $g_\eend = k_m \cdots k_1 \cdot c$. 
    The measured signal in expectation reads
    \begin{align}
        \mc Q_{\mathrm{avg}}(m+1) 
         &= 
        \EE_{c \in \mc C}\EE_{k_1,\dots,k_m \in \mathbb{K}} (d+1)
        \braa{E}\Pi_{\adj}\Ad(g_\eend)^\dagger B \phi(k_m) \dots \phi(k_1) \phi(c) \kett{\rho} \\
        \nonumber
        &=
        (d+1)\EE_{c \in \mc C}\braa{E}\Pi_{\adj}\Ad^\dagger(c) \Ad(k_\inv) B \Lambda \Ad(k_\eend) \Big(\EE_{k \in \KK} \Ad^\dagger(k)\Lambda \Ad(k)\Big)^{m-1} \Lambda \Ad(c) \kett{\rho} \\
        \nonumber
        &=
        (d+1)\EE_{c \in \mc C}\braa{E}\Pi_{\adj}\Ad^\dagger(c) 
        \sum_{\alpha\in \{\triv, Z, \ort \}} \frac{\Tr[B \Lambda \Pi_\alpha]}{\Tr[\Pi_\alpha]} \Pi_\alpha 
        \Big( \sum_{\alpha\in \{\triv, Z, \ort\}} \frac{\Tr[\Pi_\alpha \Lambda]}{\Tr[\Pi_\alpha]} \Pi_\alpha \Big)^{m-1} \Lambda \Ad(c) \kett{\rho} \\
         \nonumber
        &=
        (d+1)\sum_{\alpha\in \{\triv, Z, \ort \}} \ \EE_{c \in \mc C}\braa{E}\Pi_{\adj} \Ad^\dagger(c)      \Pi_\alpha  \Lambda\Ad(c)\kett{\rho}  \frac{\Tr[(\Pi_\triv + \Pi_Z)\Pi_\alpha \Lambda ]}{\Tr[\Pi_\alpha]} \, \lambda_\alpha(\Lambda)^{m-1}\\
        \nonumber
        &=
    	(d+1)\sum_{\alpha\in \{\triv, Z, XY\}} \braa{E}\Pi_{\adj} \Pi_{\adj} \kett{\rho} \frac{\Tr[\Pi_\adj \Pi_\alpha \Lambda ]}{\Tr[\Pi_\adj]} \frac{\Tr[(\Pi_\triv + \Pi_Z)\Pi_\alpha \Lambda ]}{\Tr[\Pi_\alpha]} \, \lambda_\alpha(\Lambda)^{m-1} \\\nonumber
        &+
        (d+1)\sum_{\alpha\in \{\triv, Z, XY\}} \braa{E}\Pi_{\adj} \Pi_{\triv} \kett{\rho} \frac{\Tr[\Pi_\triv \Pi_\alpha \Lambda ]}{\Tr[\Pi_\triv]} \frac{\Tr[(\Pi_\triv + \Pi_Z)\Pi_\alpha \Lambda ]}{\Tr[\Pi_\alpha]} \, \lambda_\alpha(\Lambda)^{m-1}\\ 
    	\nonumber
    	&=
    	  \braa{E} \Pi_{\adj} \kett{\rho} \frac{d+1}{d^2-1} \Tr[\Pi_\adj \Pi_Z \Lambda]  \frac{\Tr[(\Pi_\triv + \Pi_Z)\Pi_Z \Lambda ]}{\Tr[\Pi_Z]} \, \lambda_Z(\Lambda)^{m-1}\\\nonumber
    	&=
    	\braa{E} \Pi_{\adj} \kett{\rho} \, \lambda_Z(\Lambda)^{m+1} ,
        \nonumber
\end{align}
where we have used $\Tr[\Pi_\triv \Lambda]=1$, $B=\sum_b \kettbraa{b}{b} = (\Pi_\triv + \Pi_Z)$ and the vanishing projectors identities 
\begin{equation}\label{eq:orthogonality_projectors}
    \Pi_\adj \Pi_\triv = \Pi_Z \Pi_\triv = \Pi_Z \Pi_\ort = \Pi_\triv \Pi_\ort =  0.
\end{equation}

\end{proof}

\begin{proof}[Proof of \cref{lemma:circuit_M_estimator}]
    For shadows build on random a circuit of a single Clifford operator~$c$ followed by $m$ gates $(k_1,\dots,k_m)$ in $\mathbb{K}$, the physical frame operator is
    \begin{align}
        \natM_{m+1} &= \sum_b \EE_{c \in \mc C}\EE_{k_1,\dots,k_m \in \mathbb{K}} \Ad^\dagger(g_\eend) \kettbraa{b}{b} \phi(k_m) \cdots \phi(k_1) \cdot \phi(c) \\
        \nonumber
        &=
        \EE_{c \in \mc C}\EE_{k_1,\dots,k_m \in \mathbb{K}} \Ad^\dagger(c) \Ad(k_\inv) B \Lambda \Ad(k_m) \dots \Lambda \Ad(k_1) \Lambda \Ad(c) \\
        \nonumber
        &= 
        \sum_{\alpha\in \{\triv, Z, \ort \}} \EE_{c \in \mc C} \Ad^\dagger(c) \Pi_\alpha \Lambda \Ad(c) \frac{\Tr[\Pi_\alpha B \Lambda]}{\Tr[\Pi_\alpha]} \frac{\Tr[\Pi_\alpha \Lambda]}{\Tr[\Pi_\alpha]}^{m-1} \\
        \nonumber
        &=
        \sum_{\alpha\in \{\triv, Z, \ort \}} \EE_{c \in \mc C} \frac{\Tr[\Pi_\triv\Pi_\alpha \Lambda]}{\Tr[\Pi_\triv]} \Pi_\triv
        + 
        \sum_{\alpha\in \{\triv, Z, \ort \}} \EE_{c \in \mc C} \frac{\Tr[\Pi_\adj\Pi_\alpha \Lambda]}{\Tr[\Pi_\adj]} \lambda_\alpha(\Lambda)^m \Pi_\adj \\
        \nonumber
        &=
        \Pi_\triv + \frac{\lambda_Z(\Lambda)^{m+1}}{d+1} \Pi_\adj ,
        \nonumber
    \end{align}
    where we have used again $\Tr[\Pi_\triv \Lambda]=1$, $B=\sum_b \kettbraa{b}{b} = (\Pi_\triv + \Pi_Z)$ and the orthogonality relations for projectors in \cref{eq:orthogonality_projectors}.
    Our estimator in \cref{eq:RB_calibrated_frame_onego} corresponds then to $\natM_{m+1}$.
\end{proof}

\begin{protocol}{Noise-mitigated classical shadow estimation}

    \\[3pt]
    
    \textbf{Assumption:} Gate-independent noise channel $\Lambda$ for the Clifford and CNOT-dihedral gate-sets.
    
	\smallskip 

    \textbf{Setup:} Experimentally implemented Clifford gates $\{\phi (g)\}_g$.

    \smallskip 

    \textbf{Input:} Unknown quantum state $\rho$.

	\smallskip

	\textbf{Output:} Noise-mitigated classical shadow $\hat \rho$.
	
	\smallskip
	
	\textbf{Procedure:}
 
    \begin{enumerate}[label=(\roman*)]
        \item Fix a sequence length $m$.
        \item Pick a single random Clifford gate $c$ and $m$ gates from the CNOT-dihedral group, $k_1, \dots, k_m$. 
        \item Implement the sequence $\phi(k_m) \dots \phi(k_1) \phi(c)$ on the target state $\rho$. 
        \item Measure in the computational basis, obtain outcome $b'$. 
        \item Classically compute $(d+1)\braa{E}\Pi_\adj \Ad^\dagger(g_\eend)\kett{b'}$, with $g_\eend= k_m\cdots k_1 \cdot c$.
        \item Repeat the above procedure $N=\mc O(1)$ times to estimate the sample average $\mc Q (m+1)$.
        \item Increase the sequence length $m+1$ and repeat the above procedure.
        \item Fit the data into the model in \cref{shadow_dihedral_fitting_model} to extract $\lambda_Z(\Lambda)$.
        \item Compute the calibrated the frame operator $\hatM_{m+1}$ according to \cref{eq:RB_calibrated_frame_onego}.
        \item Produce the noise-mitigated classical shadows $\hat \rho_{m+1}^{(j)} = \hatM^{-1}_{m+1}\Ad^\dagger(g_\eend) \kett{b'}$ for $j=1,\dots,N$.
    \end{enumerate}

\end{protocol}

%===================================================
%===================================================
\section{Clifford group calibration}
%===================================================
%===================================================

In this section, we turn to discussing in detail the full Clifford group calibration. When the the structure of the noise model is known, at least to the point that we can exclude the contribution of phase errors -- given in the form of $Z$-Pauli strings in $\{\1,Z\}^{\otimes n} \backslash \{\1^{\otimes n}\}$ -- we can employ a standard Clifford RB scheme to calibrate the frame operator. This may be more convenient if we want to use this very common routine and without having to setup a  CNOT-dihedral random circuit.

In the following, we will first describe a calibration scheme based on the global Clifford group as an effective approach to construct noise-mitigated shadows. As we will observe in the following, the crucial point is that the conventional Clifford RB procedure will return $\lambda_\adj(\Lambda) = (\Tr[\Lambda]-1)/({d^2-1})$: as long as the noise channel has negligible $Z$-noise contribution, this quantity will be close to~$\lambda_Z(\Lambda)$, hence providing a reliable correction of the classical shadows. We demonstrate this simpler approach with numerical examples.
Furthermore, we will describe an efficient \emph{local} Clifford calibration scheme for the calibration parameters required for shadow estimation of observables with $\log n$ support size.

%--------------------------------------------------
\subsection{Global Cliffords}\label{sec:global_clifford}
%--------------------------------------------------

We recall that the $\Ad$ representation of the Clifford group conveniently splits into two non-degenerate representations only: the trivial and the adjoint. In other words, 
\begin{equation}
\Ad \simeq 1 \oplus \adj. 
\end{equation}
The former carries the one-dimensional subspace spanned by the identity, $\1$, the latter carries the invariant subspace of trace-less matrices, that is,
\begin{equation}
	\mc{S}_\adj = \mathrm{span}\{ p_j \, : \, j=1,\dots,d^2-1\}.
\end{equation}
The fitting model of the RB scheme is thus conveniently expressed by a single parameter decay~\cite{MagGamEmer2,MagGamEmer}
\begin{equation}
	Q_\avg (m,E,\rho)  
	=
	a_\triv + a_\adj \, \lambda_\adj^m
\end{equation}
for some scalars $a_\triv$ and $a_\adj$ absorbing SPAM errors. Here, we have used that
\begin{equation}
\lambda_\triv (\Lambda)= \Tr[\Lambda(\1/d)]=1,
\end{equation}
which follows from the 
trace preserving condition for $\Lambda$,
and
\begin{equation}\label{eq:RB_parameter}
	\lambda_\adj(\Lambda)= \frac{\Tr[\Pi_\adj \Lambda]}{\Tr[\Pi_\adj]}= \frac{\Tr[\Lambda]-1}{d^2-1}.
\end{equation}
According to the formula in \cref{eq:RB_calibrated_frame} we construct the error-mitigated frame operator,
\begin{equation}\label{eq:global_frame_M}
    \hatM_\adj 
    =
    \Pi_\triv + \frac{\lambda_\adj(\Lambda)}{d+1} \Pi_\adj.
\end{equation}

Then, the error of the calibrated shadow procedure can be stated as 
follows, combining \cref{eq_app:natM,eq:global_frame_M}. 

\begin{lemma}[Bias for the calibrated frame operator]\label{thm:diff_calibrated_inv_app}
	It holds that
	\begin{equation} 
	\hatM_\adj^{-1}  \, \natM - \1 
	=
    \left\{\frac{\lambda_Z(\Lambda)}{\lambda_\adj (\Lambda)}  - 1 \right\} \Pi_\adj 
    =
    \left\{(d+1)\frac{\Tr[\Pi_\adj B \Lambda]}{\Tr[\Pi_\adj\Lambda]}  - 1 \right\} \Pi_\adj .	
	\end{equation} 
\end{lemma}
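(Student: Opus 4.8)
The plan is to exploit the fact that both frame operators in question are already written in the common spectral decomposition afforded by the orthogonal projectors $\Pi_\triv$ and $\Pi_\adj$, which satisfy $\Pi_\triv + \Pi_\adj = \1$, $\Pi_\triv \Pi_\adj = 0$, and $\Pi_\alpha^2 = \Pi_\alpha$. Since $\natM = \Pi_\triv + \tfrac{\lambda_Z(\Lambda)}{d+1}\Pi_\adj$ from \cref{eq_app:natM} and $\hatM_\adj = \Pi_\triv + \tfrac{\lambda_\adj(\Lambda)}{d+1}\Pi_\adj$ from \cref{eq:global_frame_M} are both functions of the same pair of complementary projectors, they are simultaneously diagonalized, so all the algebra reduces to scalar arithmetic on each invariant subspace.

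First I would invert $\hatM_\adj$ block-wise. On the trivial block the eigenvalue is $1$, and on the adjoint block it is $\lambda_\adj(\Lambda)/(d+1)$, so immediately
\begin{equation}
    \hatM_\adj^{-1} = \Pi_\triv + \frac{d+1}{\lambda_\adj(\Lambda)}\,\Pi_\adj .
\end{equation}
Next I would compose with $\natM$, using orthogonality of the projectors to kill all cross terms, obtaining
\begin{equation}
    \hatM_\adj^{-1}\,\natM
    = \Pi_\triv + \frac{d+1}{\lambda_\adj(\Lambda)}\cdot\frac{\lambda_Z(\Lambda)}{d+1}\,\Pi_\adj
    = \Pi_\triv + \frac{\lambda_Z(\Lambda)}{\lambda_\adj(\Lambda)}\,\Pi_\adj .
\end{equation}
Subtracting $\1 = \Pi_\triv + \Pi_\adj$ then yields the first claimed equality, $\hatM_\adj^{-1}\natM - \1 = \bigl(\lambda_Z(\Lambda)/\lambda_\adj(\Lambda) - 1\bigr)\Pi_\adj$, since the trivial blocks cancel exactly.

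Finally I would establish the second equality by substituting the explicit expressions for the two decay parameters. Using $\lambda_Z(\Lambda) = \Tr[\Pi_\adj B \Lambda]/(d-1)$ from \cref{eq:Z-par_independent_noise} (recalling $\Pi_Z = \Pi_\adj B$) and $\lambda_\adj(\Lambda) = \Tr[\Pi_\adj \Lambda]/(d^2-1)$ from \cref{eq:RB_parameter}, the ratio becomes
\begin{equation}
    \frac{\lambda_Z(\Lambda)}{\lambda_\adj(\Lambda)}
    = \frac{d^2-1}{d-1}\cdot\frac{\Tr[\Pi_\adj B \Lambda]}{\Tr[\Pi_\adj \Lambda]}
    = (d+1)\,\frac{\Tr[\Pi_\adj B \Lambda]}{\Tr[\Pi_\adj \Lambda]},
\end{equation}
which gives the second form. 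There is no genuine obstacle here: the only point requiring care is confirming that $\hatM_\adj$ is invertible on the adjoint block, i.e.\ that $\lambda_\adj(\Lambda) \neq 0$, which holds for any reasonable (near-identity) noise channel $\Lambda$ since $\lambda_\adj(\Lambda)$ is a perturbation of $1$; I would note this as a standing assumption consistent with the RB regime invoked throughout.
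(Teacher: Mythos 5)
Your proof is correct and takes essentially the same route as the paper: the paper gives no separate argument for this lemma, stating only that it follows by combining \cref{eq_app:natM} and \cref{eq:global_frame_M}, which is precisely the block-wise inversion on the orthogonal $\Pi_\triv$/$\Pi_\adj$ decomposition and the substitution of the explicit expressions for $\lambda_Z(\Lambda)$ and $\lambda_\adj(\Lambda)$ that you carry out (mirroring the computation shown for \cref{lemma:diff_uncalibrated_inv}). Your added remark on invertibility, i.e.\ $\lambda_\adj(\Lambda)\neq 0$ in the near-identity noise regime, is a reasonable standing assumption consistent with the paper's setting.
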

\noindent While this clearly does not correspond to a perfect calibration, there are different scenarios where the gap between the two quantities closes, yielding effective and stable calibration. More precisely, this favorably occurs when the noise channel does not overlap significantly with Pauli strings containing purely Pauli-$Z$ operators and local identities, that is, has negligible support on the invariant subspace connected to the $Z$-parameter. Practical scenarios are:
(i)~random Pauli noise (where the portion of $\{\1,Z\}^{\otimes 2n}$-Pauli strings is $1/2^n$ of the total), (ii)~considering $\Lambda$ to be the depolarizing channel -- where this conventional Clifford RB protocol provides a perfect calibration. This noise model is often sensible, especially for large system sizes (e.g., in the context of achieving a quantum advantage for NISQ devices~\cite{Boixo18,GoogleSupremacy,PhysRevLett.127.180501,SupremacyReview}).

Indeed, previous work~\cite{RQCglobaldep} has shown that for 2-local random quantum circuits whose single-gate noise is sufficiently weak and unital, after $n \log n$ steps the resultant noise is the global depolarizing noise; in other words, random circuits scramble local gate-dependent noise. 
In this context,  if we consider the Clifford group on $n$-qubits to be generated by a local set such as the CNOT, Hadamard and phase gates, any Clifford operator generated by $\mc O (n \log n)$ CNOT gates is an ideal operator equipped with the depolarizing channel. Note that as the system size grows, randomly choosing a Clifford gate prepared by $n \log n$ local CNOT is an event that occurs with probability~1.

\medskip

Like the CNOT-dihedral scheme in \cref{sec:robust_CNOT_dihedral_scheme}, this scheme can also be used in a self-calibrating fashion, that is, producing the shadows and the calibrated frame operator from the same experiments. In this case, the calibrated operator will be
\begin{equation}\label{eq:self-calibrated_Clifford}
    \hatM_{\adj,m} = \Pi_\triv + \frac{\lambda_\adj(\Lambda)^m}{d+1} \Pi_\adj ,
\end{equation}
yielding the same error bias as in \cref{thm:diff_calibrated_inv_app}, and a calibrated shadow obtained from a measurement of a global Clifford circuit of length $m$
\begin{equation}\label{eq:self-calibrated_shadows}
    \hat \rho_{m} = \hatM_{\adj,m}^{-1} \Ad^\dagger (g_\eend) \kett{b'},
\end{equation}
where $g_\eend = c_m \cdots c_1$.

\medskip 

We will observe the validity of this calibration scheme in following numerical investigation.

\begin{protocol}{Global Clifford calibrated frame operator}

    \\[3pt]
    
    \textbf{Assumption:} Gate-independent noise channel $\Lambda$ for the Clifford gate-set with negligible $Z$-noise component.
    
    \smallskip 

    \textbf{Setup:} Experimentally implemented Clifford gates $\{\phi (g)\}_g=\{\Lambda \Ad(g)\}_g$.

    \smallskip 

    \textbf{Input:} Unknown quantum state $\rho$.

	\smallskip

	\textbf{Output:} Calibrated frame operator $\hatM_\adj$.
	
	\smallskip
	
	\textbf{Procedure:}
	 
	 \smallskip
	 
	\begin{enumerate}[label=(\roman*)]
        \item Fix a sequence length $m$.
        \item Choose Clifford elements $c_1,\dots,c_m$ at random.
		  \item Apply a random sequence of Clifford gates with a final inverse gate
		  $\phi(c_m) \phi(c_{m-1}) \dotsb \phi(c_1)$ on the target state $\rho$.
		  \item Measure in the computational basis, obtain outcome $b'$,
        \item Repeat the above procedure $N=\mc O(1)$ times times to estimate the sample average~$\mc Q (m)$.
        \item Increase the value of $m$ and repeat the above.
		  \item Fit the sample averages~$\{\mc Q (m)\}_m$ for different $m$ to the exponential decay model and extract the RB decay parameter~$\lambda_\adj$.
        \item Compute the calibrated frame operator $\hatM_\adj$ according to \cref{eq:global_frame_M} (separated noise-mitigated shadows) or \cref{eq:self-calibrated_Clifford} (self-calibrated shadows as per \cref{eq:self-calibrated_shadows}).
        \end{enumerate}
\end{protocol}

%============================================
%============================================
\subsection*{Noise examples}\label{app:noise_examples}
%============================================
%============================================

In this section, we will consider three different noise models that are particularly relevant in present-day architectures -- depolarizing, bit-flip and amplitude damping noise -- and derive more specific and detailed 
expressions for the quantities characterizing the bias in \cref{thm:diff_calibrated_inv_app}.

%---------------------------------------------------------
\textbf{Global depolarizing noise.}
%---------------------------------------------------------
This error model can be considered the reference for noisy random quantum circuits, where the noise of the local gates is scrambled into a global depolarizing channel paired with the $n$-qubit gate resulting from the implementation~\cite{RQCglobaldep}.

\begin{lemma}[Calibration for global depolarizing channel, cnf.~\cite{koh2022ClassicalShadowsNoise}]
	Considering a $n$-qubit global depolarizing channel with parameter $p$, namely, $\phi(g) = \Lambda_\mathrm{dep} \Ad (g)$ with 

    \begin{equation}
        \Lambda_\mathrm{dep}(\rho) = (1-p)\rho + p \frac{\1}{d} \, ,
    \end{equation}
	we then have
	\begin{equation}
        \hatM_\adj^{-1}\, \natM - \1 = 0 .
    \end{equation}
    
\end{lemma}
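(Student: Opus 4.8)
The plan is to avoid recomputing the frame operator from scratch and instead invoke the general bias formula already derived in \cref{thm:diff_calibrated_inv_app}, namely
\begin{equation}
\hatM_\adj^{-1}\,\natM - \1 = \left\{\frac{\lambda_Z(\Lambda)}{\lambda_\adj(\Lambda)} - 1\right\}\Pi_\adj .
\end{equation}
With this in hand, proving $\hatM_\adj^{-1}\,\natM - \1 = 0$ reduces entirely to verifying the single scalar identity $\lambda_Z(\Lambda_\mathrm{dep}) = \lambda_\adj(\Lambda_\mathrm{dep})$, which forces the bracketed prefactor to vanish. This is the only computation I would carry out.

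First I would pass to the matrix (vectorized) representation of the depolarizing channel. Because $\Lambda_\mathrm{dep}$ fixes the maximally mixed state and rescales every traceless operator by the factor $(1-p)$, it is block-diagonal with respect to the splitting $\Ad \simeq \triv \oplus \adj$ and takes the simple form $\Lambda_\mathrm{dep} = \Pi_\triv + (1-p)\,\Pi_\adj$. The essential structural observation --- and genuinely the whole content of the lemma --- is that $\Lambda_\mathrm{dep}$ acts as a \emph{single} scalar on the \emph{entire} adjoint block, hence identically on every invariant sub-block of it, including the Pauli-$Z$ subspace onto which $\Pi_Z$ projects.

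With this form, the two parameters follow by direct substitution into their definitions \cref{eq:lambda_Z} and \cref{eq:RB_parameter}. Since $\Pi_Z = \Pi_\adj B$ by \cref{def:Pi_Z} lies inside $\mc{S}_\adj$, I have $\Pi_Z \Lambda_\mathrm{dep} = (1-p)\Pi_Z$, so that $\lambda_Z(\Lambda_\mathrm{dep}) = (1-p)\Tr[\Pi_Z]/(d-1) = 1-p$. Likewise $\Pi_\adj \Lambda_\mathrm{dep} = (1-p)\Pi_\adj$ gives $\lambda_\adj(\Lambda_\mathrm{dep}) = (1-p)\Tr[\Pi_\adj]/(d^2-1) = 1-p$. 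Both decay parameters equal $1-p$, their ratio is $1$, and the bias vanishes.

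There is no real obstacle in this argument: once one recognizes that the depolarizing channel is proportional to the identity on the whole traceless sector, no sub-block can decay at a different rate, so the $Z$-restriction and the full adjoint restriction are automatically matched. The only points requiring mild attention are confirming $\Pi_Z \subseteq \mc{S}_\adj$ (immediate from \cref{def:Pi_Z}) and correctly recording the dimensions $\Tr[\Pi_Z] = d-1$ and $\Tr[\Pi_\adj] = d^2-1$.
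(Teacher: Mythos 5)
Your proof is correct and follows essentially the same route as the paper's: both reduce the claim to the bias formula of \cref{thm:diff_calibrated_inv_app} and then verify $\lambda_Z(\Lambda_\mathrm{dep}) = \lambda_\adj(\Lambda_\mathrm{dep}) = 1-p$, the paper by directly evaluating $\Tr(\Lambda_\mathrm{dep}) = 1 + (d^2-1)(1-p)$ and you by the equivalent observation that $\Lambda_\mathrm{dep} = \Pi_\triv + (1-p)\Pi_\adj$ acts as the scalar $1-p$ on the whole adjoint block.
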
 

Thus, the calibrated estimation is exact, whereas the uncalibrated procedure will have an error~$\mc O (p)$.  

\begin{proof}
	This follows immediately from explicitly evaluating $\Tr(\Lambda_\mathrm{dep}) = 1 + (d^2-1)(1-p)$ and 
 	\begin{equation}
 \lambda_Z(\Lambda_\mathrm{dep}) = 1-p, 
     \end{equation}
     so that
    $\lambda_\mathrm{adj}(\Lambda_{\mathrm{dep}}) = \lambda_Z(\Lambda_{\mathrm{dep}} ) = 1 - p$.

\end{proof}

\newpage

%---------------------------------------------------------
\textbf{Local depolarizing noise.} In this setting we have the following.
%---------------------------------------------------------

\begin{figure}
    \centering
\includegraphics[width=0.55\textwidth]{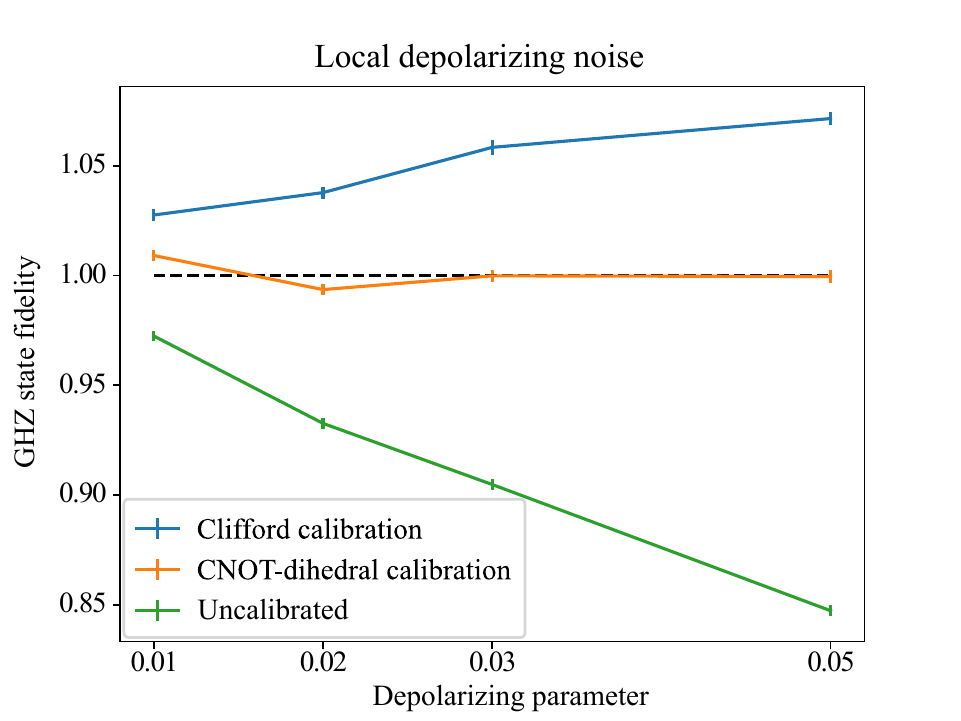}
    \caption{Simulation results for 5 qubits and local depolarizing noise. In line with the results of \cref{eq:local_depol_values}, the calibration using the Clifford RB parameter overcompensates the effect of the noise, since $\lambda_\mathrm{adj}(\Lambda_\mathrm{dep,loc}) < \lambda_Z(\Lambda_\mathrm{dep,loc})$. See \cref{sec:numerics_details} for details on the numerical simulation.}
    \label{fig:local_depol}
\end{figure}

\begin{lemma}[Local depolarizing noise]
	We consider the noise channel $\Lambda_\mathrm{dep,loc}$, in which each qubit is individually affected by a single-qubit depolarizing channel with parameter $p$. Then
	\begin{equation}
		\lambda_Z(\Lambda_\mathrm{dep,loc}) = \frac{(2-p)^n - 1}{d-1}  \qquad \text{and} \qquad  \lambda_\mathrm{adj}(\Lambda_\mathrm{dep,loc}) = \frac{(4-3p)^n - 1}{d^2 - 1},
  \label{eq:local_depol_values}
	\end{equation}
	and thus 
	\begin{equation}
		\hatM_\adj^{-1} \, \natM - \1 
		=
		\left\{(d+1)\frac{(2-p)^n -1}{(4-3p)^n -1}  - 1 \right\} \Pi_\adj.
	\end{equation} 
\end{lemma}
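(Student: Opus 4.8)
The plan is to exploit the fact that $\Lambda_\mathrm{dep,loc}$ is diagonal in the Pauli transfer-matrix representation, which reduces both decay parameters to elementary binomial sums.

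First I would record the single-qubit action. Written in the Pauli basis $\{\1,X,Y,Z\}$, the single-qubit depolarizing channel fixes $\1$ and rescales each of $X,Y,Z$ by $1-p$, so its matrix representation is $\diag(1,1-p,1-p,1-p)$. Since $\Lambda_\mathrm{dep,loc}=\Lambda_1^{\otimes n}$, it is diagonal in the $n$-qubit Pauli basis $\{\1,X,Y,Z\}^{\otimes n}$, and a Pauli string of weight $w$ (i.e.\ with $w$ non-identity tensor factors) is an eigenvector with eigenvalue $(1-p)^w$. Because the projectors $\Pi_\triv,\Pi_Z,\Pi_\adj$ are each a sum of rank-one projectors onto normalized Pauli strings, they are diagonal in this same basis; hence every trace $\Tr[\Pi_\alpha\Lambda_\mathrm{dep,loc}]$ collapses to a sum of the eigenvalues $(1-p)^w$ over the strings spanning the subspace, with the $1/\sqrt{2^n}$ normalization cancelling.

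Next I would evaluate the two parameters. For $\lambda_Z$, the range of $\Pi_Z$ is spanned by the strings in $\{\1,Z\}^{\otimes n}\setminus\{\1^{\otimes n}\}$; there are $\binom{n}{w}$ such strings of weight $w$, so the binomial theorem gives $\Tr[\Pi_Z\Lambda_\mathrm{dep,loc}]=\sum_{w=1}^{n}\binom{n}{w}(1-p)^w=(2-p)^n-1$. Dividing by $\Tr[\Pi_Z]=d-1$ as in \cref{eq:lambda_Z} yields the first identity. For $\lambda_\adj$ I would use \cref{eq:RB_parameter}, which needs only $\Tr[\Lambda_\mathrm{dep,loc}]$; this factorizes over qubits as $(1+3(1-p))^n=(4-3p)^n$, so $\Tr[\Pi_\adj\Lambda_\mathrm{dep,loc}]=\Tr[\Lambda_\mathrm{dep,loc}]-1=(4-3p)^n-1$ (using $\Tr[\Pi_\triv\Lambda_\mathrm{dep,loc}]=1$ from trace preservation), and dividing by $\Tr[\Pi_\adj]=d^2-1$ gives the second identity.

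Finally, I would substitute both values into \cref{thm:diff_calibrated_inv_app}. Using $d^2-1=(d-1)(d+1)$ to cancel the $d-1$ factor, the ratio $\lambda_Z/\lambda_\adj$ becomes $(d+1)\bigl[(2-p)^n-1\bigr]\big/\bigl[(4-3p)^n-1\bigr]$, which is exactly the bracketed prefactor of $\Pi_\adj$ in the stated bias. Every step is a direct computation, so I do not anticipate a genuine obstacle; the only point demanding care is the bookkeeping—correctly identifying which Pauli strings span each invariant subspace and confirming that the projector normalizations cancel in the traces, so that the naive eigenvalue sums produce the right answer.
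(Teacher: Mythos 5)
Your proposal is correct: both decay parameters come out right, the appeal to \cref{eq:lambda_Z}, \cref{eq:RB_parameter} and \cref{thm:diff_calibrated_inv_app} is exactly what is needed, and the final substitution with $d^2-1=(d-1)(d+1)$ reproduces the stated bias. The route, however, differs from the paper's in a worthwhile way. The paper expands $\Lambda_\mathrm{dep,loc}$ explicitly in the vectorized two-copy picture as a mixture of Pauli conjugations supported on subsets of qubits, and then evaluates $\Tr[\Pi_Z\Lambda_\mathrm{dep,loc}]=\sum_b\braa{b}\Lambda_\mathrm{dep,loc}\kett{b}-1$ and $\Tr[\Lambda_\mathrm{dep,loc}]$ by counting which terms of that expansion survive computational-basis matrix elements and the superoperator trace, respectively. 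You instead observe that the channel is diagonal in the Pauli transfer-matrix basis, $\diag(1,1-p,1-p,1-p)^{\otimes n}$, so that every relevant trace is a sum of eigenvalues $(1-p)^w$ over the Pauli strings spanning the invariant subspace; the binomial theorem then gives $(2-p)^n-1$ and $(4-3p)^n$ in one line each. Your argument is cleaner and immediately generalizes to any Pauli channel (bit-flip included), since diagonality of both the channel and the projectors $\Pi_\triv,\Pi_Z,\Pi_\adj$ in the Pauli-string basis does all the work. The paper's more pedestrian expansion has the compensating virtue of applying uniformly to its other noise example, amplitude damping, which is \emph{not} diagonal in the Pauli basis and hence outside the reach of your eigenvalue-sum shortcut. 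One small point worth making explicit in your write-up: $\Tr[\Pi_\triv\Lambda_\mathrm{dep,loc}]=\tfrac{1}{d}\Tr[\Lambda_\mathrm{dep,loc}(\1)]=1$ indeed follows from trace preservation alone, as you claim, which is the same fact the paper invokes via $\lambda_\triv(\Lambda)=1$.
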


\begin{proof}
	We expand the expression for the $n$-qubit local depolarizing channel as
	\begin{align}
		\Lambda_\mathrm{dep,loc}
		&=
    	(1-p)^n \1_{2n} + (1-p)^{n-1}p \frac{1}{4} \sum_{\alpha_1 \in      \{0,1,2,3\} } \sum_{j=1}^n g_{\alpha_1}^{(j)} \otimes \1^{j^c}_{n-1} \otimes g_{\alpha_1}^{(j)} \otimes \1^{(j^c)}\nonumber\\
    		&+
    		(1-p)^{n-2}p^2 \frac{1}{4^2} \sum_{\alpha_2 \in \{0,1,2,3\}^{\times 2} } \sum_{j \neq k}  g_{\alpha_2}^{(j,k)}  \otimes \1^{(j,k)^c} \otimes g_{\alpha_2}^{(j,k)}\otimes  \1^{(j,k)^c} \\
    		&+
    		\dots+ p^n \frac{1}{4^n} \sum_{\alpha_n \in \{0,1,2,3\}^{\times n} } g_{\alpha_n}\nonumber \\
    		&=
    		\sum_{\ell=0}^n (1-p)^{n-\ell} p^{\ell} \frac{1}{4^\ell} \sum_{\alpha_\ell \in \{0,1,2,3\}^{\times \ell} } \sum_{\mu \in  \mc{W} (n,\ell) }  g_{\alpha_\ell}^{\mu} \otimes \1^{\mu^c} \otimes g_{\alpha_\ell}^{\mu} \otimes \1^{\mu^c}\nonumber,
        \end{align}
	where  we denote $\mc W (n,\ell)$ to be the set of combinations of  $\ell$ qubits out of $n$, $X^w$, $g_{\alpha_\ell}^{\mu}$ is a Pauli string fo length $\ell$ (indexed by $\alpha_\ell$) acting on subsystem $\mu$ and where the identity is acting on the complement system $\mu^c$.
	Since every Pauli string made of identities and $Z$-operators only will give a contribution of $2^n$, we then have
	\begin{eqnarray}
		\lambda_Z(\Lambda_\mathrm{dep,loc})
		&=&
		\frac{1}{d-1} \left(\sum_{b \in \{0,1\} }^n \braa{b} \Lambda_\mathrm{dep,loc} \kett{b}  - 1 \right)
		 \\ \nonumber 
		&=&
		\frac{1}{d-1} \left( \sum_{\ell =0}^n \left[\binom{n}{\ell} (1-p)^{n-\ell} \left(\frac{p}{2}\right)^\ell \right] - 1 \right)
		=
		\frac{(2-p)^n - 1}{d-1}.
	\end{eqnarray}
	Conversely, only the 2$n$-qubit identity string will contribute to the trace, hence following an analogous calculation yields
	\begin{equation}
		\Tr[\Lambda_\mathrm{dep,loc}] =(4-3p)^n .
	\end{equation}

\end{proof}

%---------------------------------------------------------
\textbf{Bit-flip noise.} 
For this noise type, the Clifford RB calibration is very effective. Indeed, we find the following.

\begin{figure}
	\centering
 \includegraphics[width=0.55\textwidth]{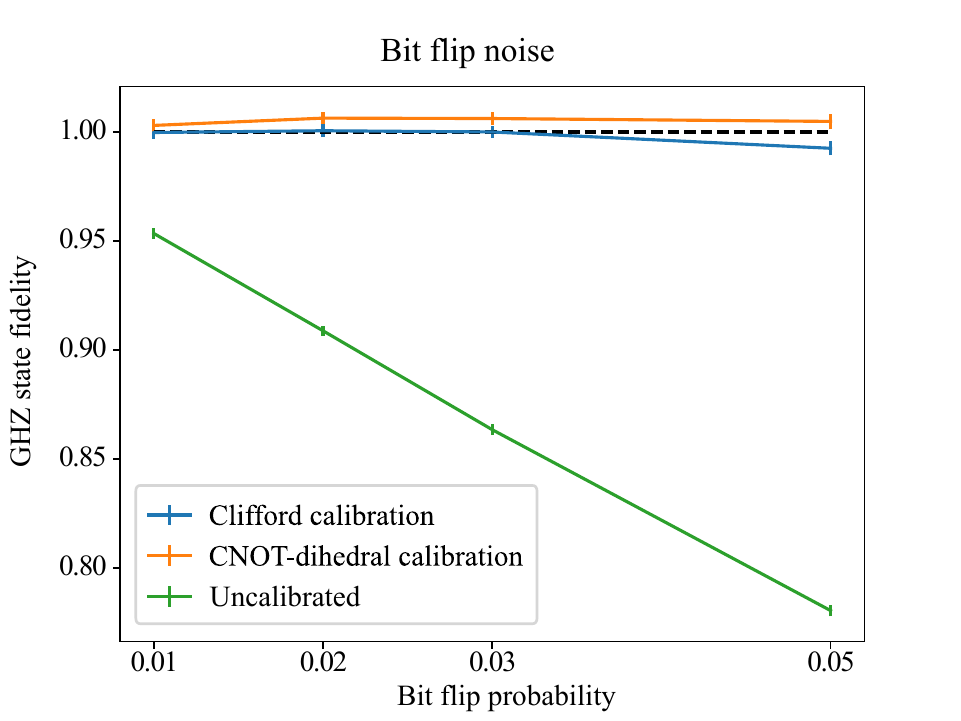}
	\caption{Simulation results for 5 qubits and local bit flip noise. See \cref{sec:numerics_details} for details on the numerical simulation.}
	\label{bitflip_plot}
\end{figure}

\begin{lemma}[Bit-flip noise]
Consider the bit-flip channel $\Lambda_\mathrm{flip}$, where each qubit is errorless with probability $1-p$ and experiences Pauli-$X$ noise with probability $p$. Then
\begin{equation}
	\lambda_Z(\Lambda_\mathrm{flip}) = \frac{d(1-p)^n - 1}{d-1} \qquad \text{and} \qquad  \lambda_\mathrm{adj}(\Lambda_\mathrm{flip}) = \frac{d^2 (1-p)^n - 1}{d^2-1} ,
\end{equation}
and therefore,
\begin{equation}
	\hatM_\adj^{-1} \,  \natM - \1 
	=
	\left\{ (d +1)\frac{d \, (1-p)^n -1}{d^2(1-p)^n - 1}  -1 \right\} \Pi_\adj .
\end{equation} 
\end{lemma}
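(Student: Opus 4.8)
The plan is to compute the two decay parameters $\lambda_Z(\Lambda_\mathrm{flip})$ and $\lambda_\adj(\Lambda_\mathrm{flip})$ directly from their definitions in \cref{eq:lambda_Z} and \cref{eq:RB_parameter}, and then to substitute them into the bias formula of \cref{thm:diff_calibrated_inv_app}. The key simplification, exactly as in the preceding local-depolarizing computation, is that the bit-flip channel is diagonal in the Pauli basis, so both traces reduce to sums of diagonal Pauli-transfer-matrix entries that factorize across qubits.

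First I would record the single-qubit action. Writing $\Lambda_1(\rho) = (1-p)\rho + pX\rho X$ and using $XIX = I$, $XXX = X$, $XYX = -Y$, $XZX = -Z$, one sees that the single-qubit transfer matrix is diagonal with eigenvalues $1, 1, 1-2p, 1-2p$ on $I, X, Y, Z$ respectively. Since $\Lambda_\mathrm{flip}$ is a tensor product of single-qubit bit-flip channels, its $n$-qubit transfer matrix is the tensor product of these, hence diagonal in the $n$-qubit Pauli-string basis, with the eigenvalue of a string $P = P_1\otimes\cdots\otimes P_n$ equal to $(1-2p)^{w(P)}$, where $w(P)$ counts the positions carrying a $Y$ or a $Z$.

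Next I would evaluate the two traces. For $\lambda_Z$, the projector $\Pi_Z$ of \cref{def:Pi_Z} selects the non-identity strings in $\{\1,Z\}^{\otimes n}$; a string with exactly $k$ factors of $Z$ has eigenvalue $(1-2p)^k$ and there are $\binom{n}{k}$ such strings, so by the binomial theorem $\Tr[\Pi_Z \Lambda_\mathrm{flip}] = \sum_{k=1}^n \binom{n}{k}(1-2p)^k = (2-2p)^n - 1 = d(1-p)^n - 1$, whence $\lambda_Z = (d(1-p)^n-1)/(d-1)$. For $\lambda_\adj$, I would use $\Tr[\Lambda_\mathrm{flip}]$ directly: summing the four single-qubit eigenvalues gives $1+1+(1-2p)+(1-2p) = 4(1-p)$ per qubit, so $\Tr[\Lambda_\mathrm{flip}] = (4(1-p))^n = d^2(1-p)^n$, giving $\lambda_\adj = (d^2(1-p)^n-1)/(d^2-1)$ via \cref{eq:RB_parameter}.

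Finally, I would insert these values into \cref{thm:diff_calibrated_inv_app} in the form $\hatM_\adj^{-1}\natM - \1 = (\lambda_Z/\lambda_\adj - 1)\Pi_\adj$; the factors of $(d-1)$ and $(d^2-1)$ combine through $d^2-1 = (d-1)(d+1)$ to produce the stated ratio $(d+1)\,(d(1-p)^n-1)/(d^2(1-p)^n-1)$. There is no deep obstacle here, as the result is essentially bookkeeping; the only points demanding care are the sign identities $XYX = -Y$ and $XZX = -Z$ (which are precisely what make the $Y$ and $Z$ eigenvalues equal to $1-2p$ rather than $1$, and hence what separates $\lambda_Z$ from $\lambda_\adj$), and correctly restricting the binomial sum to $k\ge 1$ so that the identity string excluded by $\Pi_Z$ is not double-counted.
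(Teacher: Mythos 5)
Your proof is correct, and it reaches the stated formulas by a genuinely different (though equally elementary) computational route than the paper. The paper expands $\Lambda_\mathrm{flip}$ in its vectorized Kraus form, $\sum_{\ell=0}^n(1-p)^{n-\ell}p^{\ell}\sum_{\mu}X^{\mu}\otimes\1^{\mu^c}\otimes X^{\mu}\otimes\1^{\mu^c}$, and evaluates both traces in the computational basis: for $\lambda_Z$ it writes $\Tr[\Pi_Z\Lambda_\mathrm{flip}]=\sum_b\braa{b}\Lambda_\mathrm{flip}\kett{b}-1$ and notes that every term containing an $X$ shifts $\kett{b}$ to an orthogonal basis element and hence vanishes, leaving only the identity term with weight $(1-p)^n$ and the value $d(1-p)^n-1$; for $\lambda_\adj$ it uses $\Tr[X]=0$ to kill all but the identity term, giving $\Tr[\Lambda_\mathrm{flip}]=d^2(1-p)^n$. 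You instead diagonalize the channel in the Pauli basis (single-qubit transfer-matrix eigenvalues $1,1,1-2p,1-2p$ on $I,X,Y,Z$), so the same two traces become eigenvalue sums, $\sum_{k\geq 1}\binom{n}{k}(1-2p)^k=(2-2p)^n-1$ and $\bigl(4(1-p)\bigr)^n$, which agree with the paper's values since $d=2^n$; both arguments then conclude identically by inserting $\lambda_Z$ and $\lambda_\adj$ into \cref{thm:diff_calibrated_inv_app}. Your eigenvalue picture buys a bit more structure: it exhibits the full Pauli spectrum of the channel, so the identical computation handles any Pauli channel, and it makes transparent that $\lambda_Z$ and $\lambda_\adj$ differ because they average the same eigenvalues over different sets of Pauli strings. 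The paper's Kraus-expansion technique, on the other hand, is the one it reuses verbatim for the amplitude-damping example, where the channel is not diagonal in the Pauli basis and your method would not apply without modification.
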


\begin{proof}
	Following the above logic, we expand the expression for the bit-flip noise channel as
\begin{aligneq}
	\Lambda_\mathrm{flip}
	&=
	(1-p)^n \1_{2n} + (1-p)^{n-1}p \sum_{j=1}^n X^{(j)} \otimes \1^{j^c}_{n-1} \otimes X^{(j)} \otimes \1^{(j^c)}\\
	&+
	(1-p)^{n-2}p^2 \sum_{j \neq k} X^{(j,k)}  \otimes \1^{(j,k)^c} \otimes X^{(j,k)}\otimes  \1^{(j,k)^c}\\
	&+
	\dots+ p^n X^{\otimes 2n}  \\
	&=
	\sum_{\ell=0}^n (1-p)^{n-\ell} p^{\ell} \sum_{\mu \in  \mc{W} (n,\ell) } X^{\mu} \otimes \1^{\mu^c} \otimes X^{\mu} \otimes \1^{\mu^c},
\end{aligneq}
where $X^w$ is the application of $X$ gates on the qubits in $\mu$ and the identity acting on the complement system $\mu^c$.
We have in this case
\begin{equation}
	\lambda_Z(\Lambda_\mathrm{flip})
	=
	\frac{1}{d-1} \left( \sum_{b \in \{0,1\} }^n \braa{b} \Lambda_\mathrm{flip} \kett{b} - 1 \right)
	= \frac{d (1-p)^n - 1}{d-1}  ,
\end{equation}
since any string containing a Pauli $X$ operator will vanish. Analogously, only the 2$n$-qubit identity element will contribute to the trace, hence
\begin{equation}
	\Tr[\Lambda_\mathrm{flip}] = d^2 (1-p)^n .
\end{equation}
\end{proof}

%---------------------------------------------------------
\textbf{Amplitude damping noise.}
%---------------------------------------------------------
This example can be considered a more complicated testbed for the calibration procedure, since the noise channel has an overlap with Pauli-$Z$ operators.

\begin{figure}
	\centering
	\includegraphics[width=0.55\textwidth]{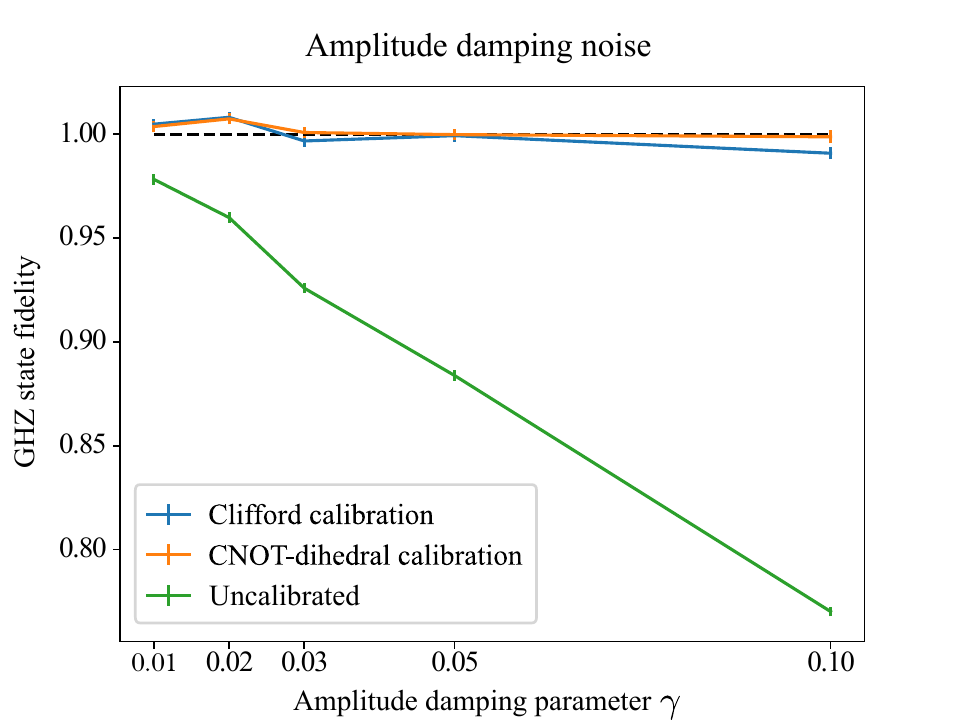}
	\caption{The performance for a 5 qubit system and amplitude damping noise. See \cref{sec:numerics_details} for details on the numerical simulation.}
	\label{ampdamp_plot}
\end{figure}

\begin{lemma}[Amplitude damping noise]\label{ex:amp_damp_noise}
Consider the amplitude damping channel $\Lambda_\mathrm{AD}$, where each qubit is errorless with probability $p$, and damped with probability $1-p$ according to a damping parameter~$\gamma$. Then
\begin{equation}
	\lambda_Z(\Lambda_\mathrm{AD} )= \frac{ d + (2-p+(p-1)\gamma)^n - (2-p)^n - 1}{d-1}  \qquad 
\end{equation}
and
\begin{equation} 
	\lambda_\mathrm{adj}(\Lambda_\mathrm{AD}) = \frac{\big[  2-\gamma +2 \sqrt{1-\gamma} + p (2 + \gamma - 2 \sqrt{1-\gamma}) \big]^n - 1}{d^2 - 1} .
\end{equation}
\end{lemma}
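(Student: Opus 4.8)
The plan is to reduce everything to single-qubit data by exploiting the product structure $\Lambda_\mathrm{AD}=\Lambda_1^{\otimes n}$, where $\Lambda_1=p\,\id+(1-p)\,\mathcal{A}_\gamma$ is the single-qubit convex mixture of the identity channel and the amplitude-damping channel $\mathcal{A}_\gamma$ with Kraus operators $K_0=\diag(1,\sqrt{1-\gamma})$ and $K_1=\sqrt{\gamma}\,\ketbra{0}{1}$. First I would record the single-qubit matrix-representation action of $\mathcal{A}_\gamma$, namely $\1\mapsto\1+\gamma Z,\ X\mapsto\sqrt{1-\gamma}\,X,\ Y\mapsto\sqrt{1-\gamma}\,Y,\ Z\mapsto(1-\gamma)Z$, so that $\Lambda_1$ has diagonal action $\1\mapsto\1$, $X,Y\mapsto(p+(1-p)\sqrt{1-\gamma})\,X,Y$, $Z\mapsto(1-(1-p)\gamma)Z$, together with the non-unital off-diagonal piece $\1\mapsto(1-p)\gamma\,Z$. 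These are the only ingredients needed.

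For $\lambda_\mathrm{adj}$ I would follow exactly the route used for the local-depolarizing and bit-flip lemmas: since $\lambda_\mathrm{adj}(\Lambda)=(\Tr[\Lambda]-1)/(d^2-1)$ with $d=2^n$, and the matrix trace factorizes over tensor factors, one has $\Tr[\Lambda_\mathrm{AD}]=(\Tr[\Lambda_1])^n$. Summing the four single-qubit diagonal entries gives $\Tr[\Lambda_1]=2-\gamma+2\sqrt{1-\gamma}+p\,(2+\gamma-2\sqrt{1-\gamma})$, and taking the $n$-th power and subtracting $1$ reproduces the stated expression. This step is routine and mirrors the earlier examples verbatim.

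For $\lambda_Z$ I would use $\Pi_Z=B-\Pi_\triv$ together with trace preservation ($\Tr[\Pi_\triv\Lambda]=1$) to write $\Tr[\Pi_Z\Lambda_\mathrm{AD}]=\Tr[B\Lambda_\mathrm{AD}]-1=\sum_{b\in\{0,1\}^n}\sandwichb{b}{\Lambda_\mathrm{AD}}{b}-1$, i.e.\ the total computational-basis return probability minus one. Because $\Lambda_\mathrm{AD}$ is a product channel, $\sandwichb{b}{\Lambda_\mathrm{AD}}{b}$ factorizes into single-qubit return probabilities, which equal $1$ for a qubit in $\ket{0}$ and $1-(1-p)\gamma$ for a qubit in $\ket{1}$; the non-unital term $\1\mapsto(1-p)\gamma\,Z$ contributes only off-diagonal matrix-representation entries and hence drops out of this trace. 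Organizing the sum by Hamming weight $k$ yields the binomial sum $\sum_{k=0}^{n}\binom{n}{k}(1-(1-p)\gamma)^k=(2-(1-p)\gamma)^n$, so that this route delivers $\lambda_Z=\big[(2-(1-p)\gamma)^n-1\big]/(d-1)$.

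The main obstacle is precisely the amplitude-damping case being \emph{non-unital} and having an asymmetric $\ket{0}$/$\ket{1}$ response, features absent from the earlier unital, symmetric examples. The direct diagonal computation above collapses to a \emph{single} exponential $(2-(1-p)\gamma)^n$, whereas the stated $\lambda_Z$ is a \emph{difference of two} exponentials built from $(2-p)^n$ and $(2-p+(p-1)\gamma)^n$; the two expressions coincide at $n=1$ and at $p=0$ but not for $n\ge 2$ with $p>0$. Reconciling this mismatch is the genuinely delicate point, and is where I would concentrate: I would re-derive the single-qubit return probabilities with the exact Kraus action, re-check whether the intended channel is the convex mixture $p\,\id+(1-p)\mathcal{A}_\gamma$ (which my $\lambda_\mathrm{adj}$ calculation confirms) or a different convention, and verify the combinatorial bookkeeping against the sanity limits $\gamma=0$ (noiseless, $\lambda_Z=1$) and $p=0$ (pure damping) before settling on the final closed form.
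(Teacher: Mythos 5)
Your $\lambda_\mathrm{adj}$ derivation is exactly the paper's route (factorize the matrix trace, $\Tr[\Lambda_\mathrm{AD}]=\Tr[\Lambda_1]^n$ with $\Tr[\Lambda_1]=4p+(1-p)(2-\gamma+2\sqrt{1-\gamma})$) and is correct. For $\lambda_Z$ you are also correct, and the mismatch you flagged is not a defect of your bookkeeping: it is an error in the paper's own proof. The paper sums return probabilities by conditioning on the damped subset $\mu$, $\lvert\mu\rvert=k$, and writes the conditional sum as $2^n+\sum_{\ell=0}^{k}\binom{k}{\ell}\bigl((1-\gamma)^{\ell}-1\bigr)$; the correct count is
\begin{equation*}
\sum_{b\in\{0,1\}^n}\prod_{i\in\mu}(1-\gamma)^{b_i}
=2^{\,n-k}\sum_{\ell=0}^{k}\binom{k}{\ell}(1-\gamma)^{\ell}
=2^{\,n-k}(2-\gamma)^{k},
\end{equation*}
because each pattern of $\ell$ ones inside $\mu$ occurs once for every one of the $2^{n-k}$ configurations of the undamped qubits. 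Restoring this degeneracy factor turns the paper's binomial sum into $\sum_{k}\binom{n}{k}p^{n-k}(1-p)^{k}\,2^{n-k}(2-\gamma)^{k}=\bigl(2-(1-p)\gamma\bigr)^{n}$, i.e.\ precisely your single exponential; the difference-of-two-exponentials form printed in the lemma is wrong for $n\geq 2$ and $0<p<1$. Tellingly, the paper \emph{does} keep the analogous factor $4^{n-k}$ in its computation of $\Tr[\Lambda_\mathrm{AD}]$, which is why the two $\lambda_\mathrm{adj}$ expressions agree while the $\lambda_Z$ expressions do not.

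A concrete check settles it: for $n=2$, $p=1/2$, $\gamma=1$, the per-qubit return probabilities are $1$ for $\ket{0}$ and $1-(1-p)\gamma=1/2$ for $\ket{1}$, so $\Tr[B\Lambda_\mathrm{AD}]=1+\tfrac12+\tfrac12+\tfrac14=\tfrac94=(2-(1-p)\gamma)^2$, whereas the lemma's formula asserts $\Tr[B\Lambda_\mathrm{AD}]=d+(2-p+(p-1)\gamma)^2-(2-p)^2=4+1-\tfrac94=\tfrac{11}{4}$. Your closed form $\lambda_Z(\Lambda_\mathrm{AD})=\bigl[(2-(1-p)\gamma)^n-1\bigr]/(d-1)$ is the correct one; it coincides with the printed formula exactly at $n=1$, $p=0$, $p=1$ and $\gamma=0$, which is why the slip survives the obvious sanity limits. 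So your proposal does not fail to prove the statement — the statement's $\lambda_Z$ part is itself incorrect as printed, and your instinct to re-derive the single-qubit return probabilities rather than trust the stated form is the right resolution.
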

\begin{proof} 
As before, we expand the expression for the amplitude damping channel as
\begin{align} 
	\Lambda_\mathrm{AD}
	&=
	\sum_{\ell=0}^n p^{n-\ell} (1-p)^{\ell} \sum_{\mu \in  \mc{W} (n,\ell) } 
	\Big[E_1^{\mu} \otimes \1^{\mu^c}_{n-\ell} \otimes E_1^{\mu} \otimes \1^{\mu^c}_{n-\ell}
	+
	E_2^{\mu} \otimes \1^{\mu^c}_{n-\ell} \otimes E_2^{\mu} \otimes \1^{\mu^c}_{n-\ell} \Big] ,
\end{align}
where $E_1$ and $E_2$ are the amplitude damping operators.
We then have
\begin{align}
	\lambda_Z(\Lambda_\mathrm{AD})
	&= \frac{
	\sum_{b \in \{0,1\} }^n \braa{b} \Lambda_\mathrm{AD} \kett{b} - 1}{d-1} \\ 
	\nonumber 
	&= \frac{
	\sum_{b \in \{0,1\} }^n \bra{b} \Lambda_\mathrm{AD}^{\text{first copy}} \ket{b}^2 - 1}{d-1} \\ \nonumber 
	&= \frac{1}{d-1} \left(
	\sum_{k=0}^n \binom{n}{k} p^{n-k} (1-p)^k \Big(2^n + \sum_{\ell = 0}^k \binom{k}{\ell} ((\sqrt{1-\gamma})^{2\ell}-1) \Big) - 1 \right) \\  \nonumber 
	&= \frac{1}{d-1} \left(
	d - 1 +  \sum_{k=0}^n \binom{n}{k} p^{n-k} (1-p)^k  (2-\gamma)^k - \sum_{k=0}^n \binom{n}{k} p^{n-k} (1-p)^k 2^k \right) \\ \nonumber 
	&= \frac{d +  [p+(1-p)(2-\gamma)]^n - [p+2(1-p)]^n - 1}{d-1}
	 \\ \nonumber 
	&=
	\frac{d + [2-p+(p-1)\gamma]^n - [2-p]^n - 1}{d-1}
\end{align}
and
\begin{align}
	\Tr[\Lambda_\mathrm{AD}]
	&=
	\sum_{k=0}^n \binom{n}{k} p^{n-k} (1-p)^k 4^{n-k} (2-\gamma +2 \sqrt{1-\gamma})^k \\ 
	\nonumber 
	&=
	\big[4p+ (2-\gamma +2 \sqrt{1-\gamma})(1-p)\big]^n \\ 
	\nonumber 
	&=
	\big[  2-\gamma +2 \sqrt{1-\gamma} + p (2 + \gamma - 2 \sqrt{1-\gamma}) \big]^n , \nonumber 
\end{align}
from which the claim follows.
\end{proof}
The asymptotic behavior for many-qubit systems is not evident from \cref{ex:amp_damp_noise}, but we can observe from \cref{ampdamp_plot} that for an 8-qubit system, the calibration yields to an error-robust fidelity estimation even for significant levels of noise damping (represented by the coefficient~$\gamma$).

%-----------------------------------------------------
\subsection{Local Clifford scheme}\label{sec:local_Clifford}
%-----------------------------------------------------

A scalable, noise-mitigated protocol can also be obtained for shadows of a quantum state generated by the local Clifford group. Local here refers to the group of single qubit Cliffords. From a practical perspective, this group is by far distinguished: In practice, this can be routinely
implemented for a number of architectures. Here, we write the real frame operator in terms of the tensor of local projections $\Pi_w$ labeled by strings $w \in \{0,1\}^{\times n}$ containing $\abs{w}$-many~'1's. That is, 
we find the following expression for the physical frame operator (which was already derived in Ref.~\cite{RobustShadows}).

\begin{lemma}[Frame operator for local Clifford shadow estimation]
\begin{equation}\label{eq:natM_local_Clifford}
	\natM = \sum_w \frac{\Tr[\Pi_w B \Lambda]}{\Tr[\Pi_w]} \Pi_w \eqqcolon \sum_w c_w \Pi_w,
\end{equation}
with
\begin{equation}\label{coeff_c_w}
	c_w = \frac{1}{2^n3^{\abs{w}}} \sum_{x,y\in \{0,1\}^n } (-1)^{w \cdot (x \oplus y)} \bra{y,y}\Lambda \ket{x,x}.
\end{equation}
\end{lemma}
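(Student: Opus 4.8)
The plan is to evaluate the defining group average directly, exploiting the tensor-product structure of the single-qubit Clifford group. Under the noise model of \cref{eq:channel_implementation}, the physical frame operator reads $\natM = \EE_{g}\,\Ad^\dagger(g)\,B\,\Lambda\,\Ad(g)$, where $g$ now runs over the local Clifford group $\mc C_1^{\otimes n}$. On a single qubit, the $\Ad$ representation splits into the trivial component spanned by $\1$ and the three-dimensional adjoint component spanned by $\{X,Y,Z\}$; the latter is irreducible because the single-qubit Clifford group acts on the Bloch sphere as the rotation group of the octahedron, which is irreducible on the space spanned by $X,Y,Z$. Taking the $n$-fold tensor product, the $\Ad$ representation of $\mc C_1^{\otimes n}$ is multiplicity-free, with irreducible components labelled by strings $w\in\{0,1\}^n$ and $\Pi_w$ the projector onto the span of Pauli strings carrying a non-identity factor exactly on the support of $w$, so that $\Tr[\Pi_w]=3^{\abs{w}}$. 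Applying the group-averaging identity of \cref{app:notation} (Schur's lemma) with $M=B\Lambda$ then gives $\natM=\sum_w c_w\Pi_w$ with $c_w=\Tr[\Pi_w B\Lambda]/\Tr[\Pi_w]=\Tr[\Pi_w B\Lambda]/3^{\abs{w}}$, which is the first asserted identity.

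It then remains to evaluate $\Tr[\Pi_w B\Lambda]$. Writing $B=\sum_b\kettbraa{b}{b}$ with $\kett{b}=\ket{b,b}$ and using cyclicity, I would rewrite $\Tr[\Pi_w B\Lambda]=\sum_b\braa{b}\Lambda\,\Pi_w\kett{b}$, so that the whole computation hinges on resolving $\Pi_w\kett{b}$. The key step is to expand the computational-basis projector over Pauli operators, $\kett{b}=\frac{1}{2^n}\sum_{z\in\{0,1\}^n}(-1)^{b\cdot z}\kett{Z^z}$ with $Z^z=\bigotimes_i Z_i^{z_i}$. Since $Z^z$ is a Pauli string whose non-identity support is exactly $\supp z$, the projector $\Pi_w$ annihilates $\kett{Z^z}$ unless $z=w$, in which case it fixes it; hence a single term survives and $\Pi_w\kett{b}=\frac{1}{2^n}(-1)^{b\cdot w}\kett{Z^w}$.

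Substituting this back and re-expanding $\kett{Z^w}=\sum_x(-1)^{w\cdot x}\kett{x}$ in the computational basis (with $\kett{x}=\ket{x,x}$), I obtain
\begin{equation}
\Tr[\Pi_w B\Lambda]=\frac{1}{2^n}\sum_{b,x}(-1)^{w\cdot(b\oplus x)}\bra{b,b}\Lambda\ket{x,x}=\frac{1}{2^n}\sum_{x,y}(-1)^{w\cdot(x\oplus y)}\bra{y,y}\Lambda\ket{x,x},
\end{equation}
and dividing by $3^{\abs{w}}$ reproduces exactly \cref{coeff_c_w}. The genuinely substantive step — and the main obstacle — is the second paragraph: recognising that $B$, once expanded over $Z$-type Paulis, is collapsed by $\Pi_w$ onto the unique string $Z^w$ of type $w$, which is what turns the double Walsh--Hadamard sum into the stated form; everything else is the multiplicity-free Schur decomposition (which relies on the octahedral irreducibility on each qubit) and bookkeeping of signs. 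As a consistency check I would inspect the $w=0$ block: trace preservation gives $\sum_{x,y}\bra{y,y}\Lambda\ket{x,x}=\Tr[\Lambda(\1)]=2^n$, so $c_0=1$, recovering the uncorrected trivial component of $\natM$.
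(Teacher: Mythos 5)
Your proof is correct, and its skeleton matches the paper's: both reduce the lemma, via the multiplicity-free Schur decomposition of the local-Clifford $\Ad$ representation, to computing $\Tr[\Pi_w B \Lambda]$, i.e.\ essentially to the identity $\Pi_w B = \tfrac{1}{2^n}\sum_{x,y}(-1)^{w\cdot(x\oplus y)}\kettbraa{x}{y}$. Where you differ is in how that identity is obtained. The paper works entirely in the vectorized computational basis: it writes $\Pi_w$ as a tensor product of the local projectors $P_\loc$ and $\1_4-P_\loc$ and multiplies it against $B=\sum_b\kettbraa{b}{b}$ by direct tensor algebra, tracking the signs by hand. You instead pass to the Pauli basis: expanding $\kett{b}=2^{-n}\sum_z(-1)^{b\cdot z}\kett{Z^z}$ and using that $\Pi_w$ is diagonal there, annihilating every $Z$-string except $Z^w$, you turn the computation into a short Walsh--Hadamard argument in which the sign pattern $(-1)^{w\cdot(x\oplus y)}$ appears as the Fourier kernel of the single surviving string; this is arguably the more transparent route. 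You also make explicit two points the paper leaves implicit: the justification of the first display --- irreducibility of the single-qubit adjoint representation (octahedral action on the span of $X,Y,Z$) and multiplicity-freeness of the $n$-fold tensor product, which is what licenses applying the Schur-averaging identity with $M=B\Lambda$, whereas the paper simply invokes the decomposition as already derived in Ref.~\cite{RobustShadows} --- and the $w=0$ consistency check $c_0=1$ from trace preservation. Both derivations land on the same coefficient formula, so the difference is one of presentation and self-containedness rather than substance.
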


\begin{proof}
To write $\Pi_w$ in a convenient form we do some tensor %yoga 
manipulation and transform 
\begin{equation}\label{eq:tensor_transformation}
	\kett{b} = \ket{b_1,b_2,\dots,b_n; b_1,b_2,\dots,b_n} \mapsto \ket{b_1,b_1,b_2,b_2,\dots,b_n,b_n}.
\end{equation}
Under this isomorphism, we can consider for each set of projections with the same Hamming weight $\abs{w}$ a representative element 
$\Pi_{\abs{w}}^\mathrm{rep} = P_\loc^{\otimes n - w} \otimes (\1_4 -P_\loc)^{\otimes w}$ with $P_\loc = \sum_{j,k \in \{0,1\}} \ketbra{j,j}{k,k}/2$. 
We now compute 
\begin{align}
	&\Pi_{\abs{w}}^\mathrm{rep} B 
	=
	P_\loc^{\otimes n - w} \otimes (\1_4 -P_\loc)^{\otimes w} \sum \ketbra{b_1,b_1,\dots,b_n,b_n}{b_1,b_1,\dots,b_n,b_n} \\
	\nonumber
	&=
	\sum_{j,k,b \in \{0,1\}} \Big( \frac 1 2 \kett{j} \braakett{k}{b}\braa{b}\Big)^{\otimes n - w}
	\otimes
	\Big( \sum_{b \in \{0,1\}} \kettbraa{b}{b} -\frac 1 2 \sum_{m,\ell,b} \ket{\ell,\ell}\braakett{m}{b}\braa{b} \Big)^{\otimes w} \\
	\nonumber
	&=
	\frac{1}{2^{n-w}} \Big( \sum_{j,b \in \{0,1\}} \kettbraa{j}{b}\Big)^{n-w}
	\otimes
	\Big( \sum_{b \in \{0,1\}} \kettbraa{b}{b} -\frac 1 2  \sum_{\ell,b \in \{0,1\}} \kettbraa{\ell}{b} \Big)^{\otimes w} \\
	\nonumber
	&=
	\frac{1}{2^{n-w}} \Big( \sum_{j,b \in \{0,1\}} \kettbraa{j}{b}\Big)^{n-w}
	\otimes
	\frac {1}{2^w} \Big( \sum_{b \in \{0,1\}} \kettbraa{b}{b} -\sum_{\ell,b \in \{0,1\}} \kettbraa{b\oplus 1}{b} \Big)^{\otimes w} \\
	\nonumber
	&=
	\frac{1}{2^n} \sum_{x,y \in \{0,1\}^n } \kettbraa{x}{y} (-1)^{(\{0\}^{\times n-w}\times \{1\}^{\times w} )\cdot (x \oplus y)},
	\nonumber
\end{align}
where $x \oplus y$ is a component-wise addition modulo~2.
From this calculation, we can then see that for a general projection $\Pi_w$ for some label $w$ we have 
\begin{equation}
	\Pi_w B = \frac{1}{2^n} \sum_{x,y \in \{0,1\}^n } \kettbraa{x}{y} (-1)^{w \cdot (x \oplus y)},
\end{equation}
and consequently (with $\Tr[\Pi_w]=3^{\abs{w}}$)
\begin{equation}
	c_w =\Tr[\Pi_w B \Lambda]/3^{\abs{w}} = \frac{1}{2^n3^{\abs{w}}} \sum_{x,y\in \{0,1\}^n } (-1)^{w \cdot (x \oplus y)} \braa{y}\Lambda \kett{x}.
\end{equation}
\end{proof}
%--------------------------------------------

We shall now consider the quantity that we can retrieve from the local Clifford gate-set shadow procedure. From Eq.~(15) of Ref.~\cite{RandomSequences}, this is,
\begin{equation}\label{eq:loc_fidelities}
	p_{w,B}(\Lambda) = \Tr[\Pi_w B \Pi_w \Lambda]/3^{\abs{w}}
    =
    \Tr[B_w \Lambda]/3^{\abs{w}},
\end{equation}
where $B_w = \Pi_w B \Pi_w$.
We note that, for $B_\loc \coloneqq \kettbraa{0}{0} + \ketbra{1,1}{1,1}$,
\begin{equation}\label{eq:loc_proj_B}
	B_\loc P_\loc = B_\loc P_\loc = P_\loc
	\quad
	\text{and}
	\quad
	(\1_4-P_\loc)B_\loc(\1_4-P_\loc) = (\1_4-P_\loc)B_\loc.
\end{equation}
In our basis representation \cref{eq:tensor_transformation} $B=B_\loc^{\otimes n}$ and thus from \cref{eq:loc_proj_B} follows that $\Pi_w B \Pi_w = \Pi_w B$ and consequently
\begin{equation}
    p_{w,B}(\Lambda) = c_w
\end{equation} 
as defined in \cref{coeff_c_w}. 

The sample complexity is controlled by Theorem~4, Eq.~(16) in Ref.~\cite{RandomSequences}. 
Since 
\begin{equation}
    \Tr[B_w B_w^\dagger] = \Tr[B_w]= 1,
\end{equation}
 we have $p_{w,B_w}(\Lambda) = \mc O (3^{-\abs{w}})$. In the regime $\abs{w} =\mc O (\log n)$, we will require an estimation of this quantity up to precision $1/\mathrm{poly}(n)$. According to the variance bounds in Eq.~(16) of Ref.~\cite{Shadows} we have scalable protocol in~$n$. Note that here we did not assume any local structure for $\Lambda$.

From the above considerations, it follows that again we recover precisely  (exponentially) many coefficients $c_w$ with $\abs{w}=\mc O (\log n)$  characterizing $\natM$.
If we want to learn a set of functions $\Tr[O_i \rho]$ with individual support that can be broken down into parts~$\mc O(\log n)$, it suffices to recover only those coefficients whose support is contained in it. 
Hence, we can efficiently build a calibrated frame operator 
\begin{equation}\label{eq:loc_frame_operator}
    \hatM_\loc (\{O_i\}_i) \coloneqq \sum_{w\  :\ \supp{w} \subseteq \ \bigcup\supp{O_i}} p_{w,B}(\Lambda) \, \Pi_w,
\end{equation}
 which yields a error-robust local Clifford shadow scheme (under the gate-independent noise assumption).

%---------------------------------------------------
\begin{protocol}{Local Clifford calibrated frame operator}

    \\[3pt]
    
    \textbf{Assumption:} Gate-independent noise channel $\Lambda$ for the local Clifford gate-set.
    
    \smallskip 

    \textbf{Setup:} Experimentally implemented local Clifford gates $\{\phi (h)\}=\{\Lambda\Ad(h)\}$.

    \smallskip 

    \textbf{Input:} Unknown quantum state $\rho$, set of observables $\{O_i\}$ with $\supp O_i = \mc O(\log n)$.

	\smallskip

	\textbf{Output:} Calibrated frame operator  $\hatM_\loc$.
	
	\smallskip
	
	\textbf{Procedure:}
	  \smallskip
	 
	\begin{enumerate}[label=(\roman*)]
        \item Fix a sequence length $m$.
        \item Choose Clifford elements $h_1,\dots,h_m$ at random.
		  \item Apply a random sequence of Clifford gates
		  $\phi(h_m) \dotsb \phi(h_1)$ on the target state $\rho$.
		  \item Measure in the computational basis, obtain outcome $b'$.
        \item For every $w$ such that $w \subseteq \supp O_i$ for some~$i$, classically compute 
        $\braa{b'}\adj(h_m) \prod_{j=1}^{m-1} B_w \adj(h_j)\kett{\vartheta}$, where $\adj$ is the irreducible adjoint representation of the $n$-copy 1-qubit Clifford group and $\vartheta$ is some probe state (e.g., an ideal preparation of $\rho$).
        \item Repeat the above procedure $N=\mc O(1)$ times to estimate the sample average~$\mc Q (m)$.
        \item Increase the value of $m$ and repeat the above.
		  \item Fit the sample averages for different $m$ to the exponential decay model from Ref.~\cite{RandomSequences} and extract the RB decay parameter $p_{w,B}=\Tr[B_w \Lambda]/3^{\abs{w}}$.	
        \item Compute the calibrated frame operator $\hatM_\loc (\{O_i\}_i)$ according to \cref{eq:loc_frame_operator}.
        %\item Produce the noise-mitigated classical shadow $\hat \rho_j = \hatM_\loc^{-1} \Ad^\dagger(c_\eend) \kett{b'}$ for $j=1,\dots,N$.
        \end{enumerate}
\end{protocol}
%=============================================
\section{Variance bounds}\label{app:variance}
%=============================================
The variance of the calibrated shadow estimation procedure is asymptotically independent from the system dimension for observables having local support. The calculations are carried out under the gate-independent noise assumption.
We recall that the variance is the determining figure for characterizing the sample complexity of both shadow estimation~\cite{Shadows} and RB~\cite{helsen2019multiqubit}.

%----------------------------------------------------------
\subsection{CNOT-dihedral shadow scheme}\label{app:variance_dihedral}
%----------------------------------------------------------
We have proposed a procedure to construct error-mitigated shadows through CNOT-dihedral benchmarking.
This has lead to the  average outcome 
\begin{equation}
\mc Q_\avg(m) = \sum_x \mathbb{E}_{C\in \mc C} \mathbb{E}_{k_1, \ldots k_m\in \KK}  \braa{O}\mc{M}^{-1}\Ad(g_\eend)^\dagger \kett{x}\braa{x} \Lambda \Ad(k_m) \cdots \cdots \Lambda \Ad(k_1) \Lambda \Ad(c)\kett{\rho}
\end{equation}
at sequence length $m$. As we have derived in \cref{shadow_dihedral_fitting_model}, this average is given by
\begin{equation}
\mc Q_\avg(m) = \braa{E} \Pi_{\adj} \kett{\rho} \, \lambda_Z(\Lambda)^{m+1} .
\end{equation}
Hence one can estimate $\braakett{O}{\rho}$ in a robust way by measuring $\mc Q_\avg(m)$ for a few points and calculating the intercept with zero. This can be done efficiently~\cite{harper2019statistical} provided the variance of estimating $p(m)$ is bounded for all $m$. We will see that this straightforwardly is the case. For convenience we assume that~$O$ is traceless.  
We then have the following theorem.

\begin{theorem} Let $O$ be traceless. Then the variance of the CNOT-dihedral shadow scheme (at sequence length $m$) is bounded by 
$\mathbb{V} \leq c \, \Tr(O^2)$
for some constant $c$ independent of~$n$ and~$m$.
\end{theorem}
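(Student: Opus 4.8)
The plan is to bound the second moment $\EE[\hat o^2]$ directly, since $\mathbb{V} = \EE[\hat o^2] - \braakett{O}{\rho}^2 \le \EE[\hat o^2]$. First I would use that $O$ is traceless, so $\braa{O}\Pi_\triv = 0$ and hence $\braa{O}\hatM_{m+1}^{-1} = \frac{d+1}{\lambda_Z(\Lambda)^{m+1}}\braa{O}$. The single-shot estimator then reads $\hat o(x,g_\eend) = W\,\braa{O}\Ad^\dagger(g_\eend)\kett{x}$ with $W = (d+1)/\lambda_Z(\Lambda)^{m+1}$ and $g_\eend = k_m\cdots k_1\cdot c$. Writing out the second moment, averaged over the outcome $x$ (weighted by the noisy Born probability $\braa{x}\Lambda\Ad(k_m)\cdots\Lambda\Ad(c)\kett{\rho}$) and over the gates $c,k_1,\dots,k_m$, I obtain a three-copy tensor expression: two copies of $\Ad^\dagger(g_\eend)$ from the squared estimator and one copy of the forward evolution from the Born rule, tied together through $\sum_x \kett{x}\braa{x}\otimes\kett{x}$.

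The key structural observation is that, with $c$ drawn uniformly from $\CC$ and $K \coloneqq k_m\cdots k_1 \in \KK \subseteq \CC$ held fixed, the composite $g_\eend = Kc$ is again uniformly distributed over the Clifford group. I would therefore perform the average over $c$ first. Because three copies of $c$ occur---two daggered (from the estimator) and one undaggered (adjacent to $\kett{\rho}$ in the forward evolution)---this is a genuine third-moment Clifford average, which by the $3$-design property of the Clifford group I evaluate via the corresponding Weingarten/permutation-operator expansion. This collapses the $c$-average onto a fixed sum of permutation operators acting on $O\otimes O$ and $\rho$, with the intervening noise channels $\Lambda$ and the $\KK$-layers sandwiched in between.

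It remains to carry out the averages over the remaining CNOT-dihedral layers $k_1,\dots,k_m$. Each layer, twirled in the relevant copy, contributes the subgroup-averaged channel $\EE_{k\in\KK}\Ad^\dagger(k)\Lambda\Ad(k) = \sum_{\alpha\in\{\triv,Z,\ort\}}\tfrac{\Tr[\Pi_\alpha\Lambda]}{\Tr[\Pi_\alpha]}\Pi_\alpha$ as in the proof of \cref{shadow_dihedral_fitting_model}, so that $m$ layers produce $m$-th powers of these decay factors. I would use $0 \le \lambda_\alpha(\Lambda)\le 1$ to bound each such factor and, crucially, verify that the dominant contribution carries exactly $\lambda_Z(\Lambda)^{2(m+1)}$, which cancels against the prefactor $W^2 = (d+1)^2/\lambda_Z(\Lambda)^{2(m+1)}$ and leaves an expression uniformly bounded in both $m$ and $n$. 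The leading term is then precisely the noiseless global-Clifford shadow second moment, controlled by the shadow-norm bound of Ref.~\cite{Shadows}, giving $\EE[\hat o^2]\le c\,\Tr(O^2)$ for traceless $O$.

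The main obstacle I anticipate is disentangling the three-copy average: the noise channel $\Lambda$, inserted between every layer, correlates the three tensor copies, so the Clifford third-moment average does not factor cleanly, and one must track how $\Lambda$ intertwines with the permutation operators produced by the Weingarten expansion. Establishing that the accumulated decay factors from the $\KK$-layers exactly compensate the $\lambda_Z(\Lambda)^{-2(m+1)}$ normalization---uniformly in $m$, so that no factor grows with the sequence length---is the delicate point; bounding the genuinely noise-dependent cross terms by constants depending only on a lower bound for $\lambda_Z(\Lambda)$ is the remaining technical work.
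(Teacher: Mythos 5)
Your setup reproduces much of the paper's proof skeleton: the three-copy tensor structure, the fact that $g_\eend = k_m\cdots k_1\cdot c$ is uniform over $\CC$, the evaluation of the Clifford average via the 3-design/Weingarten expansion, and the collapse of the $\KK$-layers to powers of $T_{\KK}(\Lambda) = \sum_{\alpha}\lambda_\alpha(\Lambda)\Pi_\alpha$. However, there is a genuine gap, and it is not merely technical: you bound the variance of the per-shot \emph{calibrated} estimator $\hat o = \frac{d+1}{\lambda_Z(\Lambda)^{m+1}}\braa{O}\Ad^\dagger(g_\eend)\kett{x}$, and your argument hinges on the claim that the three-copy second moment carries an overall factor $\lambda_Z(\Lambda)^{2(m+1)}$ cancelling $W^2$. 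That claim is false. The noise enters the second moment only \emph{once}, through the Born probability, so after the $\KK$-average it appears as $\Lambda T_{\KK}(\Lambda)^m$ acting on the third tensor factor alone, while the two estimator copies are noiseless. Trace preservation forces the trivial component of $T_{\KK}(\Lambda)^m = \Pi_\triv + \lambda_Z(\Lambda)^m\Pi_Z + \lambda_\ort(\Lambda)^m\Pi_\ort$ to equal $1$ for every $m$, and this component contributes
\begin{equation}
\sum_x \EE_{g\in\CC}\,\braa{O}\Ad^\dagger(g)\kett{x}^2\,\braa{x}\Lambda\Pi_\triv\kett{\rho}
= \frac{\Tr[O^2]}{d(d+1)}
\end{equation}
to the (unrescaled) second moment, with no $m$-dependence whatsoever: for large $m$ the pre-measurement state is fully scrambled, but each shot still produces fluctuations of typical size $\sqrt{\Tr[O^2]}/d$. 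Hence the unrescaled second moment is $\Theta(\Tr[O^2]/d^2)$ uniformly in $m$, and multiplying by $W^2 = (d+1)^2\lambda_Z(\Lambda)^{-2(m+1)}$ yields a variance of order $\Tr[O^2]\,\lambda_Z(\Lambda)^{-2(m+1)}$, which grows exponentially in the sequence length. The statement you are attempting to prove — a bound uniform in $m$ for the per-shot calibrated shadow — is simply not true.

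The paper's theorem concerns a different random variable: the raw filtered signal $(d+1)\braa{O}\Pi_\adj\Ad^\dagger(g_\eend)\kett{x}$, whose \emph{mean} decays as $\braa{O}\Pi_\adj\kett{\rho}\,\lambda_Z(\Lambda)^{m+1}$ (cf.\ \cref{shadow_dihedral_fitting_model}); no factor $\lambda_Z(\Lambda)^{-(m+1)}$ is applied shot by shot. Its second moment is exactly the object you wrote down \emph{without} the $W^2$ prefactor, and the paper bounds it by $c\,\Tr[O^2]$ using the 3-design expansion, H\"older's $1$--$\infty$ inequality, and the complete positivity and trace preservation of $\Lambda T_{\KK}(\Lambda)^m$ — this is precisely where the uniformity in $m$ comes from. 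The division by the decay is deferred to classical post-processing: one fits the exponentially decaying signal over several sequence lengths and extracts the intercept by linear regression, with final error $\mc O(\mathbb{V})$. To repair your proof, drop the $W$ prefactor from the per-shot quantity, establish the $\mc O(1)\,\Tr[O^2]$ bound for the filtered signal, and invoke the regression/extrapolation step for the final estimate of $\braakett{O}{\rho}$.
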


\begin{proof} 
We find
\begin{align}
\mathbb{V} &\leq (2^n+1)^2\sum_x \mathbb{E}_{c_\eend\in \mc C} \mathbb{E}_{k_1, \ldots k_m\in \KK} \braa{O}\Ad^\dagger(c_\eend) \kett{x}^2 \braa{x} \Lambda \Ad(k_m) \cdots \Lambda \Ad(k_1) \Lambda \Ad(c)\kett{\rho}\\
&= (2^n+1)^2\sum_x \sum_C \braa{x\tn{3}} (I\tn{2} \otimes \Lambda T_{\mc K}(\Lambda)^m ) C\tn{3} \kett{O\tn{2}\otimes \rho}\\
\nonumber
 &= (2^n+1)^2\sum_{\pi,\pi' \in S_3} W_{\pi, \pi'} \braa{x\tn{3}} (I\tn{2} \otimes \Lambda T_{\mc K}(\Lambda)^m )\kett{\pi} \braa{\pi'}\kett{O\tn{2}\otimes \rho}\\
 \nonumber
 &\leq (2^n+1)^2 \tr(O^2) \norm{W}_{1, \infty}\sum_x \big(3\bra{x}\!\Lambda T_{\mc K}(\Lambda)^m(\ket{x}\!\!\bra{x})\!\ket{x} + 3\bra{x}\!\Lambda T_{\mc K}(\Lambda)^m(I)\!\ket{x} \big) 
 \nonumber
 \end{align}
by H{\"o}lder's $1-\infty$ inequality and the fact that $\Lambda T_{\mc K}(\Lambda)^m$ is completely positive. Since $\norm{W}_{1, \infty} = \mc O(2^{-3n})$ and $\Lambda T_{\mc K}(\Lambda)^m$ is also trace preserving, it is straightforward to see that $\mathbb{V} = \mc O(1)$ in both~$n$ and~$m$.
By standard linear regression we can learn the intercept in with variance bounded by $\mc O(\mathbb{V}) = \mc O(1)$ in~$n$. 
\end{proof}

%--------------------------------------------
\subsection{Global Clifford shadow scheme}
%--------------------------------------------

For completeness, we write down the variance for the noise-mitigated shadow scheme assisted by conventional global Clifford RB. Note that this is a straightforward adaptation of the results of Refs.~\cite{Shadows,RobustShadows}. 

\begin{theorem}[Variance bound]
	The variance of a single-round estimation $\hat \xi \equiv \hat \xi(O,\rho,g,b)= \braakett{O}{\hat \rho}$ is upper bounded by
	\begin{align}
		\mathrm{Var[\hat \xi]} 
		&\leq 
		\frac{(2^n+1)2^{2n-1}}{\lambda_\RB^2(\adj) (2^n-1)(2^{2n}-4)} \Big(\Tr^2[O]+\Tr[O^2]\Big) \\
		\nonumber
		&=
		\mc O (1) \Big(\Tr^2[O]+\Tr[O^2]\Big),  
	\end{align}
	implying that for any local observable $O$ the variance is dimensionless.
\end{theorem}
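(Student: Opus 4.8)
The plan is to follow the variance analysis of Refs.~\cite{Shadows,RobustShadows}, inserting the two calibration factors carried by $\hatM^{-1}$ and controlling the noise channel $\Lambda$ through its complete positivity and trace preservation. First I would discard the non-negative squared mean and bound $\mathrm{Var}[\hat\xi]\le\EE[\hat\xi^2]$. Writing the single-shot estimator as $\hat\xi=\braa{O}\hatM^{-1}\Ad^\dagger(g)\kett{b'}$ with $b'$ Born-distributed according to the noisy channel $\phi(g)=\Lambda\Ad(g)$, the second moment takes the three-copy form
\begin{equation}
  \EE[\hat\xi^2]
  =
  \sum_b
  \big(\braa{O}\hatM^{-1}\big)^{\otimes2}\otimes\big(\braa{b}\Lambda\big)\;
  \mc T\;
  \kett{b}^{\otimes2}\otimes\kett{\rho},
\end{equation}
where $\mc T\coloneqq\EE_{g\in\CC}\big[\Ad^\dagger(g)\otimes\Ad^\dagger(g)\otimes\Ad(g)\big]$ is the third-moment operator of the Clifford group.

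The key step is that, since $\Ad(g)=U(g)\otimes U(g)^\ast$, the operator $\mc T$ involves exactly three copies of $U(g)$ and three of $U(g)^\ast$, so by the $3$-design property of the multi-qubit Clifford group it coincides with the Haar average. The latter is supported on the commutant of the diagonal action, i.e.\ on the span of the $3!=6$ permutation operators $R_\pi$, $\pi\in S_3$, of the three tensor copies of $\mathbb C^d$, and Weingarten calculus~\cite{collins2022weingarten} gives $\mc T=\sum_{\pi,\sigma\in S_3}W_{\pi,\sigma}\,\kett{R_\pi}\!\braa{R_\sigma}$ with Weingarten matrix $W$. Substituting this decomposition factorises $\EE[\hat\xi^2]$ into a finite sum over $(\pi,\sigma)$ of $W_{\pi,\sigma}$ times a ``bra overlap'' of $R_\pi$ with $\braa{O}\hatM^{-1}$ (two copies) and $\braa{b}\Lambda$, and a ``ket overlap'' of $R_\sigma$ with $\kett{b}$ (two copies) and $\kett{\rho}$, all summed over $b$. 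These overlaps evaluate to traces such as $\Tr[O]$, $\Tr[O^2]$, $\Tr[\Pi_\adj B\Lambda]$ and $\Tr[\rho]=1$.

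At this point I would exploit $\hatM^{-1}=\Pi_\triv+\tfrac{d+1}{\lambda_\adj}\Pi_\adj$: decomposing $O=\tfrac{\Tr[O]}{d}\1+O_0$ into trivial and adjoint parts, each of the two factors of $\hatM^{-1}$ contributes a $\tfrac{1}{\lambda_\adj}$ on the adjoint sector, which is exactly what produces the overall $\lambda_\adj^{-2}$ in the claimed bound. The noise channel $\Lambda$ enters only through the $b$-summed contractions; there I would use trace preservation ($\Tr[\Pi_\triv\Lambda]=1$, i.e.\ $\braa{\1}\Lambda=\braa{\1}$) together with complete positivity to collapse every $\Lambda$-dependent factor into a constant, so that the final bound is independent of $\Lambda$ beyond the explicit $\lambda_\adj$. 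Collecting the six terms and bounding the subleading ones via H\"older's $1$–$\infty$ inequality together with the scaling $\norm{W}_{1,\infty}=\mc O(2^{-3n})$ of the Weingarten matrix yields the stated prefactor $\tfrac{(2^n+1)2^{2n-1}}{\lambda_\adj^2(2^n-1)(2^{2n}-4)}$ multiplying $\Tr^2[O]+\Tr[O^2]$; its large-$d$ limit is the $\mc O(1)$ constant, giving dimension independence for local $O$.

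The main obstacle I anticipate is the bookkeeping of the six permutation contractions in the simultaneous presence of the computational-basis dephasing $\sum_b\kett{b}^{\otimes2}\braa{b}$, which breaks full unitary invariance and must be evaluated from its explicit form rather than absorbed into the twirl, and of the inserted channel $\Lambda$. Obtaining the exact constant (as opposed to the qualitative $\mc O(1)$ statement) requires the precise $S_3$ Weingarten entries and the dimension counting of permutation overlaps on the $d^2$-dimensional vectorised space, and verifying that all $\Lambda$-dependence indeed reduces to constants is the delicate part; the $\mc O(1)$ scaling itself, however, follows directly once $\norm{W}_{1,\infty}=\mc O(2^{-3n})$ and trace preservation are in hand.
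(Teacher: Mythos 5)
Your proposal is correct and follows essentially the same route as the paper's proof: bounding the variance by the second moment, recognizing the Clifford average as a unitary 3-design so that the three-copy twirl reduces to an $S_3$ Weingarten sum, extracting the $\lambda_\adj^{-2}$ factor from $\hatM^{-1}$, and collapsing all $\Lambda$-dependence via trace preservation and positivity to reach the $\bigl(\Tr^2[O]+\Tr[O^2]\bigr)$ bound. The only cosmetic differences are that the paper pulls out $\hatM^{-1}$ as an operator-norm factor $\bigl((2^n+1)/\lambda_\adj\bigr)^2$ rather than tracking its two spectral sectors exactly, and controls the Weingarten sum by keeping the dominant diagonal entries with explicit coefficients instead of your H\"older-type bound on the off-diagonal part.
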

\begin{proof}
We find
	\begin{align}
		\mathrm{Var[\hat \xi]}
		&\leq 
		\EE_\GG \sum_b \braa{O}^{\otimes 2} (\hatM^{-1})^{\otimes 2} (\Ad^\dagger(g))^{\otimes 2} \kett{b}^{\otimes 2} \braa{b}\phi(g)\kett{\rho} \nonumber\\
		&\leq
		\norm{\hatM^{-1} }_2^2\EE_\GG \sum_b \braa{O}^{\otimes 2} (\Ad^\dagger(g))^{\otimes 2} \kett{b}^{\otimes 2} \braa{b}\Lambda \Ad(g)
		\kett{\rho}.\label{eq:variance_1}
	\end{align}
Now we consider $\Ad^\dagger(g))^{\otimes 2} \kett{b}^{\otimes 2} \braa{b}\Lambda\Ad(g)$ in the  (column-stacking) vectorized form and turn to Weingarten calculus to compute the unitary 3-design,
	\begin{align}
		\EE_\GG \Ad(g)^{\otimes 3} \big(\kett{b}^{\otimes 2} \otimes \Lambda\kett{b}\big) 
		&= 
		\sum_{\pi,\pi'\in \mc S_3} w_{\pi,\pi'} \kett{\pi}\!\braa{\pi'} (\kett{b}^{\otimes 2} \otimes \Lambda\kett{b}) 
		 \label{eq:3-design_WG}
	\end{align}
The coefficients $w_{\pi,\pi'}$ are well characterized (see e.g.~\cite{collins2022weingarten}). Notably, the diagonal elements of $w$ dominate the off diagonal elements, so in the high $n$ limit it suffices to only consider these.  Thus,
	\begin{equation}
		\cref{eq:3-design_WG} = \frac{2^{2n}-2}{2^n(2^{2n}-1)(2^{2n}-4)}\sum_{\pi\in \mc S_3} \Tr[\pi (\kett{b}^{\otimes 2} \otimes \Lambda\kett{b})] \kett{\pi} + O(2^{-n})
	.\end{equation}
Now we can go back to \cref{eq:variance_1} and recall that 
	\begin{align}
		\hatM^{-1} = \frac{2^n+1}{\lambda_\adj} 
	\end{align}
and building on Lemma 7 of Ref.~\cite{PartialDerandomization}, we write
	\begin{align}
		\mathrm{Var[\hat \xi]}
		&\leq 
		\frac{(2^n+1)^2}{\lambda_\RB(\adj)^2} \frac{2^{2n}-2}{2^n(2^{2n}-1)(2^{2n}-4)} 
		\sum_b\sum_{\pi \in \mc S_3} \tr[\pi(O^{\otimes 2}\otimes \rho)] \Tr[\pi (\kett{b}^{\otimes 2} \otimes \Lambda\kett{b})] \\
		\nonumber
		&\leq
		\frac{(2^{n}+1)2^{2n}}{6 \lambda_\RB(\adj)^2 2^n(2^{n}-1)(2^{2n}-4)} \sum_b \Big(
		\Tr[O]^2\Tr[\rho]\Tr[\ketbra{b}{b}]^2\Tr[\Lambda(\ketbra{b}{b})]\\
		\nonumber
		&+
		\Tr[O^2]\Tr[\rho]\Tr[\ket{b}\!\!\braket{b}{b}\!\!\bra{b}]\Tr[\Lambda(\ketbra{b}{b})]
		+
		2 \Tr[O]\Tr[O \rho]\Tr[\ketbra{b}{b}]\Tr[\ketbra{b}{b} \Lambda(\ketbra{b}{b})] \\
		\nonumber
		&+
		2 \Tr[O^2\rho] \Tr[\ket{b}\!\!\braket{b}{b}\!\!\bra{b}\Lambda(\ketbra{b})]
		\Big) \\
		\nonumber
		&\leq
		\frac{(2^{n}+1)2^{2n}}{6\lambda_1^2(\adj)2^n(2^{n}-1)(2^{2n}-4)} \sum_b \Big(
		\Tr[O]^2\Tr[\Lambda(\ketbra{b}{b})]
		+
		\Tr[O^2]\Tr[\Lambda(\ketbra{b}{b})] \\
		\nonumber
		&+
		2 \Tr[O]\Tr[O\rho] \bra{b}\Lambda(\ketbra{b}{b})\ket{b} 
		+
		2 \Tr[O^2\rho] \bra{b}\Lambda(\ketbra{b}{b})\ket{b}
		\Big) \\
		\nonumber
		&\leq
		\frac{(2^{n}+1)2^{2n}}{6\lambda_1^2(\adj)2^n(2^{n}-1)(2^{2n}-4)}   \Big(\Tr[O^2]+\Tr[O]^2 \Big)\Big(\Tr[\Lambda(\1)] + 2 \sum_b \bra{b}\Lambda(\ketbra{b}{b}{b})\ket{b} \Big),
		\nonumber
	\end{align}
where in the last equality we have used $\Tr[O\rho]\leq \Tr[O]$ and  $\Tr[O^2\rho]\leq \Tr[O^2]$. 
Now, by the fact that 
$\sum_b \bra{b}\mc C(\ketbra{b}{b})\ket{b} \leq 2^n$,
we conclude that
\begin{align}
	\mathrm{Var[\hat \xi]}
	&\leq
	\frac{(2^{n}+1)2^{2n}}{6\lambda_1^2(\adj)2^n(2^{n}-1)(2^{2n}-4)}   (2^n+2^{n+1})\Big(\Tr[O^2]+\Tr[O]^2 \Big) \\
	&=
	\frac{(2^{n}+1)2^{2n-1}}{\lambda_1^2(\adj)(2^{n}-1)(2^{2n}-4)} \Big(\Tr[O^2]+\Tr[O]^2 \Big).\nonumber
\end{align}
\end{proof}

%==================================================
\section{Gate-dependent noise}
%==================================================
\label{app:gate_dependence}

In \cref{app:frame} we have observed that under the gate-independent noise assumption the noise is completely confined in the dominant subspaces, hence allowing us to fully determine $\natM$~as per \cref{eq_app:natM} using the schemes presented in \cref{sec:CNOT-dihedral}.
In the more general case of gate-dependent noise however, our estimator $\hatM$ as per \cref{eq:RB_calibrated_frame} differs from the actual physical frame operator~$\natM$ also regarding the invariant subspaces  (which are also perturbed by noise) in the Fourier transform picture presented in the main letter. 
Unfortunately, RB is not able to capture information structure of the invariant eigenspace themselves. 
We know, however, that while basis vectors of a degenerate subspace are ill-conditioned against perturbations (in the sense that the eigenvectors produced by a perturbation may completely differ from the chosen basis), the invariant  subspaces themselves are, however, robust. A perturbed subspace can be obtained from a transformation of the original one according to some bounded operator~\cite{StewartSun,KatoPTLO,frameworkRB}. 
Secondly, we have not taken into consideration the subspace of $\phi_\FT (\alpha)$ corresponding to the perturbation of the kernel of $\Ad_{\FT}[\alpha]$. We simply set all of its eigenvalues to~0 in our estimator. 
Again, we know that each eigenvalue will be within the ball of the perturbation, but this bound alone is in general not sufficient to control the error in estimating $\natM$ for general noise. 

Recent work~\cite{brieger2023StabilityClassicalShadows} has shown that, in some settings, 
shadow estimation might be unstable in the presence of gate-dependent noise and 
robust shadow estimation procedures can even increase a noise-induced bias in the estimation compared to uncalibrated shadows estimation, rather then eliminating it. 
For large classes of observables already the uncalibrated shadow estimation protocol is shown to be (perhaps surprisingly) robust in the presence of gate-dependent noise, in the sense that the bias in the estimation does not scale with the system size.
To investigate how our proposed protocol performs with gate-dependent noise against the uncalibrated scheme, we numerically simulated several cases of simple gate-dependent noise models. 
In all simulations, we observed that our calibration reduces the noise-induced bias.

\medskip 

In \cref{fig:gate_dependence}, numerics for two further gate-dependent noise models are shown, in addition to panel~(b) of Fig. 2 in the main letter. Panel~(a) again shows the performance under gate-dependent depolarizing noise: In addition to 2-qubit depolarizing noise following CNOT gates, each single-qubit gate experiences local depolarizing noise with a parameter that is $10\%$ of the 2-qubit noise. Although the gate-dependence diminishes the overall performance, the CNOT-dihedral calibration still significantly reduces the noise-induced bias. As expected from the discussion on local depolarizing noise in \cref{sec:global_clifford}, the Clifford calibration overcompensates the noise, increasingly so for higher levels of noise.

Panel (b) assumes that the shadow estimation unitaries are compiled with a gate set consisting of CNOT and Pauli-rotation gates $\{R_X,R_Y,R_Z \}$. Each single-qubit rotation comes with a coherent over-rotation of a constant angle $\theta$. The resulting decay of the GHZ estimation fidelity, as well as the calibrated estimates from Clifford and CNOT-dihedral RB, are plotted as functions of the over-rotation angle $\theta$ in radians. In this case, the Clifford calibration performs visibly better than its CNOT-dihedral counterpart. 

These examples illustrate that there is a trade-off involved in the choice of calibration group: While the CNOT-dihedral calibration is always preferable for gate-independent noise, in more realistic scenarios involving gate-dependence, a unitary 3-design like the Clifford group may provide better performance. 

\begin{figure}[ht]
	\centering
	\includegraphics[width=0.9\textwidth]{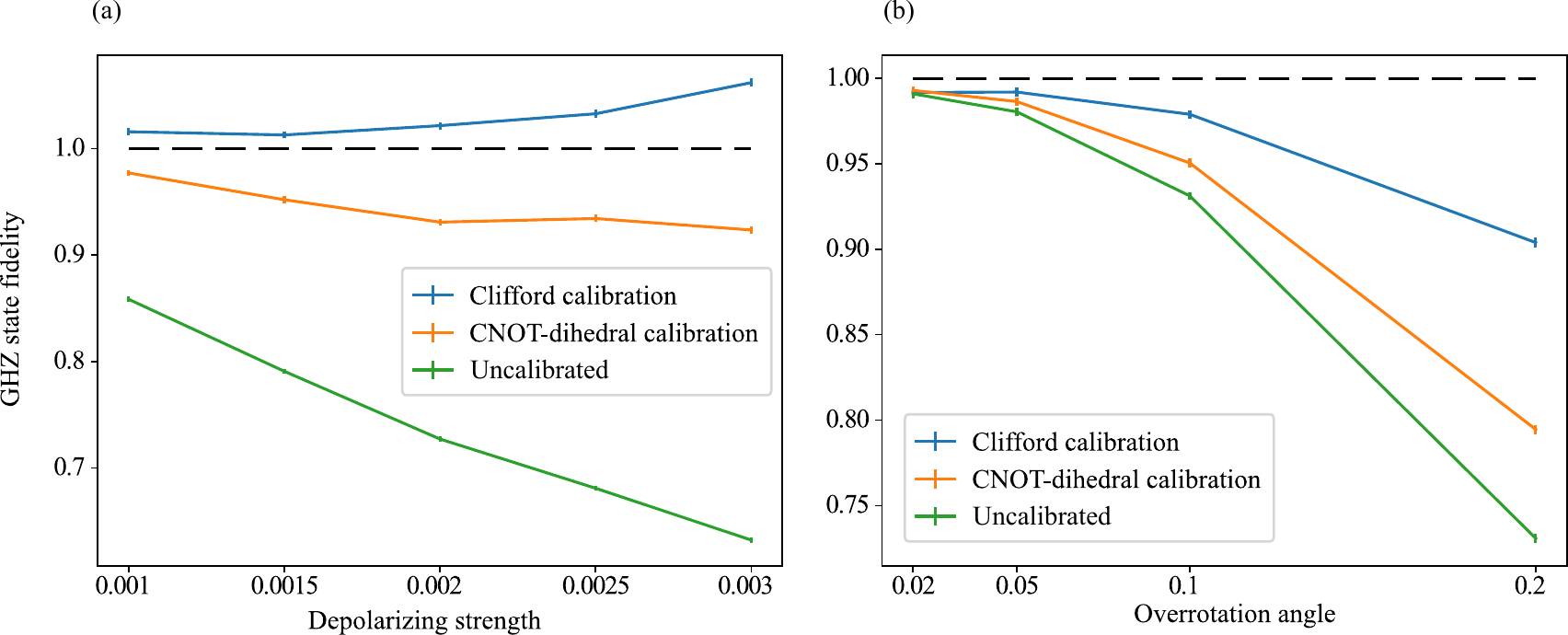}
	\caption{Performance of the calibration under gate-dependent noise for a 5-qubit system. Panel (a) uses a noise model consisting of 2-qubit depolarizing noise on CNOTs and local depolarizing noise on single-qubit gates. Panel (b) shows the performance under coherent single-qubit over-rotations when unitaries are compiled into CNOTs and single-qubit Pauli rotations.}
	\label{fig:gate_dependence}
\end{figure}

%==================================================
\section{Details on numerical simulations}
%==================================================
\label{sec:numerics_details}

The protocol has numerically been simulated using Qiskit \cite{Qiskit} and cirq \cite{cirq} for the noisy simulation of the experimental part. The post-processing has been performed with custom Python code and Qiskit's Quantum Information module. 
The task considered in Figs.\  \ref{fig:local_depol}– \ref{fig:gate_dependence} and the main text is the estimation of the state fidelity with an $n$-qubit GHZ state. The ideal GHZ state is prepared without errors, followed by noisy gates implementing the shadow estimation protocol. We then performed separate randomized benchmarking simulations with the same noise model as in the shadow estimation, and used the estimated decay parameter in the calibrated classical post-processing. If not otherwise stated, we used a total of 100000 samples for each data point shown in the figures, and chose $N=10^4$, $K=10$ for the median-of-means estimation. To estimate the statistical uncertainty, we use the bootstrapped standard deviation calculated from 200 resamplings with replacement. This standard deviation is displayed as error bars in all figures. 

Although we show that the calibration using CNOT-dihedral RB parameters is exact for gate-independent noise in the infinite-sample limit, we still observe statistical fluctuations and what appears to be systematic deviations from the ideal value, for example, in \cref{bitflip_plot}. Fluctuations around the ideal value beyond the shown error bars are explained by the statistical uncertainty in the estimation of the RB decay parameter, which we do not include in the figures. As previously noted in Ref.\ \cite{RobustShadows}, calibrated estimates exceeding the ideal value might result from the use of a ratio estimator, which introduces a bias for finite sample sizes.

\FloatBarrier

\end{document}